\newtheorem{theorem}{Theorem}
\newtheorem{lemma}{Lemma}
\DeclareMathOperator{\E}{\mathbbmss{E}}
\begin{document}
\bstctlcite{IEEEexample:BSTcontrol}
\setlength{\topmargin}{0in}
\title{Centralized and Decentralized Cache-Aided Interference Management in Heterogeneous Parallel Channels}
\author{
Enrico~Piovano, Hamdi~Joudeh and Bruno~Clerckx
\thanks{The authors are with the Communications and Signal Processing group, Department of Electrical and Electronic Engineering, Imperial College London, London SW7 2AZ, U.K. (email: \{e.piovano15; hamdi.joudeh10; b.clerckx\}@imperial.ac.uk). This work is partially supported by the U.K. Engineering and Physical Sciences Research Council (EPSRC) under grant EP/N015312/1.}}
\maketitle
\begin{abstract}
We consider the problem of cache-aided interference management in a network consisting of $K_{\mathrm{T}}$ single-antenna transmitters and $K_{\mathrm{R}}$ single-antenna receivers, where each node is equipped with a cache memory.
Transmitters communicate with receivers over two heterogenous parallel subchannels:
the P-subchannel for which transmitters have perfect instantaneous knowledge of the  channel state, and the
N-subchannel for which the transmitters have no knowledge of the instantaneous channel state.
Under the assumptions of uncoded placement and separable one-shot linear delivery over the two subchannels,
we characterize the optimal degrees-of-freedom (DoF) to within a constant multiplicative factor of $2$.
We extend the result to a decentralized setting in which no coordination is required for content placement at the receivers.
In this case, we characterize the optimal one-shot linear DoF to within a factor of $3$.
\end{abstract}
\section{Introduction}
Caching of popular content has emerged as one of the most promising strategies to cope with the
unprecedented increase in traffic over wireless and wired communication networks \cite{Golrezaei2013,Bastug2014,Liu2016,Maddah-Ali2016,Paschos2018}.
While the concept of caching is not new, its recent emergence (or re-emergence)
to the surface has been driven by a number of factors, amongst which are:
1) the nature of data network traffic which is becoming largely content-oriented due to the popularity of video-on-demand applications,
and 2) the ubiquity of memories and data storage devices.
These factors, alongside the temporal variability of data network traffic, enable nodes across the network to \emph{cache} popular
content in their cache memories during off-peak times, in which network resources are under-utilized,
and then use this cached content (sometimes in surprisingly novel ways) to alleviate the traffic load of
the network during congested peak times.

While caching has been studied within various settings and frameworks by different research communities over the past few decades \cite{Paschos2018}, recent years saw the emergence of information-theoretic studies that aim to establish the fundamental limits of cache-aided networks.
This line of research was pioneered by Maddah-Ali and Niesen in \cite{Maddah-Ali2014}, where it was shown in the context of a noiseless broadcast network that cleverly designed caching and delivery schemes yield coded-multicasting opportunities
which significantly reduce the number of required transmissions compared to conventional schemes.
This strategy, which came to be known as coded-caching, was also shown to be \emph{order-optimal} in the information-theoretic sense.
In \cite{Maddah-Ali2015a}, Maddah-Ali and Niesen further strengthened their original result by showing that the order-optimal performance of coded-caching
is attained in a decentralized alteration of the settings in \cite{Maddah-Ali2014}, where randomized content placement, requiring no central coordination amongst nodes, is employed.

This fundamental approach to caching was extended in a number of directions, including:
multi-server wired (noiseless) networks \cite{Shariatpanahi2016},
erasure and noisy broadcast networks \cite{Ghorbel2016,Amiri2018a,Bidokhti2017,Bidokhti2018,Amiri2018},
wireless device-to-device (D2D) networks \cite{Ji2016}, wireless interference networks with caches at the transmitters only or at both ends \cite{Maddah-Ali2015,Naderializadeh2017,Xu2017,Hachem2018},
multi-antenna wireless networks under a variety of assumptions regarding the availability of transmitter channel state information (CSIT) \cite{Zhang2015,Zhang2017,Lampiris2017,Piovano2017,Cao2017,Ngo2018,Shariatpanahi2017,Piovano2019,Cao2018}, and fog radio access networks (F-RANs), in which a cloud processor connects to edge nodes through front-haul links, under different assumptions and settings \cite{Sengupta2017,Kakar2017,Zhang2017a,Zhang2018,Girgis2017,Xu2018,Roig2018}.
All such works adopt information-theoretic performance
measures, i.e. capacity and its reciprocal (the latter is related to the number of transmission, or delivery time),
or their asymptotic approximations, i.e. degrees-of-freedom (DoF) and normalized delivery time (NDT).
A general observation that can be derived from these works is that caches at the transmitters enable cooperation, which is exploited through zero-forcing and interference alignment, while the redundancy arising from caches at the receivers creates coordination opportunities, exploited through coded-multicasting.

Most of the aforementioned works consider centralized settings, in which coordination between different nodes is
allowed during the content placement phase.
As pointed out in \cite{Maddah-Ali2015a}, while such assumption is helpful in establishing new results,
it limits their applicability as coordination may be impossible in practice, e.g. in wireless
networks where the identity and number of users is unknown beforehand.
Consequently, a number of recent works have extended the above results to decentralized scenarios including \cite{Piovano2019,Girgis2017,Xu2018,Roig2018}, which are treated in more detail after presenting this paper's setup.
\subsection{The Considered Cache-Aided Wireless Network}
We consider a setup comprising a content library of $N$ files and a cache-aided wireless network consisting of $K_{\mathrm{T}}$ transmitters and
$K_{\mathrm{R}}$ receivers, each equipped with a single antenna
and a cache memory. The normalized sizes of transmitter and receiver cache memories are given by $\mu_{\mathrm{T}} \in [0,1]$ and $\mu_{\mathrm{R}}\in [0,1]$, respectively.
As commonly assumed in cache-aided systems, the network operates
in two phases: 1) a \emph{placement phase} which takes place before user demands are revealed
and in which nodes store arbitrary parts of the library according to a certain caching strategy,
and 2) a \emph{delivery phase} in which users are actively making demands for different files of
the library and in which demands are satisfied through a combination of transmissions
and the locally stored content from the placement phase.

In the considered setup, communication during the delivery phase takes place over two heterogeneous parallel subchannels: one for which
transmitters have access to the instantaneous channel coefficients (i.e. perfect CSIT),
and another for which the transmitters have no knowledge of the instantaneous channel coefficients (i.e. no CSIT).
The two subchannels are referred to as the P-subchannel and the N-subchannel, respectively.
For the sake of generality, we assume that the two subchannels occupy arbitrary fractions of the bandwidth given by $\alpha \in [0,1]$ and
$\bar{\alpha} = 1-\alpha$, respectively.
Different variants of this hybrid PN-parallel channel model have been widely adopted in information-theoretic studies focusing on capacity and DoF limits of wireless networks under CSIT imperfections (see e.g.
\cite{Tandon2013,Rassouli2016,Lashgari2016,Joudeh2019} and references therein).
This wide adoption may be attributed to the fact that the PN-parallel channel model abstracts practically relevant scenarios in which channel state feedback is available only for a fraction of signalling dimensions, e.g. sub-carriers in OFDMA systems, due to limited feedback capabilities.
Moreover, this setup and the results we obtained may also be linked to other related wireless and wired
scenarios with mixed multicast and unicast capabilities as explained further on in Section \ref{subsec:relater_setups}, 
making it all the more relevant.

In the same spirit of \cite{Naderializadeh2017}, we focus on separable one-shot linear delivery schemes
where the spreading of channel symbols over time or frequency is not allowed. This is
also known as linear precoding with no symbol extension \cite{Razaviyayn2012}.
Such linear schemes are appealing due to their practicality and their suitability for making theoretical progress on otherwise difficult or intractable information-theoretic problems.
\subsection{Main Results and Contributions}
\subsubsection{Centralized Setting}
For the above described setup, we first characterize an achievable one-shot linear DoF under centralized placement
and show that it is within a factor 2 from the optimum one-shot linear DoF for all system parameters.
This achievable one-shot linear DoF is given by
\begin{equation}
\nonumber
\mathsf{DoF}_{\mathrm{L}, \mathrm{C}}(\mu_{\mathrm{T}},\mu_{\mathrm{R}},\alpha)= \alpha \cdot \min \{K_{\mathrm{T}}\mu_{\mathrm{T}}+K_{\mathrm{R}}\mu_{\mathrm{R}},K_{\mathrm{R} } \} + \bar{\alpha} \cdot  \min \{1+K_{\mathrm{R}}\mu_{\mathrm{R}},K_{\mathrm{R}}\}.
\end{equation}
From the separable nature of the proposed scheme, $\mathsf{DoF}_{\mathrm{L}, \mathrm{C}}(\mu_{\mathrm{T}},\mu_{\mathrm{R}},\alpha)$ takes a
weighted-sum form of
$\alpha \cdot \mathsf{DoF}_{\mathrm{L}, \mathrm{C}}(\mu_{\mathrm{T}},\mu_{\mathrm{R}},1) + \bar{\alpha} \cdot \mathsf{DoF}_{\mathrm{L}, \mathrm{C}}(\mu_{\mathrm{T}},\mu_{\mathrm{R}},0)$, and is hence achieved by
employing the scheme in \cite{Naderializadeh2017} over the P-subchannel and the scheme in
\cite{Maddah-Ali2014}, with a slight modification, over the N-subchannel.

To prove the order-optimality, we derive an upper bound for the one-shot linear DoF by building upon the converse proof in \cite{Naderializadeh2017},
where an integer optimization problem is formulated and then a worst-case to average demands relaxation is employed.
Further to the proof in \cite{Naderializadeh2017} however, obtaining the upper bound for the considered setup requires two more judicious steps, namely: a decoupling of the two subchannels and then a careful optimization over a delivery rate splitting ratio.
This yields an upper bound, denoted by $\mathsf{DoF}_{\mathrm{L}, \mathrm{ub}} (\mu_{\mathrm{T}},\mu_{\mathrm{R}},\alpha)$,
which also takes a weighted-sum form of
$\alpha \cdot \mathsf{DoF}_{\mathrm{L}, \mathrm{ub}}(\mu_{\mathrm{T}},\mu_{\mathrm{R}},1) + \bar{\alpha} \cdot \mathsf{DoF}_{\mathrm{L}, \mathrm{ub}}(\mu_{\mathrm{T}},\mu_{\mathrm{R}},0)$, hence reducing the task of proving order optimality to comparing
 $\mathsf{DoF}_{\mathrm{L}, \mathrm{C}}(\mu_{\mathrm{T}},\mu_{\mathrm{R}},\alpha)$ and
$\mathsf{DoF}_{\mathrm{L}, \mathrm{ub}}(\mu_{\mathrm{T}},\mu_{\mathrm{R}},\alpha)$ at the two extreme points of
$\alpha = 0$  and $\alpha = 1$ (see Sections \ref{sec:outerbound} and \ref{sec:centralized_setting}).
\subsubsection{Decentralized Setting}
The insights gained from addressing the centralized setting are then employed to tackle
a decentralized variant of the considered setup, which proves to be very technically challenging.
In the considered decentralized setting, placement at the receivers is randomized and requires no
central coordination.
On the other hand, centralized placement at the transmitters is still allowed,
as transmitters are assumed to be fixed nodes in the network, e.g.
base stations, access points or servers.
For this decentralized setting,  we show that an achievable one-shot linear DoF, which is within a
factor of 3 from the optimum one-shot linear DoF for all system parameters, is characterized by
\begin{equation}
\nonumber
\mathsf{DoF}_{\mathrm{L}, \mathrm{D}}(\mu_{\mathrm{T}},\mu_{\mathrm{R}},\alpha) =
	\alpha  \cdot \frac{1}{\sum_{l=0}^{K_{\mathrm{R}}-1} \frac{\binom{K_{\mathrm{R}}-1}{l} \mu_{\mathrm{R}}^l (1-\mu_{\mathrm{R}})^{K_{\mathrm{R}}-l-1}}{\min \{K_{\mathrm{T}}\mu_{\mathrm{T}}+l,K_{\mathrm{R}}\}}}
	  + \bar{\alpha} \cdot \frac{K_{\mathrm{R}}\mu_{\mathrm{R}}}{1-(1-\mu_{\mathrm{R}})^{K_{\mathrm{R}}}}
\end{equation}
which evidently takes the weighted-sum form of $\alpha \cdot \mathsf{DoF}_{\mathrm{L}, \mathrm{D}}(\mu_{\mathrm{T}},\mu_{\mathrm{R}},1) +
\bar{\alpha} \cdot \mathsf{DoF}_{\mathrm{L}, \mathrm{D}}(\mu_{\mathrm{T}},\mu_{\mathrm{R}},0)$.

Once again, order-optimality is shown by  comparing $\mathsf{DoF}_{\mathrm{L}, \mathrm{D}}(\mu_{\mathrm{T}},\mu_{\mathrm{R}},\alpha)$  and $\mathsf{DoF}_{\mathrm{L}, \mathrm{ub}}(\mu_{\mathrm{T}},\mu_{\mathrm{R}},\alpha)$
at the two extreme points $\alpha = 0$  and $\alpha = 1$.
While the case $\alpha = 0$ follows by a direct comparison of
$\mathsf{DoF}_{\mathrm{L}, \mathrm{D}}(\mu_{\mathrm{T}},\mu_{\mathrm{R}},0)$
and $\mathsf{DoF}_{\mathrm{L}, \mathrm{ub}}(\mu_{\mathrm{T}},\mu_{\mathrm{R}},0)$,
the intricate form of $\mathsf{DoF}_{\mathrm{L}, \mathrm{D}}(\mu_{\mathrm{T}},\mu_{\mathrm{R}},1)$
does not easily lend itself to such direct approach.
Alternatively, we prove that $\frac{\mathsf{DoF}_{\mathrm{L}, \mathrm{ub}}(\mu_{\mathrm{T}},\mu_{\mathrm{R}},1)} {\mathsf{DoF}_{\mathrm{L}, \mathrm{D}}(\mu_{\mathrm{T}},\mu_{\mathrm{R}},1)} \leq
\frac{\mathsf{DoF}_{\mathrm{L}, \mathrm{ub}}(\mu_{\mathrm{T}},\mu_{\mathrm{R}},0)} {\mathsf{DoF}_{\mathrm{L}, \mathrm{D}}(\mu_{\mathrm{T}},\mu_{\mathrm{R}},0)}$, which serves the same purpose.
Showing that this last inequality holds true turns out to be particularly challenging and involves first reformulating it
as a inequality involving a polynomial, and then proving a key quasiconcavity  property for such polynomial from which the inequality follows
(see Section \ref{sec:decentralized_setting}).
\subsubsection{Related Works}
We conclude this part by highlighting the connection to other works that consider related setups.
It is evident that for $\alpha = 1$, the considered setup reduces to the one in
\cite{Naderializadeh2017,Xu2017,Hachem2018}, where only centralized placement was considered.
Since we adopt one-shot linear delivery schemes,
our work is most related to \cite{Naderializadeh2017} and expands upon it in two main directions:
1) the consideration of parallel heterogenous subchannels, and 2) the consideration of decentralized placement at the receivers.
Another line of related works can be found in \cite{Naderializadeh2017a,Naderializadeh2019}, where 
a decentralized variant of the setting in \cite{Naderializadeh2017} was considered, with additional assumptions of 
partial connectivity and asymptotically large networks.
The latter assumption allows for a considerable simplification of the achievable DoF, which in turn, allows for a direct comparison with the 
corresponding upper bound to show order-optimality\footnote{ In particular, the achievable DoF in \cite{Naderializadeh2019} is approximated by moving a summation over the delivery time and the corresponding multicasting gains from the denominator into the numerator (see the expression of $\mathsf{DoF}_{\mathrm{L}, \mathrm{D}}(\mu_{\mathrm{T}},\mu_{\mathrm{R}},\alpha)$ for $\alpha = 1$).}.
This approach, however, does not work for the setting with finite transmitters and receivers considered here. As far as we are aware, this is the first paper that extends the results in \cite{Naderializadeh2017} to the decentralized setting without posing additional restrictions.

The incorporation of parallel heterogeneous subchannels with the $\alpha$ parameter into cache-aided interference networks
reveals a tradeoff between CSIT feedback budget and cache sizes as observed in Section \ref{subsec:tradeoff}.
This tradeoff extends previous observations that were made for the
cache-aided multi-antenna broadcast channel \cite{Zhang2015,Zhang2017}.
Moreover, decentralized scenarios, which are somewhat related the setting of this work, were
considered \cite{Piovano2019,Girgis2017,Xu2018,Roig2018}.
In \cite{Piovano2019}, the multi-antenna broadcast channel with partial CSIT was considered.
While the partial CSIT setting of \cite{Piovano2019} can be translated into the parallel subchannels setting of this paper,
the full transmitter cooperation assumption (i.e. $\mu_{\mathrm{T}} = 1$) limits the
applicability of the results in \cite{Piovano2019} to the setting of this paper.
On the other hand, \cite{Girgis2017,Xu2018} consider an F-RAN setting with randomized decentralized placement
at both transmitters and receivers.
However, decentralization at both ends necessitates cloud transmission
through the front-haul in \cite{Girgis2017,Xu2018}, and the results are also not applicable to the setting considered in this work.
Finally, \cite{Roig2018} considers an F-RAN setting with similar placement to the one considered here, i.e. centralized at the transmitters and decentralized at the receivers. However, \cite{Roig2018} focuses on achievable schemes with no proofs of order-optimality.
\newcounter{Theorem_Counter}
\newcounter{Proposition_Counter}
\newcounter{Lemma_Counter}
\newcounter{Remark_Counter}
\newcounter{Assumption_Counter}
\newcounter{Definition_Counter}
\section{Problem Setting} \label{sec:problem_setting}
The considered wireless network consists of $K_{\mathrm{T}}$ transmitters, denoted by
$\{\text{Tx}_i\}_{i=1}^{K_{T}}$, and $K_{\mathrm{R}}$ receivers (or users), denoted by
$\{\text{Rx}_i\}_{i=1}^{K_{R}}$.
The wireless channel comprises two parallel subchannels: 1) the
P-subchannel for which the transmitters have perfect CSIT, and 2) the
N-subchannel for which the transmitters have no CSIT\footnote{Note that CSIR
is assumed to be perfectly available at all receivers.}.
We assume that the capacities of single links in the P-subchannel and the N-subchannel
are given by $\alpha \log P + o(\log P)$ and $\bar{\alpha} \log P + o(\log P)$ respectively,
where $\alpha \in [0,1]$ and $\bar{\alpha} \triangleq 1- \alpha$ are the corresponding
normalized single link capacities (or DoF) and $P$ is the SNR.
Note that under the normalization $0 \leq \alpha \leq 1$, the parameters
$\alpha$ and $\bar{\alpha}$ can be interpreted as the fractions of the total bandwidth for which
CSIT is perfect and not available respectively, in a DoF sense.

Communication over the two subchannels at time (or channel use) $t$ is modeled by
\begin{align}
\label{eq:channel_model_P}
Y_j^{(\mathrm{p})}(t) & = \sqrt{P^{\alpha}}\sum_{i=1}^{K_{\mathrm{T}}} {{h^{(\mathrm{p})}_{ji}}(t)
X^{\mathrm{(p)}}_i(t)} + Z^{(\mathrm{p})}_j (t) \\
\label{eq:channel_model_N}
Y_j^{(\mathrm{n})}(t) & = \sqrt{P^{\bar{\alpha}}}\sum_{i=1}^{K_{\mathrm{T}}} {{h^{(\mathrm{n})}_{ji}}(t)
X^{\mathrm{(n)}}_i(t)} + Z^{(\mathrm{n})}_j (t)
\end{align}
where for the P-subchannel and the N-subchannel respectively, $X^{\mathrm{(p)}}_i(t)$ and $X^{\mathrm{(n)}}_i(t)$
denote the signals transmitted by  $\text{Tx}_i$, $i \in [K_{\mathrm{T}}] \triangleq \{1,\ldots,K_{\mathrm{T}}\}$, while
$Y_j^{(\mathrm{p})}(t)$ and $Y_j^{(\mathrm{n})}(t)$
denote the signals received by $\text{Rx}_j$, $j \in [K_{\mathrm{R}}]$. Moreover, $ Z^{(\mathrm{p})}_j (t)$ and $ Z^{(\mathrm{n})}_j (t)$ denote the corresponding additive white Gaussian noise signals at $\text{Rx}_j$, distributed as $\mathcal{N}_{\mathbb{C}}(0,1)$.
${h^{(\mathrm{p})}_{ji}}(t)$ and
${h^{(\mathrm{n})}_{ji}}(t)$ denote the fading channel coefficients from $\text{Tx}_i$ to
$\text{Rx}_j$, drawn from continuous stationary ergodic processes such that ${h^{(\mathrm{p})}_{ji}}(t)$, $\forall i,j,t$, are perfectly known to the transmitters (perfect CSIT), while ${h^{(\mathrm{n})}_{ji}}(t)$, $\forall i,j,t$, are not known to the transmitters (no CSIT).
The transmit signals at $\text{Tx}_i$, $i \in [K_{\mathrm{T}}]$, are subject to the power constraints
$\mathbb{E}\big[ |X^{\mathrm{(p)}}_i(t)|^2 \big] \leq 1 $ and
$\mathbb{E}\big[ |X^{\mathrm{(n)}}_i(t)|^2 \big] \leq 1 $.
Note that $P$ is a nominal power (or SNR) value, borrowed from the  generalized degrees-of-freedom (GDoF) framework \cite{Etkin2008,Piovano2019},
which alongside $\alpha$ and $\bar{\alpha}$ is used to
distinguish the strengths of the two subchannels.

In any communication session, each user requests an arbitrary file out of a content library of $N$ files given by $\mathcal{W} \triangleq \{\mathcal{W}_1, \ldots, \mathcal{W}_{N}\}$.
Following the same model in \cite{Naderializadeh2017},
each file $\mathcal{W}_n$ consists of $F$ packets, denoted by $\{\mathbf{w}_{n,f}\}_{f=1}^F$, where each packet is a vector of $B$ bits, i.e. $\mathbf{w}_{n,f} \in \mathbb{F}_2^{B}$.
Furthermore, each transmitter $\text{Tx}_i$, $i \in [K_{\mathrm{T}}]$,
is equipped with a cache memory $\mathcal{P}_i$ of size $M_{\mathrm{T}}F$ packets, while each
receiver $\text{Rx}_j$, $j \in [K_{\mathrm{R}}]$, is equipped with a cache memory $\mathcal{U}_j$ of size $M_{\mathrm{R}}F$ packets.
We assume that each cache memory, whether at transmitters or receivers, can be used to cache arbitrary contents from the library
before communication sessions begin.
Moreover, we assume that $K_{\mathrm{T}} M_{\mathrm{T}} \geq N$, which ensures that the entire library $\mathcal{W}$ can be cached across the
collective memory of all transmitters.
We define the \textit{normalized transmitter cache size} and the \textit{normalized receiver cache size}
as $\mu_{\mathrm{T}}=\frac{M_{\mathrm{T}}}{N}$ and $\mu_{\mathrm{R}}=\frac{M_{\mathrm{R}}}{N}$, respectively.
For the sake of convenience, we assume that $K_{\mathrm{T}}\mu_{\mathrm{T}}$ and $K_{\mathrm{R}}\mu_{\mathrm{R}}$ have integer values whenever we deal with the centralized case, while only $K_{\mathrm{T}}\mu_{\mathrm{T}}$ is assumed to be integer for the decentralized case. This is not a major restriction as schemes that correspond to the remaining values are realized through memory-sharing.
As commonly assumed in cache-aided systems, the network operates in two phases, a \textit{placement phase} and a \textit{delivery phase}, which are described in more detail next.
\subsection{Placement Phase}
The placement phase takes place before user demands are revealed and before communication sessions start.
Following the assumptions in \cite{Naderializadeh2017}, placement is done at the packet level, i.e. each memory is filled with an arbitrary subset of the $NF$ packets in the library where the breaking of packets into smaller subpackets is not allowed.
Moreover, \textit{uncoded placement} is assumed \cite{Wan2016,Yu2018}, where it is not allowed to cache combinations of multiple packets
as a single packet.

Besides considering \emph{centralized placement}, in which coordination amongst nodes during the placement phase is allowed,
we also consider \emph{decentralized placement} where no coordination amongst receivers is allowed during the placement phase.
Centralized placement at the transmitters, however, is always assumed throughout this work, as transmitters
are considered to be fixed nodes in the network.
\subsection{Delivery Phase}
\label{subsec:delivery_phase}
In this phase, each receiver $\text{Rx}_j$ reveals its request for an arbitrary file
$\mathcal{W}_{d_j}$, where $d_j \in [N]$. The tuple of all user demands is denoted by $\mathbf{d}=(d_1,\ldots,d_K)$.
As each receiver $\text{Rx}_j$ has the subset of requested packets, given by $\{\mathbf{w}_{d_j,f}\}_{f=1}^F \cap \mathcal{U}_j$,
pre-stored in its cache memory, the transmitters are required to deliver the remaining packets
given by $\{\mathbf{w}_{d_j,f}\}_{f=1}^F \setminus \mathcal{U}_j$, for all $j \in [K_{\mathrm{R}}]$.
Given the demands $\mathbf{d}$ and the receiver caching realization $\{\mathcal{U}_j\}_{j=1}^{K_{\mathrm{R}}}$,
the set of all packets to be delivered
is given by
\begin{equation}
\nonumber
\mathcal{D}\big( \mathbf{d},\{\mathcal{U}_j\}_{j=1}^{K_{\mathrm{R}}}\big) = \bigcup_{j = 1}^{K_{\mathrm{R}}} \big\{ \{\mathbf{w}_{d_j,f}\}_{f=1}^F \setminus \mathcal{U}_j\big\}.
\end{equation}

\emph{Packet Splitting and Encoding:}
Unlike the placement phase, in which the breaking of packets is not
allowed, we assume that each packet to be transmitted in the delivery phase is
split into two subpackets, as communication is carried out
over two parallel subchannels.
In particular, each packet $\mathbf{w}_{n,f}$ is split as
\begin{equation}
\nonumber
\mathbf{w}_{n,f}=\big(\mathbf{w}_{n,f}^{(\mathrm{p})}, \mathbf{w}_{n,f}^{(\mathrm{n})} \big)
\end{equation}
where $\mathbf{w}_{n,f}^{(\mathrm{p})}$ and $\mathbf{w}_{n,f}^{(\mathrm{n})}$ are referred to as the P-subpacket and the N-subpacket, respectively.
Without loss of generality, we assume that $\mathbf{w}_{n,f}^{(\mathrm{p})}$ and $\mathbf{w}_{n,f}^{(\mathrm{n})}$ consist of
the first $qB$ bits and the last $\bar{q}B$ bits of $\mathbf{w}_{n,f}$, respectively, where
the \textit{splitting ratio} $q \in [0,1]$ is a design parameter and $\bar{q} \triangleq 1-q$.
Moreover, while $q$ may depend on $\alpha$ (i.e. long-term channel parameters),
we assume that $q$ is fixed at the beginning of the delivery phase and is not
allowed to depend on the fading coefficients or the user demands.
From the above, each transmitter cache $\mathcal{P}_{i}$ is split into $\mathcal{P}_{i}^{(\mathrm{p})}$ and
$\mathcal{P}_{i}^{(\mathrm{c})}$, containing P-subpackets and N-subpackets respectively.
Similarly, a set of packets to be delivered $\mathcal{D}$ is split into
$\mathcal{D}^{(\mathrm{p})}$ and $\mathcal{D}^{(\mathrm{c})}$.

Each subpacket cached by the transmitters is encoded into a \emph{coded subpacket} using an independent random Gaussian code.
In particular, a coding scheme $\psi^{(\mathrm{p})}: \mathbb{F}_2^{qB} \rightarrow \mathbb{C}^{\tilde{B}^{(\mathrm{p})}}$ of rate $\alpha \log P + o( \log P)$ is used to encode P-subpackets, while a scheme
$\psi^{(\mathrm{n})}: \mathbb{F}_2^{(1-q)B} \rightarrow \mathbb{C}^{\tilde{B}^{(\mathrm{n})}} $  of rate $\bar{\alpha} \log P + o( \log P)$
is used to encode N-subpackets\footnote{Note that both the number of packets $F$ and the number of bits per packet $B$ may grown infinitely large.}.
The coded versions of the P-subpacket $\mathbf{w}_{n,f}^{(\mathrm{p})}$ and the N-subpacket $\mathbf{w}_{n,f}^{(\mathrm{n})}$,
defined as $\tilde{\mathbf{w}}_{n,f}^{(\mathrm{p})} \triangleq \psi^{(\mathrm{p})}(\mathbf{w}_{n,f}^{(\mathrm{p})})$ and
$\tilde{\mathbf{w}}_{n,f}^{(\mathrm{n})} \triangleq \psi^{(\mathrm{n})}(\mathbf{w}_{n,f}^{(\mathrm{n})})$ respectively,
are given in terms of channel symbols as
\begin{align}
\label{eq:coded_subpacket_p}
\tilde{\mathbf{w}}_{n,f}^{(\mathrm{p})} & =
\big( \tilde{W}_{n,f}^{(\mathrm{p})}(1), \ldots, \tilde{W}_{n,f}^{(\mathrm{p})}(\tilde{B}^{(\mathrm{p})})\big) \\
\label{eq:coded_subpacket_n}
\tilde{\mathbf{w}}_{n,f}^{(\mathrm{n})} & =
\big( \tilde{W}_{n,f}^{(\mathrm{n})}(1), \ldots, \tilde{W}_{n,f}^{(\mathrm{n})}(\tilde{B}^{(\mathrm{n})})\big).
\end{align}
It is clear that a coded P-subpacket carries a DoF of $\alpha$, while a coded
N-subpacket carries a DoF of $\bar{\alpha}$, which is in tune with the single link capacities of the corresponding subchannels.

\emph{Block Structure:}
Communication of coded subpackets is carried out independently over the P-subchannel and the N-subchannel.
Communication in the P-subchannel takes place over $H^{(\mathrm{p})}$ blocks, each referred to as a P-block and spanning
$\tilde{B}^{(\mathrm{p})}$ channel uses,
while communication in the
N-subchannel takes place over $H^{(\mathrm{n})}$ blocks, each referred to as a N-block and spanning
$\tilde{B}^{(\mathrm{n})}$ channel uses.
The goal in each P-block $m^{(\mathrm{p})} \in [H^{(\mathrm{p})}]$ is to
deliver a subset of P-subpackets $\mathcal{D}_{m^{(\mathrm{p})} }^{(\mathrm{p})} \subseteq \mathcal{D}^{(\mathrm{p})}$ to a subset of receivers, denoted by $\mathcal{R}^{(\mathrm{p})}_{m^{(\mathrm{p})} }$, such that one P-subpacket is intended exactly for one receiver.
Similarly, in each N-block $m^{(\mathrm{n})} \in [H^{(\mathrm{n})}]$, the goal is to deliver the N-subpackets in
$\mathcal{D}_{m^{(\mathrm{n})} }^{(\mathrm{n})} \subseteq \mathcal{D}^{(\mathrm{n})}$ to the subset of receivers $\mathcal{R}^{(\mathrm{n})}_{m^{(\mathrm{n})} }$.
At the end of the communication, for each receiver $\text{Rx}_j$ to be able to retrieved its requested file, the sets of delivered subpackets and the content of the cache memory $\mathcal{U}_j$ should satisfy
\begin{align}
\mathcal{W}_{d_j}^{(\mathrm{p})} \triangleq \{\mathbf{w}_{d_j,f}^{(\mathrm{p})} \}_{f=1}^F & \subset \left(  \bigcup_{m^{(\mathrm{p})}=1}^{H^{(\mathrm{p})}}  \mathcal{D}_{m^{(\mathrm{p})} }^{(\mathrm{p})} \right) \cup \mathcal{U}_j^{(\mathrm{p})} \\
\mathcal{W}_{d_j}^{(\mathrm{n})} \triangleq \{\mathbf{w}_{d_j,f}^{(\mathrm{n})} \}_{f=1}^F & \subset \left(  \bigcup_{{m^{(\mathrm{n})}}=1}^{H^{(\mathrm{n})}}  \mathcal{D}_{m^{(\mathrm{n})}}^{(\mathrm{n})} \right) \cup \mathcal{U}_j^{(\mathrm{n})}
\end{align}
where $\mathcal{U}_j^{(\mathrm{p})}$ and $\mathcal{U}_j^{(\mathrm{n})}$ are the portions of $\mathcal{U}_j$ that correspond to
P-subpackets and N-subpackets respectively, i.e. the first $qB$ bits and the last $\bar{q}B$ bits, respectively, of packets in $\mathcal{U}_j$.
Similarly, $\mathcal{W}_{d_j}^{(\mathrm{p})}$ and $\mathcal{W}_{d_j}^{(\mathrm{n})}$ are
the portions of $\mathcal{W}_{d_j}$ that correspond to
P-subpackets and N-subpackets respectively.
As in \cite{Naderializadeh2017}, we adopt one-shot linear delivery schemes in each subchannel, i.e.
\emph{each encoded channel symbol is beamformed  in one channel use, where spreading over multiple channel uses is not allowed}.

\emph{Transmit Linear Beamforming:}
Transmission of coded subpackets in each P-block and N-block is carried out using linear beamforming.
In particular, consider the $m^{(\mathrm{p})}$-th P-block, where $m^{(\mathrm{p})} \in [H^{(\mathrm{p})}]$.
$\text{Tx}_i$, $i \in [K_{\mathrm{T}}]$, transmits a linear combination of the P-subpackets in $\mathcal{P}_i^{(\mathrm{p})}$ and $\mathcal{D}^{(\mathrm{p})}_{m^{(\mathrm{p})}}$
given by
\begin{equation}
\label{eq:X_p_BF}
X^{\mathrm{(p)}}_i(t)= \sum_{ \substack{(n,f) : \\ {\mathbf{w}^{(\mathrm{p})}_{n,f} \in \mathcal{P}_i^{(\mathrm{p})} \cap \mathcal{D}^{(\mathrm{p})}_{m^{(\mathrm{p})}}   }  } }  v^{(\mathrm{p})}_{i,n,f} (t) \cdot  \tilde{W}^{(\mathrm{p})}_{n,f}(t), \
t \in \big[(m^{(\mathrm{p})} - 1)\tilde{B}^{(\mathrm{p})} + 1 : m^{(\mathrm{p})} \tilde{B}^{(\mathrm{p})}  \big]
 \end{equation}
where $[t_{1}:t_{2}] \triangleq \{t_{1},t_{1}+1,\ldots,t_{2}\}$.
In \eqref{eq:X_p_BF}, each $v^{(\mathrm{p})}_{i,n,f} (t) $ is a complex beamforming coefficient used at time $t$ over the P-subchannel,
which is allowed to depend on the channel coefficients of the P-subchannel due to perfect CSIT (e.g. as in \cite{Naderializadeh2017}).
Similarly, for the $m^{(\mathrm{n})}$-th N-block, where $m^{(\mathrm{n})} \in [H^{(\mathrm{n})}]$,
$\text{Tx}_i$ transmits a linear combination of the P-subpackets in $\mathcal{P}_i^{(\mathrm{n})}$ and $\mathcal{D}^{(\mathrm{n})}_{m^{(\mathrm{n})}}$
given by
\begin{equation}
\label{eq:X_n_BF}
X^{\mathrm{(n)}}_i(t)= \sum_{ \substack{(n,f) : \\ {\mathbf{w}^{(\mathrm{n})}_{n,f} \in \mathcal{P}_i^{(\mathrm{n})} \cap \mathcal{D}^{(\mathrm{n})}_{m^{(\mathrm{n})}}   }  } }  v^{(\mathrm{n})}_{i,n,f} (t) \cdot  \tilde{W}^{(\mathrm{n})}_{n,f}(t), \
t \in \big[(m^{(\mathrm{n})} - 1)\tilde{B}^{(\mathrm{n})} + 1 : m^{(\mathrm{n})} \tilde{B}^{(\mathrm{n})}  \big]
 \end{equation}
where each $v^{(\mathrm{n})}_{i,n,f}(t)$ is a complex beamforming coefficient, which is not allowed to depend on the channel coefficients of the N-subchannel due to no CSIT.
Note that in \eqref{eq:X_p_BF} and \eqref{eq:X_n_BF}, we implicitly assume that $\tilde{W}^{(\mathrm{p})}_{n,f}(t) =
\tilde{W}^{(\mathrm{p})}_{n,f}(t \! \! \mod \! \tilde{B}^{(\mathrm{p})} )$, $\tilde{W}^{(\mathrm{p})}_{n,f}(0) = \tilde{W}^{(\mathrm{p})}_{n,f}(\tilde{B}^{(\mathrm{p})} )$,
$\tilde{W}^{(\mathrm{n})}_{n,f}(t) =
\tilde{W}^{(\mathrm{n})}_{n,f}(t \! \! \mod \! \tilde{B}^{(\mathrm{n})} )$ and $\tilde{W}^{(\mathrm{n})}_{n,f}(0) = \tilde{W}^{(\mathrm{n})}_{n,f}(\tilde{B}^{(\mathrm{n})} )$, to maintain consistency with \eqref{eq:coded_subpacket_p} and \eqref{eq:coded_subpacket_n}.
Moreover, the coded subpackets and beamforming coefficients are designed such that the transmit power constraints are
respected.

\emph{Receive Linear Combining:}
Transmit signals pass through the channel modeled in \eqref{eq:channel_model_P} and \eqref{eq:channel_model_N}.
The signals received by  $\text{Rx}_j$, $j \in [K_{\mathrm{R}}]$, in the P-block $m^{(\mathrm{p})}$
and the N-block $m^{(\mathrm{n})}$
are given by
\begin{align}
\mathbf{y}_j^{(\mathrm{p})}(m^{(\mathrm{p})}) & =
\Big(Y_{j}^{(\mathrm{p})}(t) : t \in \big[(m^{(\mathrm{p})} - 1)\tilde{B}^{(\mathrm{p})} + 1 : m^{(\mathrm{p})} \tilde{B}^{(\mathrm{p})}  \big] \Big) \\
 \mathbf{y}_j^{(\mathrm{n})}(m^{(\mathrm{n})}) & =
\Big(Y_{j}^{(\mathrm{n})}(t) : t \in \big[(m^{(\mathrm{n})} - 1)\tilde{B}^{(\mathrm{n})} + 1 : m^{(\mathrm{n})} \tilde{B}^{(\mathrm{n})}  \big] \Big)
\end{align}
where $\big(Y(t): t \in [t_{1}:t_{2}]\big) \triangleq \big(Y(t_{1}), \ldots, Y(t_{2})\big) $.
Focusing on the P-subchannel first and following the linear scheme proposed in \cite{Naderializadeh2017}, each receiver
$\text{Rx}_j$ in $\mathcal{R}^{(\mathrm{p})}_{m^{(\mathrm{p})} }$ uses the content of its cache
to subtract the interference of the undersidered subpackets in
$\mathcal{D}^{(\mathrm{p})}_{m^{(\mathrm{p})}}$, transmitted in the P-block $m^{(\mathrm{p})}$,
$m^{(\mathrm{p})} \in [H^{(\mathrm{p})}]$.
This is achieved through a linear combination
$\mathcal{L}^{(\mathrm{p})}_{j,m^{(\mathrm{p})}}( \mathbf{y}_j^{(\mathrm{p})} (m^{(\mathrm{p})}) , \tilde{\mathcal{U}}_j^{(\mathrm{p})})$
formed to recover  $\mathbf{w}^{(\mathrm{p})}_{d_j,f} \in \mathcal{D}^{(\mathrm{p})}_{m^{(\mathrm{p})}}$, where $\tilde{\mathcal{U}}_j^{(\mathrm{p})}$ denotes the set of coded P-subpackets
cached at $\text{Rx}_j$.
The communication in the $m^{(\mathrm{p})}$-th P-block is successful if there exists linear combinations
at the transmitters (i.e. beamformers) and linear combinations at the receivers such that
for all $\text{Rx}_j$ in $\mathcal{R}^{(\mathrm{p})}_{m^{(\mathrm{p})} }$, we have
\begin{equation}
\label{eq:linear_comb_P}
\mathcal{L}^{(\mathrm{p})}_{j,m^{(\mathrm{p})}}( \mathbf{y}_j^{(\mathrm{p})} (m^{(\mathrm{p})}) , \tilde{U}_j^{(\mathrm{p})})=
\sqrt{P^{\alpha}} \tilde{\mathbf{w}}^{(\mathrm{p})}_{d_{j},f} + \mathbf{z}^{(\mathrm{p})}_j (m^{(\mathrm{p})})
\end{equation}
where $ \mathbf{z}^{(\mathrm{p})}_j (m^{(\mathrm{p})})$ is a sequence of $\mathcal{N}_{\mathbb{C}}(0,1)$ noise samples.
The point-to-point channel in \eqref{eq:linear_comb_P} has a capacity of $\alpha \log P + o(\log P)$,
and therefore $\tilde{\mathbf{w}}^{(\mathrm{p})}_{d_{j},f}$ is reliably communicated as $qB$ grows large.

In a similar manner, considering the N-block $m^{(\mathrm{n})}$, $m^{(\mathrm{n})} \in [H^{(\mathrm{n})}]$, each receiver
$\text{Rx}_j$ in $\mathcal{R}^{(\mathrm{n})}_{m^{(\mathrm{n})} }$ forms a  linear combination
$\mathcal{L}^{(\mathrm{n})}_{j,m^{(\mathrm{n})}}( \mathbf{y}_j^{(\mathrm{n})} (m^{(\mathrm{n})}) , \tilde{\mathcal{U}}_j^{(\mathrm{n})})$
to recover $\mathbf{w}^{(\mathrm{n})}_{d_j,f} \in \mathcal{D}^{(\mathrm{n})}_{m^{(\mathrm{n})}}$, where $\tilde{\mathcal{U}}_j^{(\mathrm{n})}$ denotes the set of coded N-subpackets cached at $\text{Rx}_j$.
The communication in the $m^{(\mathrm{n})}$-th N-block is successful if there exists linear combinations
at the transmitters  and linear combinations at the receivers such that
\begin{equation}
\label{eq:linear_comb_N}
\mathcal{L}^{(\mathrm{n})}_{j,m^{(\mathrm{n})}}( \mathbf{y}_j^{(\mathrm{n})} (m^{(\mathrm{n})}) , \tilde{\mathcal{U}}_j^{(\mathrm{n})})=
\sqrt{P^{\bar{\alpha}}} \tilde{\mathbf{w}}^{(\mathrm{n})}_{d_{j},f} + \mathbf{z}^{(\mathrm{n})}_j (m^{(\mathrm{n})})
\end{equation}
where the point-to-point channel channel in \eqref{eq:linear_comb_N} has a capacity $\bar{\alpha} \log P + o(\log P)$,
and therefore $\tilde{\mathbf{w}}^{(\mathrm{n})}_{d_{j},f}$ is reliably communicated as $\bar{q}B$ grows large.
\subsection{Delivery Time and DoF}
We start this part by defining the unit of the delivery time, i.e. the time-slot.
One time-slot is defined as the optimal time required to communicate a single packet to a single user,
under no caching and no interference, as $P \rightarrow \infty$.
This is achieved by setting $q = \alpha$, and hence communicating $\alpha B$ bits over the P-subchannel
at rate $\alpha \log P  + o(\log P)$ bits per channel use
and $\bar{\alpha}B$ bits over the N-subchannel at rate $\bar{\alpha} \log P  + o(\log P)$ bits per channel use.
Therefore, a time-slot is equivalent to $\frac{B}{\log P }$ uses of the channel (or time instances).
It follows that an achievable sum-DoF can be interpreted as an achievable sum-rate, measured in packets per time-slots as $P \rightarrow \infty$.

In general, for any feasible linear delivery scheme as described in Section \ref{subsec:delivery_phase},
each P-subpacket consists of $qB$ bits and is delivered in one P-block over the point-to-point channel in
\eqref{eq:linear_comb_P} at rate $\alpha \log P  + o(\log P)$.
It follows that a P-block has a duration of $\frac{q}{\alpha}$ time-slots.
Similarly, each N-subpacket consists of $\bar{q}B$ bits and is delivered over the point-to-point channel in
\eqref{eq:linear_comb_N} at rate $\bar{\alpha} \log P  + o(\log P)$,
and hence an N-block has a duration of $\frac{\bar{q}}{\bar{\alpha}}$ time-slots.
It follows that the delivery time for a feasible scheme is given by $H = \max \Big \{ \frac{q}{\alpha} H^{(\mathrm{p})}, \frac{\bar{q}}{\bar{\alpha}} H^{(\mathrm{n})} \Big \}$ time-slots,  and the achievable sum-DoF is given by $\frac{|\mathcal{D}|}{H}$.
Therefore, for fixed caching realization $\big(\{{\mathcal{P}}_i\}_{i=1}^{K_{\mathrm{T}}},\{{\mathcal{U}}_j\}_{j=1}^{K_{\mathrm{R}}}\big)$ and splitting ratio $q$,
which are independent of user demands, the maximum achievable one-shot linear sum-DoF
(DoF for short) for the worst case demands is given by
\begin{equation} \label{eq:one_shot_DoF_fix_real}
\mathsf{DoF}_{\mathrm{L}}^{({\{{\mathcal{P}}_i\}_{i=1}^{K_{\mathrm{T}}}, \{{\mathcal{U}}_j\}_{j=1}^{K_{\mathrm{R}}}},q)}=
\inf_{\mathbf{d}}
\sup_{\substack{\ H^{(\mathrm{p})}, \ H^{(\mathrm{n})},
\\ \{\mathcal{D}^{(\mathrm{p})}_{m^{(\mathrm{p})}}\}_{m^{(\mathrm{p})} = 1}^{H^{(\mathrm{p})}},
\{\mathcal{D}^{(\mathrm{n})}_{m^{(\mathrm{n})}}\}_{m^{(\mathrm{n})} = 1}^{H^{(\mathrm{n})}} }
} \frac{\left| \mathcal{D}\big( \mathbf{d},\{\mathcal{U}_j\}_{j=1}^{K_{\mathrm{R}}}\big) \right|}{\max \Big \{ \frac{q}{\alpha} H^{(\mathrm{p})}, \frac{\bar{q}}{\bar{\alpha}} H^{(\mathrm{n})} \Big \}}.
\end{equation}
This leads to the definition of the $\textit{one-shot linear DoF}$ of the network as the maximum achievable one-shot linear DoF over all caching realizations and splitting ratios, i.e.
\begin{equation}
\label{eq:network_one_shot_linear_DoF}
\begin{aligned}
 & \mathsf{DoF}_{\mathrm{L}}^*(\mu_{\mathrm{T}},\mu_{\mathrm{R}},\alpha)= & \sup_{{\{{\mathcal{P}}_i\}_{i=1}^{K_{\mathrm{T}}}, \{{\mathcal{U}}_j\}_{j=1}^{K_{\mathrm{R}}}},q} \: \: \mathsf{DoF}_{\mathrm{L}}^{({\{{\mathcal{P}}_i\}_{i=1}^{K_{\mathrm{T}}}, \{{\mathcal{U}}_j\}_{j=1}^{K_{\mathrm{R}}}},q)} \\
&& \mathrm{s.t.} \:\: |\mathcal{P}_i| = \mu_{\mathrm{T}}NF , \: \forall i \in [K_{\mathrm{T}}]\\
&&|\mathcal{U}_j| = \mu_{\mathrm{R}}NF, \: \forall j \in [K_{\mathrm{R}}] \\
&& q \in [0,1].
\end{aligned}
\end{equation}
\section{Main Results}
\label{sec: main_results}
In this sections we present the main results of the paper.
The proofs are deferred to subsequent sections and appendices.
We start with the centralized setting and then move on to the decentralized setting.
\subsection{Centralized Setting}
\begin{theorem} \label{theorem_cen}
For the cache-aided wireless network described in Section \ref{sec:problem_setting},
assuming centralized placement, an achievable one-shot linear DoF is given by
   \begin{equation} \label{eq:achievable_DoF_theo_cen}
   \mathsf{DoF}_{\mathrm{L}, \mathrm{C}}(\mu_{\mathrm{T}},\mu_{\mathrm{R}},\alpha)=
   \alpha \cdot \min \{K_{\mathrm{T}}\mu_{\mathrm{T}}+K_{\mathrm{R}}\mu_{\mathrm{R}},K_{\mathrm{R} } \} + \bar{\alpha} \cdot  \min \{1+K_{\mathrm{R}}\mu_{\mathrm{R}},K_{\mathrm{R}}\}.
   \end{equation}
   Moreover, $ \mathsf{DoF}_{\mathrm{L}, \mathrm{C}}(\mu_{\mathrm{T}},\mu_{\mathrm{R}},\alpha)$ satisfies
     \begin{equation} \label{eq:converse_theo_1}
   \frac {\mathsf{DoF}_{\mathrm{L}, \mathrm{C}}(\mu_{\mathrm{T}},\mu_{\mathrm{R}},\alpha)}{  \mathsf{DoF}_{\mathrm{L}}^{\mathrm{*}}(\mu_{\mathrm{T}},\mu_{\mathrm{R}},\alpha) } \geq \frac{1}{2},
   \end{equation}
   where $\mathsf{DoF}_{\mathrm{L}}^{\mathrm{*}}(\mu_{\mathrm{T}},\mu_{\mathrm{R}},\alpha) $ is the one-shot linear DoF of the network as
   defined in \eqref{eq:network_one_shot_linear_DoF}.
\end{theorem}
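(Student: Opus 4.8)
The plan is to prove the two assertions in Theorem~\ref{theorem_cen} separately: first the achievability of $\mathsf{DoF}_{\mathrm{L},\mathrm{C}}$ in \eqref{eq:achievable_DoF_theo_cen}, and then the order-optimality bound \eqref{eq:converse_theo_1}. For achievability I would exploit the separable structure already flagged in the introduction. Fix a splitting ratio $q\in[0,1]$, deliver the P-subpackets over the P-subchannel using the scheme of \cite{Naderializadeh2017}, which as a standalone cache-aided network with receiver cache fraction $\mu_{\mathrm{R}}$ attains sum-DoF $D_{\mathrm{P}}=\min\{K_{\mathrm{T}}\mu_{\mathrm{T}}+K_{\mathrm{R}}\mu_{\mathrm{R}},K_{\mathrm{R}}\}$, and deliver the N-subpackets over the N-subchannel using a modified \cite{Maddah-Ali2014} scheme attaining $D_{\mathrm{N}}=\min\{1+K_{\mathrm{R}}\mu_{\mathrm{R}},K_{\mathrm{R}}\}$. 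Since the combinatorial caching gains depend only on the normalized quantities $K_{\mathrm{T}}\mu_{\mathrm{T}}$, $K_{\mathrm{R}}\mu_{\mathrm{R}}$ and $K_{\mathrm{R}}$, and not on the subpacket sizes, worst-case (all-distinct) demands require $H^{(\mathrm{p})}=|\mathcal{D}|/D_{\mathrm{P}}$ and $H^{(\mathrm{n})}=|\mathcal{D}|/D_{\mathrm{N}}$ blocks. Substituting the block durations $q/\alpha$ and $\bar{q}/\bar\alpha$ into $H=\max\{\frac{q}{\alpha}H^{(\mathrm{p})},\frac{\bar{q}}{\bar\alpha}H^{(\mathrm{n})}\}$ yields a sum-DoF of $\min\{\alpha D_{\mathrm{P}}/q,\ \bar\alpha D_{\mathrm{N}}/\bar{q}\}$, which is maximized by the balancing choice $q^\star=\alpha D_{\mathrm{P}}/(\alpha D_{\mathrm{P}}+\bar\alpha D_{\mathrm{N}})$, giving exactly $\alpha D_{\mathrm{P}}+\bar\alpha D_{\mathrm{N}}=\mathsf{DoF}_{\mathrm{L},\mathrm{C}}(\mu_{\mathrm{T}},\mu_{\mathrm{R}},\alpha)$. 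The one delicate point here is the ``slight modification'' of \cite{Maddah-Ali2014}: since transmitter caches are distributed and no CSIT is available, each coded multicast symbol must be formable and broadcastable by an appropriate transmitter, which I would arrange using the library-covering assumption $K_{\mathrm{T}}M_{\mathrm{T}}\ge N$.

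For the converse I would first establish an upper bound $\mathsf{DoF}_{\mathrm{L},\mathrm{ub}}(\mu_{\mathrm{T}},\mu_{\mathrm{R}},\alpha)$ on $\mathsf{DoF}_{\mathrm{L}}^{*}$ and show it inherits the same weighted-sum form $\alpha U_1+\bar\alpha U_0$, where $U_1$ and $U_0$ denote the standalone one-shot linear DoF upper bounds of the P- and N-subchannels respectively. Building on the converse of \cite{Naderializadeh2017}, I would, for a fixed scheme with splitting ratio $q$, formulate the integer optimization governing the minimum number of blocks per subchannel and apply the worst-case-to-average demand relaxation. The two additional steps are: (i) a decoupling that bounds the P- and N-subchannel contributions separately, giving $H^{(\mathrm{p})}\ge|\mathcal{D}|/U_1$ and $H^{(\mathrm{n})}\ge|\mathcal{D}|/U_0$; and (ii) substituting these into the delivery-time expression and optimizing over the splitting ratio, which mirrors the achievability computation and produces $\mathsf{DoF}_{\mathrm{L}}^{*}\le\max_{q}\min\{\alpha U_1/q,\bar\alpha U_0/\bar{q}\}=\alpha U_1+\bar\alpha U_0$.

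With this weighted-sum upper bound in hand, the factor-$2$ claim reduces to verifying it at the two extremes. At $\alpha=1$ this is precisely the order-optimality of \cite{Naderializadeh2017}, giving $D_{\mathrm{P}}\ge U_1/2$; at $\alpha=0$ I would verify $D_{\mathrm{N}}=\min\{1+K_{\mathrm{R}}\mu_{\mathrm{R}},K_{\mathrm{R}}\}\ge U_0/2$ by a direct algebraic comparison against the no-CSIT bound $U_0$. Combining through the common weights, $\mathsf{DoF}_{\mathrm{L},\mathrm{C}}(\mu_{\mathrm{T}},\mu_{\mathrm{R}},\alpha)=\alpha D_{\mathrm{P}}+\bar\alpha D_{\mathrm{N}}\ge\tfrac12(\alpha U_1+\bar\alpha U_0)=\tfrac12\,\mathsf{DoF}_{\mathrm{L},\mathrm{ub}}(\mu_{\mathrm{T}},\mu_{\mathrm{R}},\alpha)$, and since $\mathsf{DoF}_{\mathrm{L}}^{*}\le\mathsf{DoF}_{\mathrm{L},\mathrm{ub}}$ we obtain \eqref{eq:converse_theo_1}.

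I expect the main obstacle to be the converse, specifically step (i): adapting the rank/acyclic-subgraph argument of \cite{Naderializadeh2017} so that it decouples cleanly into two standalone per-subchannel bounds despite the shared receiver caches, since each packet's $qB$ P-bits and $\bar{q}B$ N-bits are cached with the same fraction $\mu_{\mathrm{R}}$, and correctly specializing the argument to the no-CSIT N-subchannel where transmitter beamforming cannot null interference. Once the bound is shown to factor as $\alpha U_1+\bar\alpha U_0$, the splitting-ratio optimization and the two extreme-point comparisons are routine.
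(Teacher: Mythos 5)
Your proposal is correct and follows essentially the same route as the paper: the same achievability via a balancing splitting ratio $q^\star=\alpha D_{\mathrm{P}}/(\alpha D_{\mathrm{P}}+\bar\alpha D_{\mathrm{N}})$ combining the scheme of \cite{Naderializadeh2017} on the P-subchannel with a modified \cite{Maddah-Ali2014} scheme on the N-subchannel, and the same converse via the worst-case-to-average relaxation of \cite{Naderializadeh2017}, decoupling into per-subchannel bounds (with the no-CSIT per-block constraint $L^{(\mathrm{n})} \leq \min_l |\mathcal{R}_l| + 1$ replacing the zero-forcing cooperation gain, exactly the specialization you flag), optimization over $q$, and the two extreme-point comparisons at $\alpha=1$ and $\alpha=0$. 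The only cosmetic deviation is your suggestion that a single transmitter forms each coded multicast symbol: the framework of Section \ref{subsec:delivery_phase} (independent per-subpacket Gaussian codes, linear beamforming of coded subpackets) rules out XOR-before-encoding, so the paper instead superposes independently coded N-subpackets over the air and lets receivers cancel cached coded subpackets, which yields the same DoF $\min\{1+K_{\mathrm{R}}\mu_{\mathrm{R}},K_{\mathrm{R}}\}$.
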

The proof of Theorem \ref{theorem_cen} is presented in Section \ref{sec:centralized_setting} and employs
the result derived in Section \ref{sec:outerbound}.

From Theorem \ref{theorem_cen}, the result in \cite[Th. 1]{Naderializadeh2017} is recovered by setting
$\alpha=1$ (P-subchannel only).
In this case, we know from \cite{Naderializadeh2017} that perfect CSIT and caches at the transmitters
allow cooperation and $\mathsf{DoF}_{\mathrm{L}, \mathrm{C}}(\mu_{\mathrm{T}},\mu_{\mathrm{R}},1)$ scales with the aggregate memory of all transmitters and receivers.
On the other hand, when $\alpha=0$ (N-subchannel only),
all DoF benefits of transmitter-side cooperation are annihilated \cite{Piovano2019},
and the achievable one-shot linear DoF in Theorem \ref{theorem_cen}
reduces to the DoF achieved with one transmitter \cite{Maddah-Ali2014}.
In this case, the original Maddah-Ali and Niesen scheme \cite{Maddah-Ali2014} is implemented,
where the XoR takes place over the air through superposition of coded packets,
and $\mathsf{DoF}_{\mathrm{L}, \mathrm{C}}(\mu_{\mathrm{T}},\mu_{\mathrm{R}},0)$ scales with the aggregate memory of the receivers only.
For general $\alpha$, $\mathsf{DoF}_{\mathrm{L}, \mathrm{C}}(\mu_{\mathrm{T}},\mu_{\mathrm{R}},\alpha)$ takes the form
\begin{equation} \label{eq:achievable_DoF_theo_cen_lin}
\mathsf{DoF}_{\mathrm{L}, \mathrm{C}}(\mu_{\mathrm{T}},\mu_{\mathrm{R}},\alpha) =  \alpha \cdot  \mathsf{DoF}_{\mathrm{L}, \mathrm{C}}(\mu_{\mathrm{T}},\mu_{\mathrm{R}},1) + \bar{\alpha} \cdot \mathsf{DoF}_{\mathrm{L}, \mathrm{C}}(\mu_{\mathrm{T}},\mu_{\mathrm{R}},0),
\end{equation}
which is achieved by choosing an adequate splitting ratio $q$ (as a function of $\alpha$)
in order to best utilize the two subchannels.
Once $q$ is chosen, the P-subpackets and N-subpackets are then delivered over the P-subchannel and N-subchannel as for
the cases with $\alpha = 1$ and $\alpha = 0$, respectively.
\subsection{Decentralized Setting}
\begin{theorem} \label{theorem_decen}
      For the cache-aided wireless network described in Section \ref{sec:problem_setting}, under decentralized placement in which
      centrally coordinated placement is only allowed at the transmitters and not at the receivers, an achievable one-shot linear DoF is given by
	\begin{equation} \label{eq:achievable_DoF_theo_decen}
	\mathsf{DoF}_{\mathrm{L}, \mathrm{D}}(\mu_{\mathrm{T}},\mu_{\mathrm{R}},\alpha) =
	\alpha  \cdot \frac{1}{\sum_{l=0}^{K_{\mathrm{R}}-1} \frac{\binom{K_{\mathrm{R}}-1}{l} \mu_{\mathrm{R}}^l (1-\mu_{\mathrm{R}})^{K_{\mathrm{R}}-l-1}}{\min \{K_{\mathrm{T}}\mu_{\mathrm{T}}+l,K_{\mathrm{R}}\}}}
	  + \bar{\alpha} \cdot \frac{K_{\mathrm{R}}\mu_{\mathrm{R}}}{1-(1-\mu_{\mathrm{R}})^{K_{\mathrm{R}}}}.
	\end{equation}
	Moreover, $\mathsf{DoF}_{\mathrm{L}, \mathrm{D}}(\mu_{\mathrm{T}},\mu_{\mathrm{R}},\alpha)$ satisfies
	\begin{equation} \label{eq:converse_theo_decen}
   \frac {\mathsf{DoF}_{\mathrm{L}, \mathrm{D}} (\mu_{\mathrm{T}},\mu_{\mathrm{R}},\alpha)}{  \mathsf{DoF}_{\mathrm{L}}^{*}(\mu_{\mathrm{T}},\mu_{\mathrm{R}},\alpha) } \geq \frac{1}{3}.
	\end{equation}
\end{theorem}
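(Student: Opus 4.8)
\emph{Achievability.} Following the separable philosophy already used for Theorem~\ref{theorem_cen}, I would split each requested packet into a P-subpacket and an N-subpacket via the ratio $q$, deliver the two parts independently over the two subchannels, and optimize $q$ as a function of $\alpha$. Since the delivery time is $\max\{\frac{q}{\alpha}H^{(\mathrm p)},\frac{\bar q}{\bar\alpha}H^{(\mathrm n)}\}$, equalizing the two block durations collapses the achievable DoF to the weighted sum $\alpha\cdot\mathsf{DoF}_{\mathrm{L}, \mathrm{D}}(\mu_{\mathrm{T}},\mu_{\mathrm{R}},1)+\bar\alpha\cdot\mathsf{DoF}_{\mathrm{L}, \mathrm{D}}(\mu_{\mathrm{T}},\mu_{\mathrm{R}},0)$, so it suffices to build the two extreme-point schemes. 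For the P-subchannel ($\alpha=1$) each receiver caches every subpacket independently with probability $\mu_{\mathrm R}$ while each file is centrally placed at $K_{\mathrm T}\mu_{\mathrm T}$ transmitters; grouping the requested subpackets by the number $l$ of non-requesting receivers that happen to hold them, a group-$l$ subpacket is available at $K_{\mathrm T}\mu_{\mathrm T}$ transmitters and cancellable at $l$ receivers, so the one-shot linear precoding scheme of \cite{Naderializadeh2017} serves $\min\{K_{\mathrm T}\mu_{\mathrm T}+l,K_{\mathrm R}\}$ requests per channel use. As the fraction of a user's missing subpackets falling in group $l$ is $\binom{K_{\mathrm R}-1}{l}\mu_{\mathrm R}^l(1-\mu_{\mathrm R})^{K_{\mathrm R}-1-l}$, the normalized delivery time is exactly the denominator in \eqref{eq:achievable_DoF_theo_decen}, and its reciprocal is $\mathsf{DoF}_{\mathrm{L}, \mathrm{D}}(\mu_{\mathrm{T}},\mu_{\mathrm{R}},1)$. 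For the N-subchannel ($\alpha=0$) no transmitter cooperation survives, so I would run the decentralized Maddah--Ali--Niesen scheme \cite{Maddah-Ali2015a} with the XOR realized over the air by superposition as in the $\alpha=0$ case of Theorem~\ref{theorem_cen}; its multicasting gain $K_{\mathrm R}\mu_{\mathrm R}/(1-(1-\mu_{\mathrm R})^{K_{\mathrm R}})$ gives $\mathsf{DoF}_{\mathrm{L}, \mathrm{D}}(\mu_{\mathrm{T}},\mu_{\mathrm{R}},0)$.

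\emph{Order-optimality.} Since $\mathsf{DoF}_{\mathrm{L}}^{*}\le\mathsf{DoF}_{\mathrm{L}, \mathrm{ub}}$ by the converse of Section~\ref{sec:outerbound}, it suffices to prove $\mathsf{DoF}_{\mathrm{L}, \mathrm{ub}}/\mathsf{DoF}_{\mathrm{L}, \mathrm{D}}\le 3$ for every $\alpha\in[0,1]$. Writing $U_\alpha:=\mathsf{DoF}_{\mathrm{L}, \mathrm{ub}}(\mu_{\mathrm{T}},\mu_{\mathrm{R}},\alpha)$ and $D_\alpha:=\mathsf{DoF}_{\mathrm{L}, \mathrm{D}}(\mu_{\mathrm{T}},\mu_{\mathrm{R}},\alpha)$, both are affine in $\alpha$, so $U_\alpha/D_\alpha=(\alpha U_1+\bar\alpha U_0)/(\alpha D_1+\bar\alpha D_0)$ is linear-fractional in $\alpha$ with positive denominator. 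A short computation shows its derivative has the constant-sign numerator $U_1 D_0-U_0 D_1$. Hence, once the key inequality $U_1 D_0\le U_0 D_1$ is established, $U_\alpha/D_\alpha$ is non-increasing in $\alpha$, its supremum on $[0,1]$ is attained at $\alpha=0$, and the whole converse reduces to the single elementary bound $U_0/D_0\le 3$. The latter uses the explicit forms $D_0=K_{\mathrm R}\mu_{\mathrm R}/(1-(1-\mu_{\mathrm R})^{K_{\mathrm R}})$ and $U_0=\mathsf{DoF}_{\mathrm{L}, \mathrm{ub}}(\mu_{\mathrm{T}},\mu_{\mathrm{R}},0)$ and follows by a direct comparison.

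The crux is therefore the inequality $U_1 D_0\le U_0 D_1$. Multiplying out the positive factors $D_0^{-1}=(1-(1-\mu_{\mathrm R})^{K_{\mathrm R}})/(K_{\mathrm R}\mu_{\mathrm R})$ and $D_1^{-1}=\sum_{l=0}^{K_{\mathrm R}-1}\binom{K_{\mathrm R}-1}{l}\mu_{\mathrm R}^l(1-\mu_{\mathrm R})^{K_{\mathrm R}-1-l}/\min\{K_{\mathrm T}\mu_{\mathrm T}+l,K_{\mathrm R}\}$ turns it into the nonnegativity on $[0,1]$ of a polynomial $P(\mu_{\mathrm R})$, where the term $\min\{K_{\mathrm T}\mu_{\mathrm T}+l,K_{\mathrm R}\}$ splits the defining sum into the two ranges $l\le K_{\mathrm R}-K_{\mathrm T}\mu_{\mathrm T}$ and $l> K_{\mathrm R}-K_{\mathrm T}\mu_{\mathrm T}$ and is precisely what obstructs a coefficientwise argument. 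My plan is to check the endpoints $\mu_{\mathrm R}\in\{0,1\}$ directly, where the binomial sum degenerates to a single surviving term and the inequality is immediate, and then to prove that $P$ is quasiconcave on $[0,1]$; since a quasiconcave function dominates the minimum of its endpoint values, $P(\mu_{\mathrm R})\ge\min\{P(0),P(1)\}\ge0$ on all of $[0,1]$.

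I expect the quasiconcavity of $P$ to be the main obstacle. It amounts to controlling the sign changes of $P'$ (or establishing a single-crossing property) in the presence of both the two-regime $\min$ and the binomial weights, which I would attempt by recognizing the weighted sum as a derivative or integral of a simple generating function in $\mu_{\mathrm R}$ and showing the resulting expression is single-peaked. Once quasiconcavity is in hand, $U_1 D_0\le U_0 D_1$ holds, the monotonicity argument above yields $U_\alpha/D_\alpha\le U_0/D_0\le 3$ for all $\alpha$, and \eqref{eq:converse_theo_decen} follows.
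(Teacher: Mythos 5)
Your achievability argument and the skeleton of your converse coincide with the paper's: the same decentralized receiver placement with grouping by the number $l$ of other caching receivers, the same per-group gain $\min\{K_{\mathrm{T}}\mu_{\mathrm{T}}+l,K_{\mathrm{R}}\}$ on the P-subchannel and decentralized coded multicasting via superposition on the N-subchannel, the same equalizing choice of $q$; and on the converse side, your linear-fractional monotonicity argument in $\alpha$ is equivalent to the paper's convex-combination step, your cross inequality $U_1D_0\le U_0D_1$ is exactly \eqref{eq:ineq_lemma_con_dece_2}, and $U_0/D_0\le 3$ is indeed elementary (the paper does a short four-case analysis using Bernoulli's inequality and $D_0\ge\max\{K_{\mathrm{R}}\mu_{\mathrm{R}},1\}$). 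The genuine gap is entirely in your treatment of the cross inequality.

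First, the object you call ``a polynomial $P(\mu_{\mathrm{R}})$ on $[0,1]$'' is not a polynomial: besides the $\min\{K_{\mathrm{T}}\mu_{\mathrm{T}}+l,K_{\mathrm{R}}\}$ inside the sum, both $U_1$ and $U_0$ carry outer minima that saturate at $K_{\mathrm{R}}$ once $\mu_{\mathrm{R}}\ge (K_{\mathrm{R}}-K_{\mathrm{T}}\mu_{\mathrm{T}})/(2K_{\mathrm{R}})$ and $\mu_{\mathrm{R}}\ge(K_{\mathrm{R}}-1)/(2K_{\mathrm{R}})$ respectively, so $P$ is piecewise with kinks, and quasiconcavity of that piecewise object on all of $[0,1]$ is neither proved by you nor addressable with coefficient-sign techniques, which apply to a single polynomial. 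The paper's first and necessary move is to dispose of the saturated regime by observing that $J(r)$ in \eqref{eq:function_J} is decreasing in $r$ once its numerator equals $K_{\mathrm{R}}$, restricting the analysis to $\zeta=\mu_{\mathrm{R}}/(1-\mu_{\mathrm{R}})\in\bigl[0,\tfrac{K_{\mathrm{R}}-r}{K_{\mathrm{R}}+r}\bigr]$, where the genuine polynomial $p(\zeta)$ of \eqref{eq:ineq_polynomial} emerges; this step is absent from your plan. Second, and decisively, your claim that the endpoint evaluations are ``immediate'' fails at the binding endpoint: after the correct restriction, $p(0)=0$ is trivial, but $p\bigl(\tfrac{K_{\mathrm{R}}-r}{K_{\mathrm{R}}+r}\bigr)\ge 0$ is equivalent, after binomial-identity manipulations, to the inequality of Lemma \ref{lemma:inequality} --- a nontrivial combinatorial bound due to Pinelis that the paper cites rather than proves. (Your endpoints $\mu_{\mathrm{R}}\in\{0,1\}$ both give the equality $U_1D_0=U_0D_1$, so they carry no content; the hard evaluation sits at the interior saturation point, which your formulation never isolates.) Finally, the quasiconcavity itself, which you defer to a hoped-for generating-function identity, is the bulk of the paper's work: Appendix \ref{sec:proof_poly_property} shows the normalized coefficients $c_m/\binom{K_{\mathrm{R}}-1}{m-1}$ form a non-increasing sequence --- a multi-case analysis involving $\tilde{r}=\lfloor r\rfloor$ and parabola bounds --- and then invokes the inductive Lemma \ref{lemma_appendix_D} on polynomials whose coefficient signs change at most once, yielding Lemma \ref{lemma:poly_con_decen}. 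So you identified the right landmarks (quasiconcavity plus endpoint evaluation), but as stated the domain is wrong, the crucial endpoint is mislabeled as trivial when it requires Lemma \ref{lemma:inequality}, and no proof of quasiconcavity is supplied.
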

The proof of Theorem \ref{theorem_decen} is presented in Section \ref{sec:decentralized_setting}.
Choosing $\alpha=1$ in Theorem  \ref{theorem_decen}
is equivalent to considering decentralized placement for the setting of
\cite{Naderializadeh2017}. On the other hand, $\alpha=0$ reduces the setup to the
decentralized setting in \cite{Maddah-Ali2015a} in a DoF sense (the smaller multiplicative gap is due to uncoded placement and linear delivery).
In general, similar to Theorem \ref{theorem_cen}, $\mathsf{DoF}_{\mathrm{L}, \mathrm{D}}(\mu_{\mathrm{T}},\mu_{\mathrm{R}},\alpha)$ takes the form
\begin{equation}
\label{eq:achievable_DoF_theo_decen_lin}
\mathsf{DoF}_{\mathrm{L}, \mathrm{D}} (\mu_{\mathrm{T}},\mu_{\mathrm{R}},\alpha) =  \alpha \cdot \mathsf{DoF}_{\mathrm{L}, \mathrm{D}} (\mu_{\mathrm{T}},\mu_{\mathrm{R}},1) + \bar{\alpha} \cdot \mathsf{DoF}_{\mathrm{L}, \mathrm{D}} (\mu_{\mathrm{T}},\mu_{\mathrm{R}},0).
\end{equation}
Moreover, one could easily conclude from Theorem \ref{theorem_cen} and Theorem  \ref{theorem_decen} that centralized placement at
the receivers can only lead to at most a factor of $3$ improvement over decentralized placement.
Furthermore, we observe through numerical simulations that 
this multiplicative factor does not exceed $1.5$.
\subsection{Tradeoff Between Receiver Cache Size and CSIT Budget}
\label{subsec:tradeoff}
\begin{figure}
\centering
\subfloat[Centralized Setting]{\includegraphics[width=.45\linewidth]{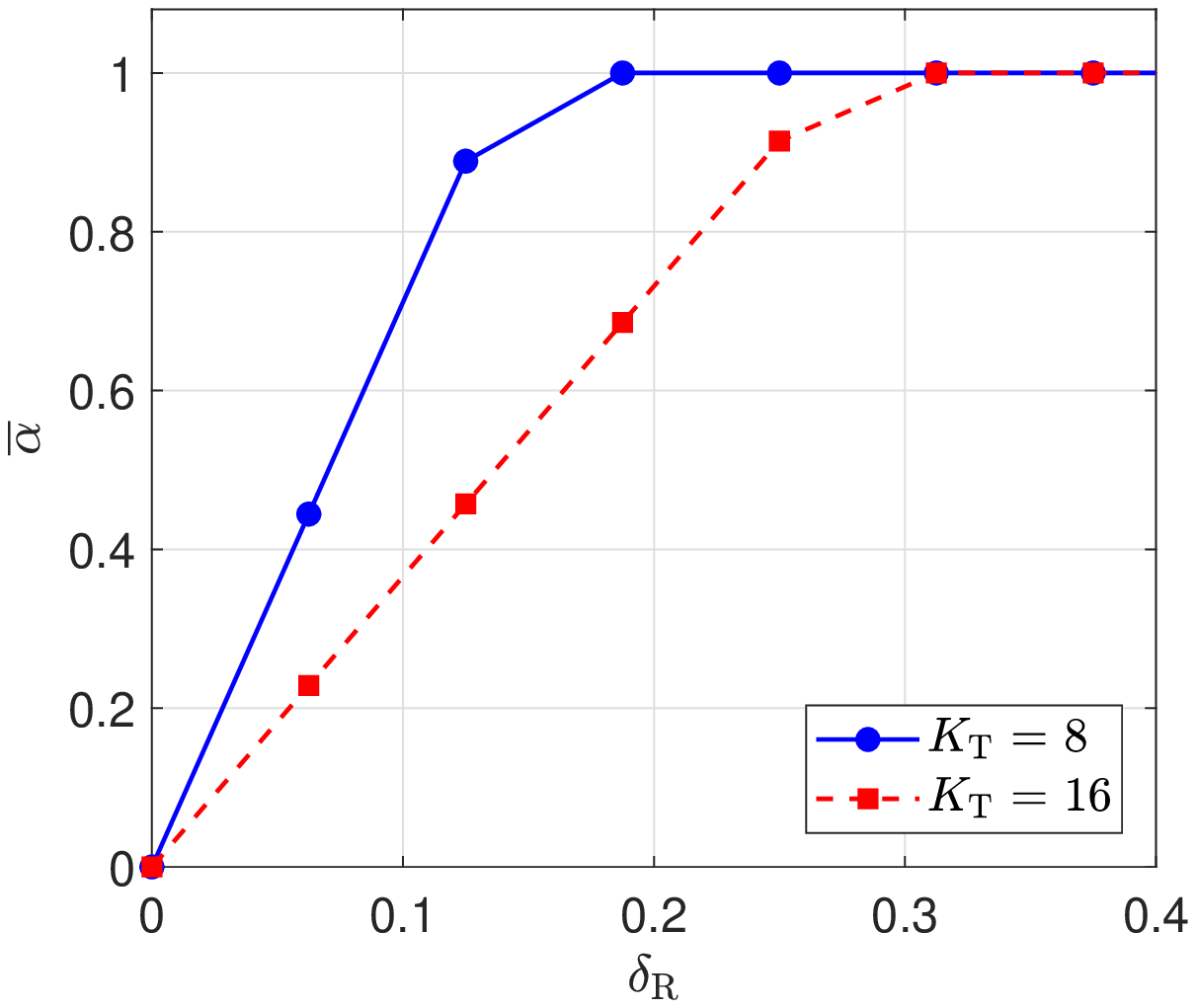}}
\subfloat[Decentralized Setting]{\includegraphics[width=.45\linewidth]{{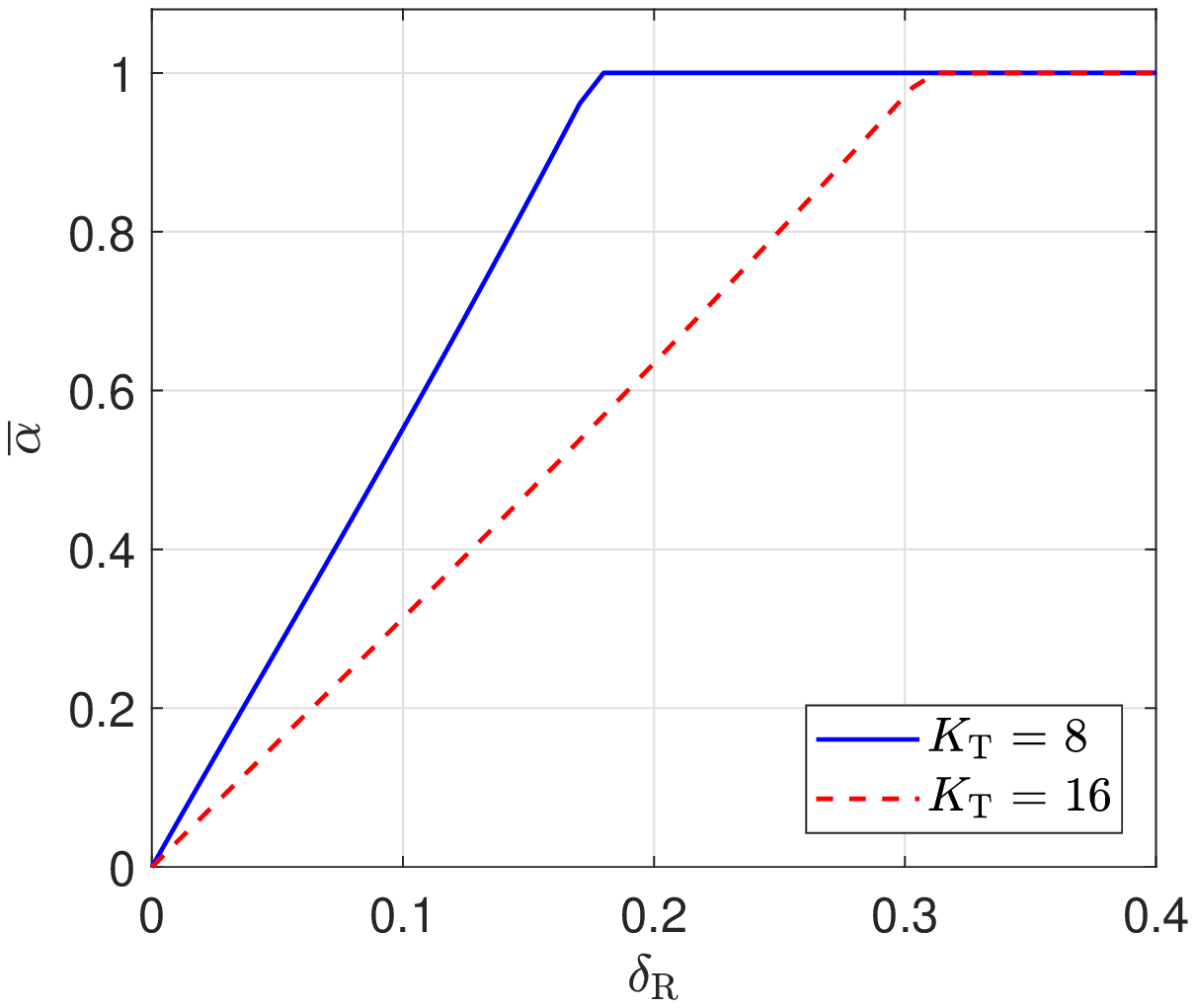}}}
\caption{Tradeoff between $\delta_{\mathrm{R}}$ and $\bar{\alpha}$ for networks with
$K_{\mathrm{R}} = 16$, $K_{\mathrm{T}} \in \{8, 16\}$, $\mu_{\mathrm{R}} = {1}/{16}$ and $\mu_{\mathrm{T}}={1}/{2}$.}
	 \label{fig:delta_alpha_tradeoff}
\end{figure}
In this part, we investigate the implications of Theorem \ref{theorem_cen} and Theorem  \ref{theorem_decen}
by considering the tradeoff between the receiver cache memory size and the CSIT budget.
For this purpose, we start by assuming that CSIT is perfectly available across all signalling dimensions, captured by $\alpha = 1$ (equivalently $\bar{\alpha} = 0$).
For given $\mu_{\mathrm{T}}$ and $\mu_{\mathrm{R}}$,
an achievable delivery time under centralized placement, denoted by
$H_{\mathrm{C}}(\mu_{\mathrm{T}},\mu_{\mathrm{R}},1)$,
is easily derived from the one-shot linear DoF in Theorem \ref{theorem_cen}.
Now suppose that the CSIT budget is reduced, e.g. by providing feedback for a fraction of sub-carriers. This yields  $H_{\mathrm{C}}(\mu_{\mathrm{T}},\mu_{\mathrm{R}},1-\bar{\alpha}) \geq H_{\mathrm{C}}(\mu_{\mathrm{T}},\mu_{\mathrm{R}},1)$, where
$\bar{\alpha}$ is interpreted as the reduction in CSIT budget.
We are interested in the corresponding increase in receiver cache size, i.e.
$\delta_{\mathrm{R}} \in [0,1-\mu_{\mathrm{R}}]$, such that $H_{\mathrm{C}}(\mu_{\mathrm{T}},\mu_{\mathrm{R}}+\delta_{\mathrm{R}},1-\bar{\alpha}) = H_{\mathrm{C}}(\mu_{\mathrm{T}},\mu_{\mathrm{R}},1)$.
Note that a similar tradeoff is defined for the decentralized case through
$H_{\mathrm{D}}(\mu_{\mathrm{T}},\mu_{\mathrm{R}}+\delta_{\mathrm{R}},1-\bar{\alpha}) = H_{\mathrm{D}}(\mu_{\mathrm{T}},\mu_{\mathrm{R}},1)$.

The tradeoff between $\mu_{\mathrm{R}}$ and $\bar{\alpha}$ is evaluated numerically and illustrated in Fig. \ref{fig:delta_alpha_tradeoff}
for both centralized and decentralized cases.
In particular, we consider a network of $K_{\mathrm{R}} = 16$ receivers with $\mu_{\mathrm{R}} = {1}/{16}$ and $\mu_{\mathrm{T}}={1}/{2}$.
The number of transmitters $K_{\mathrm{T}}$ is varied between $8$ and $16$.
It can be seen that the tradeoff is sharper for $K_{\mathrm{T}} = 8$ compared to $K_{\mathrm{T}} = 16$ in the
sense that a higher reduction in CSIT $\bar{\alpha}$ can be achieved for a smaller increase in receiver
cache size given by $\delta_{\mathrm{R}}$.
This is due to the fact that at most $8$ orthogonal beams can be created (through e.g. zero-forcing) in the setting with $K_{\mathrm{T}} = 8$, while  $K_{\mathrm{T}} = 16$ allows up to $16$ orthogonal beams.
This makes the latter setting more dependent on CSIT in general, hence requiring a higher increase in cache size to compensate
for the same reduction in CSIT budget.
\subsection{Related Setups}
\label{subsec:relater_setups}
It is worthwhile highlighting that the results in Theorem \ref{theorem_cen} and Theorem \ref{theorem_decen}
can be easily applied to other related setups.
In particular, the N-subchannel can be replaced by a $(K_{\mathrm{T}}+1)$-th transmitter, operating
on a different frequency (e.g. a WiFi access point or femtocell), and connected to all transmitter caches through a
capacitated link (captured by $\bar{\alpha}$) \cite{Chen2016}.
In this case, the ergodic fading assumptions of our original setting can be relaxed, particularly if perfect CSI is also available
at the $(K_{\mathrm{T}}+1)$-th transmitter.

The results also extend to the multi-server setting of \cite{Shariatpanahi2016} with wired (noiseless) linear networks, in which
the parallel subchannels correspond to scenarios where servers can reach receivers through two parallel
networks, a fully connected linear interference network and a multicast networks.
\section{Centralized Setting: Proof of Theorem \ref{theorem_cen}}
\label{sec:centralized_setting}
In this section, we present a proof for Theorem \ref{theorem_cen}. As part of the proof, we introduce a DoF upper bound 
which is also used in the following section in the proof of Theorem \ref{theorem_decen}.
\subsection{Achievability of Theorem \ref{theorem_cen}}
\subsubsection{Placement Phase}
The placement phase is analogous to the one \cite{Naderializadeh2017}.
Interestingly, this implies that the placement phase is not required to depend on the value of $\alpha$.
As in \cite{Naderializadeh2017}, each file $\mathcal{W}_n$, $n \in [N]$, is partitioned into  $\binom{K_{\mathrm{T}}}{K_{\mathrm{T}}\mu_{\mathrm{T}}}\binom{K_{\mathrm{R}}}{K_{\mathrm{R}}\mu_{\mathrm{R}}}$ disjoint subfiles of equal size, denoted by
\begin{equation}
\nonumber
\mathcal{W}_n=\{\mathcal{W}_{n,\mathcal{T},\mathcal{R}}\}_{\substack{\mathcal{T} \subseteq  [K_{\mathrm{T}}] : |\mathcal{T}|= {K_{\mathrm{T}}\mu_{\mathrm{T}}} \\ \mathcal{R} \subseteq  [K_{\mathrm{R}}] : |\mathcal{R}|= {K_{\mathrm{R}}\mu_{\mathrm{R}}} } }.
\end{equation}
Note that each subfile contains $\frac{F}{\binom{K_{\mathrm{T}}}{K_{\mathrm{T}}\mu_{\mathrm{T}}}\binom{K_{\mathrm{R}}}{K_{\mathrm{R}}\mu_{\mathrm{R}}}}$ packets.
Each transmitter $\text{Tx}_i$ stores  subfiles given by $\mathcal{P}_i = \{ \mathcal{W}_{n,\mathcal{T},\mathcal{R}} : i \in \mathcal{T} \}$,
while each receiver $\text{Rx}_j$ stores subfiles given by $\mathcal{U}_j = \{ \mathcal{W}_{n,\mathcal{T},\mathcal{R}} : j \in \mathcal{R} \}$.
It is easy to verify that such placement strategy satisfies the memory size constraints at both transmitters and
receivers, and that each receiver caches $\mu_{\mathrm{R}}F$ packets from each file.
\subsubsection{Delivery Phase} \label{sec:delivery_centralized}
During the delivery phase, each receiver $\text{Rx}_j$ requests for a file $\mathcal{W}_{d_j}$.
As $\text{Rx}_j$ has all the subfiles
$\mathcal{W}_{d_j, \mathcal{T},\mathcal{R}}$ with $j \in \mathcal{R}$ cached in its memory,
it only requires the remaining subfiles given by $\mathcal{W}_{d_j, \mathcal{T},\mathcal{R}}$ with  $j \notin \mathcal{R}$.
As shown in Section \ref{subsec:delivery_phase}, each packet $\mathbf{w}_{d_j,f}$ to be delivered is split into two subpackets, i.e. $\mathbf{w}_{d_j,f} = \big(\mathbf{w}^{(\mathrm{p})}_{d_j,f}, \mathbf{w}^{(\mathrm{n})}_{d_j,n}\big)$.
We refer to the set of P-subpackets of $\mathcal{W}_{d_j, \mathcal{T},\mathcal{R}}$
as the P-subfile $\mathcal{W}_{d_j, \mathcal{T},\mathcal{R}}^{(\mathrm{p})}$, and the
set of N-packets of $\mathcal{W}_{d_j, \mathcal{T},\mathcal{R}}$
as the N-subfile $\mathcal{W}_{d_j, \mathcal{T},\mathcal{R}}^{(\mathrm{n})}$.
The P-subfiles are delivered over the P-subchannel using the linear scheme
in \cite{Naderializadeh2017}.
On the other hand, the N-subfiles are delivered over the N-subchannel using the original coded-multicasting scheme in \cite{Maddah-Ali2014},
with the difference that superposition of coded N-subpackets over the air is used instead of XoR operations before encoding,
as the latter is infeasible due to the distributed nature of transmitters.
Decoding of subpackets at the receivers is carried out after taking the appropriate linear combinations, e.g. see
(\ref{eq:linear_comb_P})  and (\ref{eq:linear_comb_N}).
Each $\text{Rx}_j$ retrieves all missing P-subfiles and N-subfile and hence the file $\mathcal{W}_{d_j}$ is recovered.
\subsubsection{Achievable One-Shot Linear DoF}
Since each user has $\mu_{\mathrm{R}}F$ packets from each file stored in its cache memory, a total of
$K_{\mathrm{R}}F(1-\mu_{\mathrm{R}}) $ packets are delivered during the delivery phase, split into
$K_{\mathrm{R}}F(1-\mu_{\mathrm{R}}) $ P-subpackets and $K_{\mathrm{R}}F(1-\mu_{\mathrm{R}})$ N-subpackets
delivered over the P-subchannel and N-subchannel, respectively.
In what follows, we denote ${K_{\mathrm{R}}\mu_{\mathrm{R}}}$ and ${K_{\mathrm{T}}\mu_{\mathrm{T}}}$ by $m_{\mathrm{C},\mathrm{R}}$
and $m_{\mathrm{C},\mathrm{T}}$ respectively.
From \cite{Naderializadeh2017}, we know that $\min\{m_{\mathrm{C},\mathrm{T}}+m_{\mathrm{C},\mathrm{R}}, K_{\mathrm{R}}  \}$ P-subpackets are delivered in each P-block, and hence
\begin{equation}
\nonumber
H_{\mathrm{C}}^{(\mathrm{p})}=\frac{K_{\mathrm{R}}F(1-\mu_{\mathrm{R}}) }{\min\{m_{\mathrm{C},\mathrm{T}}+m_{\mathrm{C},\mathrm{R}}, K_{\mathrm{R}}  \}}.
\end{equation}
On the other, we know from \cite{Maddah-Ali2014} that $\min\{1+m_{\mathrm{C},\mathrm{R}}, K_{\mathrm{R}}  \}$ N-subpackets are
delivered in each N-block. Therefore, we obtain
\begin{equation}
\nonumber
H_{\mathrm{C}}^{(\mathrm{n})}=\frac{K_{\mathrm{R}}F(1-\mu_{\mathrm{R}})}{\min\{1+m_{\mathrm{C},\mathrm{R}}, K_{\mathrm{R}}  \}}.
\end{equation}
It follows that the delivery time in time-slot is given by $ H_{\mathrm{C}} = \max \Big \{ \frac{q}{\alpha} H_{\mathrm{C}}^{(\mathrm{p})}, \frac{\bar{q}}{\bar{\alpha}} H_{\mathrm{C}}^{(\mathrm{n})} \Big \}$.
Next, we choose the splitting ratio $q$ as follow:
\begin{equation}
\nonumber
q=\frac{\alpha \cdot \min \{m_{\mathrm{C},\mathrm{T}}+m_{\mathrm{C},\mathrm{R}},K_{\mathrm{R} } \}}{  \alpha \cdot \min \{ m_{\mathrm{C},\mathrm{T}}+m_{\mathrm{C},\mathrm{R}},K_{\mathrm{R} } \}  + \bar{\alpha} \cdot \min \{1+m_{\mathrm{C},\mathrm{R}},K_{\mathrm{R}}\} }.
\end{equation}
It can be verified that the above splitting ratio satisfies
$\frac{q}{\alpha} H_{\mathrm{C}}^{(\mathrm{p})} = \frac{\bar{q}}{\bar{\alpha}} H_{\mathrm{C}}^{(\mathrm{n})}$.
This value of $q$ minimizes the duration of the communication which in turn maximizes the achievable DoF.
Note that $q$ increases with $\alpha$, due to the fact that a larger $\alpha$ implies that
the P-subchannel occupies a larger fraction of the bandwidth, hence carrying larger portions of each packet.
As one may anticipate, we obtain $q = 0$ and $q = 1$ at the two extremes $\alpha =0$
and $\alpha = 1$, respectively.
With such value of $q$ we obtain
\begin{equation}
\label{eq:centralized_delivery_time}
H_{\mathrm{C}} = \frac{K_{\mathrm{R}}F(1-\mu_{\mathrm{R}})}{ \alpha \cdot \min \{m_{\mathrm{C},\mathrm{T}}+m_{\mathrm{C},\mathrm{R}},K_{\mathrm{R}} \} + \bar{\alpha} \cdot  \min \{1+m_{\mathrm{C},\mathrm{R}},K_{\mathrm{R}}\}  }.
\end{equation}
From \eqref{eq:centralized_delivery_time} and the fact that a total of $K_{\mathrm{R}}F(1-\mu_{\mathrm{R}}) $ packets are delivered during the delivery phase, the result in \eqref{eq:achievable_DoF_theo_cen} directly follows.
This concludes the proof of achievability.
\subsection{Converse of Theorem \ref{theorem_cen}}
To prove order optimality, we first derive an upper bound for the one-shot linear DoF.
\begin{lemma}
	\label{theorem_outerbound}
	For the cache-aided wireless network described in Section \ref{sec:problem_setting},
	the one-shot linear DoF of the network, defined in \eqref{eq:network_one_shot_linear_DoF}, is bounded above as
	\begin{equation}
	\label{eq:outerbound}
	\mathsf{DoF}_{\mathrm{L}}^*(\mu_{\mathrm{T}},\mu_{\mathrm{R}},\alpha) \leq \alpha \cdot \min \Big \{ \frac{K_{\mathrm{T}}\mu_{\mathrm{T}}+K_{\mathrm{R}}\mu_{\mathrm{R}}}{1-\mu_{\mathrm{R}}}, K_{\mathrm{R} } \Big \}  +\bar{\alpha} \cdot \min \Big \{\frac{1+K_{\mathrm{R}}\mu_{\mathrm{R}}}{1-\mu_{\mathrm{R}}},K_{\mathrm{R}}\Big \}.
	\end{equation}
\end{lemma}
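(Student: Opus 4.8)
The plan is to establish \eqref{eq:outerbound} by combining a per-subchannel converse with two ingredients that go beyond the argument of \cite{Naderializadeh2017}: a decoupling of the P- and N-subchannels, and an optimization over the splitting ratio $q$. For a fixed caching realization and splitting ratio, the delivery time is $H = \max\{\frac{q}{\alpha}H^{(\mathrm{p})}, \frac{\bar{q}}{\bar{\alpha}}H^{(\mathrm{n})}\}$, so that both $H \geq \frac{q}{\alpha}H^{(\mathrm{p})}$ and $H \geq \frac{\bar{q}}{\bar{\alpha}}H^{(\mathrm{n})}$ hold simultaneously. First I would exploit the separability of the scheme: since coded P-subpackets (resp.\ N-subpackets) are beamformed and decoded using only P-subchannel (resp.\ N-subchannel) observations together with the corresponding cached subpackets --- see \eqref{eq:linear_comb_P} and \eqref{eq:linear_comb_N} --- the restriction of any feasible joint scheme to a single subchannel is itself a feasible one-shot linear scheme for a stand-alone coded-caching problem with the same normalized cache sizes $\mu_{\mathrm{T}}$ and $\mu_{\mathrm{R}}$ (each receiver still caches a fraction $\mu_{\mathrm{R}}$ of both the P-subpackets and the N-subpackets). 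This is what licenses the decoupling.

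Next, for each decoupled subchannel I would run a \cite{Naderializadeh2017}-type converse: formulate an integer program bounding the number of (sub)packets deliverable in a single block, then relax worst-case demands to average demands. For the P-subchannel, perfect CSIT permits zero-forcing, and the argument of \cite{Naderializadeh2017} yields $|\mathcal{D}|/H^{(\mathrm{p})} \leq \min\{(K_{\mathrm{T}}\mu_{\mathrm{T}}+K_{\mathrm{R}}\mu_{\mathrm{R}})/(1-\mu_{\mathrm{R}}),\, K_{\mathrm{R}}\} =: U^{(\mathrm{p})}$. For the N-subchannel, I would argue that the absence of CSIT annihilates all DoF gains from transmitter cooperation (in the spirit of \cite{Piovano2019}), so that no zero-forcing is possible and the $K_{\mathrm{T}}\mu_{\mathrm{T}}$ contribution collapses to a single effective transmitter; re-running the same integer-program-plus-averaging argument with this modified per-block constraint gives $|\mathcal{D}|/H^{(\mathrm{n})} \leq \min\{(1+K_{\mathrm{R}}\mu_{\mathrm{R}})/(1-\mu_{\mathrm{R}}),\, K_{\mathrm{R}}\} =: U^{(\mathrm{n})}$.

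Finally, I would combine the two lower bounds on $H$ with the two per-subchannel bounds. For every fixed $q\in(0,1)$ this gives
\begin{equation}
\nonumber
\frac{|\mathcal{D}|}{H} \leq \min\left\{\frac{\alpha\,U^{(\mathrm{p})}}{q},\ \frac{\bar{\alpha}\,U^{(\mathrm{n})}}{\bar{q}}\right\}.
\end{equation}
Taking the supremum over $q$ required by \eqref{eq:network_one_shot_linear_DoF}, the right-hand side is the maximum over $q$ of the minimum of a strictly decreasing and a strictly increasing function of $q$. This maximum is attained at the crossing point $q^{\star} = \alpha U^{(\mathrm{p})}/(\alpha U^{(\mathrm{p})}+\bar{\alpha}U^{(\mathrm{n})})$ and equals $\alpha U^{(\mathrm{p})} + \bar{\alpha}U^{(\mathrm{n})}$, which is precisely \eqref{eq:outerbound}. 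The degenerate cases $q\in\{0,1\}$ and $\alpha\in\{0,1\}$ would be verified directly and are dominated by the same expression.

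The hard part will be the N-subchannel converse: making rigorous, \emph{within} the one-shot linear framework, the claim that no CSIT removes the cooperation gain and reduces the effective transmitter-side contribution from $K_{\mathrm{T}}\mu_{\mathrm{T}}$ to $1$. This demands adapting the per-block rank/dimension constraint of the integer program to the no-CSIT setting --- rather than merely quoting a DoF statement --- and then checking that the worst-case-to-average relaxation carries over unchanged. By comparison, the decoupling is conceptually light once separability is invoked, and the concluding $q$-optimization is an elementary max--min computation.
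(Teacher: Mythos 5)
Your proposal is correct and follows essentially the same route as the paper's Appendix~A proof: per-block feasibility constraints \`a la \cite[Lem.~3]{Naderializadeh2017} (with the N-subchannel constraint $L^{(\mathrm{n})} \leq \min_l |\mathcal{R}_l|+1$ obtained by noting that generic channels with no CSIT make zero-forcing impossible almost surely), a worst-case-to-average-demands relaxation, decoupling of the two subchannels (the paper does this inside a joint integer program via $\E_{\mathbf{d}}\max\{\cdot\} \geq \max\{\E_{\mathbf{d}}(\cdot)\}$ and independent per-subchannel cache optimization, rather than restricting the scheme upfront as you do, but the effect is identical), and finally the equalizing choice of $q$ yielding $\alpha U^{(\mathrm{p})} + \bar{\alpha} U^{(\mathrm{n})}$. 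The hard step you flag --- rigorously establishing the N-subchannel per-block bound and re-running the averaging argument --- is exactly what the paper does, adapting the proof of \cite[Lem.~3]{Naderializadeh2017} and mimicking the Cauchy--Schwarz argument of \cite[Lem.~4]{Naderializadeh2017} to get $\overline{H^{(\mathrm{n})}}^{(q)} \geq K_{\mathrm{R}}F(1-\mu_{\mathrm{R}})^2/(1+K_{\mathrm{R}}\mu_{\mathrm{R}})$.
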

The proof of Lemma \ref{theorem_outerbound} is relegated to Appendix \ref{sec:outerbound}.
It is easily seen that by denoting the right-hand side of \eqref{eq:outerbound} as $\mathsf{DoF}_{\mathrm{L}, \mathrm{ub}}(\mu_{\mathrm{T}},\mu_{\mathrm{R}},\alpha)$, we have
\begin{equation}
\label{eq:outerbound_lin}
\mathsf{DoF}_{\mathrm{L}, \mathrm{ub}} (\mu_{\mathrm{T}},\mu_{\mathrm{R}},\alpha) = \alpha \cdot  \mathsf{DoF}_{\mathrm{L}, \mathrm{ub}}(\mu_{\mathrm{T}},\mu_{\mathrm{R}},1) + \bar{\alpha} \cdot \mathsf{DoF}_{\mathrm{L}, \mathrm{ub}}(\mu_{\mathrm{T}},\mu_{\mathrm{R}},0).
\end{equation}
The expression in \eqref{eq:outerbound_lin} proofs useful when proving the order-optimality parts of Theorem \ref{theorem_cen}  and Theorem \ref{theorem_decen}.
We now proceed to prove the order-optimality part of Theorem \ref{theorem_cen}.

From \cite{Naderializadeh2017}, we know that for $\alpha=1$, we have
$\mathsf{DoF}_{\mathrm{L}, \mathrm{ub}}(\mu_{\mathrm{T}},\mu_{\mathrm{R}},1) / \mathsf{DoF}_{\mathrm{L}, \mathrm{C}}(\mu_{\mathrm{T}},\mu_{\mathrm{R}},1) \leq 2$.
We show that when $\alpha=0$, we also have $\mathsf{DoF}_{\mathrm{L}, \mathrm{ub}}(\mu_{\mathrm{T}},\mu_{\mathrm{R}},0) / \mathsf{DoF}_{\mathrm{L}, \mathrm{C}}(\mu_{\mathrm{T}},\mu_{\mathrm{R}},0) \leq 2$.
Consider the two cases:
\begin{enumerate}
	\item $\mu_{\mathrm{R}} \leq \frac{1}{2}$: In this case, from (\ref{eq:outerbound}) in Lemma \ref{theorem_outerbound}  we obtain
	\begin{equation}
\nonumber
	\begin{split}
	\mathsf{DoF}_{\mathrm{L}, \mathrm{ub}}(\mu_{\mathrm{T}},\mu_{\mathrm{R}}, 0) & =  \min \Big \{ \frac{1+K_{\mathrm{R}}\mu_{\mathrm{R}}}{1-\mu_{\mathrm{R}}}, K_{\mathrm{R} } \Big \} \\
	& \leq  \min \Big \{ \frac{1+K_{\mathrm{R}}\mu_{\mathrm{R}}}{1-1/2}, K_{\mathrm{R} } \Big \} \\
	& \leq 2 \cdot \mathsf{DoF}_{\mathrm{L}, \mathrm{C}}(\mu_{\mathrm{T}},\mu_{\mathrm{R}}, 0).
	\end{split}
	\end{equation}	
	\item $\mu_{\mathrm{R}} > \frac{1}{2}$:
	In this case, the achievability part implies that
	\begin{equation}
\nonumber
	\begin{split}
	\mathsf{DoF}_{\mathrm{L}, \mathrm{C}}(\mu_{\mathrm{T}},\mu_{\mathrm{R}}, 0) &= \min \{1+K_{\mathrm{R}}\mu_{\mathrm{R}},K_{\mathrm{R}}\} \\
	& > \min \{1 + K_{\mathrm{R}}/2 , K_{\mathrm{R}}\} > \frac{K_{\mathrm{R}}}{2}.
	\end{split}
	\end{equation}
	Since $\mathsf{DoF}_{\mathrm{L}, \mathrm{ub}}(\mu_{\mathrm{T}},\mu_{\mathrm{R}}, 0) \leq K_{\mathrm{R}}$, we obtain
$\mathsf{DoF}_{\mathrm{L}, \mathrm{ub}}(\mu_{\mathrm{T}},\mu_{\mathrm{R}}, 0)\leq 2 \cdot \mathsf{DoF}_{\mathrm{L}, \mathrm{C}}(\mu_{\mathrm{T}},\mu_{\mathrm{R}}, 0)$.	
\end{enumerate}
Now we extend the above to any $\alpha \in [0,1]$.
From the two above constant factor inequalities for $\alpha = 1$ and $\alpha = 0$, and
the decomposition of the lower bound and the upper bound in (\ref{eq:achievable_DoF_theo_cen_lin}) and  (\ref{eq:outerbound_lin}),
we obtain
\begin{equation}
\nonumber
\begin{split}
  \mathsf{DoF}_{\mathrm{L}, \mathrm{ub}}(\mu_{\mathrm{T}},\mu_{\mathrm{R}},\alpha) & =  \alpha \cdot  \mathsf{DoF}_{\mathrm{L}, \mathrm{ub}}(\mu_{\mathrm{T}},\mu_{\mathrm{R}},1) + \bar{\alpha} \cdot  \mathsf{DoF}_{\mathrm{L}, \mathrm{ub}}(\mu_{\mathrm{T}},\mu_{\mathrm{R}},0) \\
  &  \leq 2 \alpha \cdot  \mathsf{DoF}_{\mathrm{L}, \mathrm{C}}(\mu_{\mathrm{T}},\mu_{\mathrm{R}},1) + 2  \bar{\alpha} \cdot  \mathsf{DoF}_{\mathrm{L}, \mathrm{C}}(\mu_{\mathrm{T}},\mu_{\mathrm{R}},0) \\ & =  2 \cdot \mathsf{DoF}_{\mathrm{L}, \mathrm{C}}(\mu_{\mathrm{T}},\mu_{\mathrm{R}},\alpha).
\end{split}
\end{equation}
This completes the proof of Theorem \ref{theorem_cen}.
\section{Decentralized Setting: Proof of Theorem \ref{theorem_decen}}
\label{sec:decentralized_setting}
In this section, we present a proof of Theorem \ref{theorem_decen} starting with the achievability and then the converse.
\subsection{Achievability of Theorem \ref{theorem_decen}}
\subsubsection{Placement Phase}
As in the centralized setting, the placement phase does not depend on $\alpha$.
Each file $\mathcal{W}_n, n \in [N]$, is partitioned into  $\binom{K_{\mathrm{T}}}{K_{\mathrm{T}}\mu_{\mathrm{T}}}$ disjoint subfiles of equal size, denoted by
$\mathcal{W}_n=\{\mathcal{W}_{n,\mathcal{T}}\}_{\substack{\mathcal{T} \subseteq  [K_{\mathrm{T}}] : |\mathcal{T}|= {K_{\mathrm{T}}\mu_{\mathrm{T}}} }}$, where each subfile contains $\frac{F}{\binom{K_{\mathrm{T}}}{K_{\mathrm{T}}\mu_{\mathrm{T}}}}$ packets. Each transmitter $\text{Tx}_i$ then stores subfile given by $\mathcal{P}_i = \{ \mathcal{W}_{n,\mathcal{T}} : i \in \mathcal{T} \}$.
On the other end, placement at the receivers is done in a decentralized manner similar to \cite{Maddah-Ali2015a}.
In particular, each receiver $\text{Rx}_i$ stores $\mu_{\mathrm{R}}F$ packets from each
file, chosen uniformly at random.
Therefore, each packet of each file is stored in some subset of users $\tilde{\mathcal{R}} \subseteq [K_{\mathrm{R}}]$,
where $|\tilde{\mathcal{R}}| \in \{0,1,\dots,K_{\mathrm{R}}\}$.
For any $n \in [N]$, we use $\mathcal{W}_{n,\mathcal{T},\tilde{\mathcal{R}}}$ to denote the packets of file $\mathcal{W}_n$ which are stored by transmitters in $ \mathcal{T}$ and receivers in $\tilde{\mathcal{R}}$, where $\mathcal{W}_{n,\mathcal{T},\tilde{\mathcal{R}}}$ is referred to as a
mini-subfile henceforth.
It follows that $\mathcal{W}_{n}$ can be reconstructed from  $\big\{\mathcal{W}_{n,\mathcal{T},\tilde{\mathcal{R}}} : \mathcal{T} \subseteq [K_{\mathrm{T}}], |\mathcal{T}|= K_{\mathrm{T}}\mu_{\mathrm{T}}, \tilde{\mathcal{R}} \subseteq [K_{\mathrm{R}}] \big\}$.
\subsubsection{Delivery Phase}
Each receiver $\text{Rx}_j$ requests for a file $\mathcal{W}_{d_j}$, hence the transmitters have to deliver all mini-subfiles
$\mathcal{W}_{d_j, \mathcal{T},\tilde{\mathcal{R}}}$ with $j \notin \tilde{\mathcal{R}}$.
Each packet to be delivered is split as in the centralized case, and we use  $\mathcal{W}_{d_j, \mathcal{T}}^{(\mathrm{p})}$ (P-subfile) and $\mathcal{W}_{d_j, \mathcal{T}}^{(\mathrm{n})}$ (N-subfile) to denote the sets of P-subpackets and N-subpackets of $\mathcal{W}_{d_j, \mathcal{T}}$, respectively.
Similarly, we use $\mathcal{W}_{d_j, \mathcal{T},\tilde{\mathcal{R}}}^{(\mathrm{p})}$ (P-mini-subfile) and $\mathcal{W}_{d_j, \mathcal{T},\tilde{\mathcal{R}}}^{(\mathrm{n})}$ (N-mini-subfile) to denote the sets of P-subpackets and N-subpackets of
$\mathcal{W}_{d_j, \mathcal{T},\tilde{\mathcal{R}}}$, respectively.

The P-mini-subfiles are delivered over the P-subchannel, where the delivery takes place over $K_{\mathrm{R}}$ sub-phases indexed by $l \in \{0,1,\dots,K_{\mathrm{R}}-1\}$.
In the $l$-th sub-phase, the transmitters delivers all $\mathcal{W}_{d_j, \mathcal{T},\tilde{\mathcal{R}}}^{(\mathrm{p})}$ with $|\tilde{\mathcal{R}}|=l$.
Note that $l$ goes up to $K_{\mathrm{R}}-1$ since for $|\tilde{\mathcal{R}}|= K_{\mathrm{R}}$, the corresponding P-mini-subfiles are cached by all receivers.
For each sub-phase $l$, the delivery in the P-subchannel is reminiscent of the centralized P-subchannel delivery in Section \ref{sec:delivery_centralized}, with the difference that $m_{\mathrm{C},\mathrm{R}}$ in the centralized setting is replaced with $l$ here (i.e. smaller multicasting gain), as this sub-phase considers subfiles which are cached by exactly $l$ users.
It follows that $\min \{m_{\mathrm{C},\mathrm{T}}+l,K_{\mathrm{R}}\}$ P-subpackets are transmitted simultaneously.

On the other hand, the N-mini-subfiles are delivered over the N-subchannel using the original decentralized coded-multicasting scheme in  \cite{Maddah-Ali2015a}, while using over the air superposition instead of XoR.
Each receiver then obtains all missing mini-subfiles and recovers the demanded file.
\subsubsection{Achievable One-Shot Linear DoF}
We start be focusing on the delivery time over the P-subchannel.
Consider the $l$-th sub-phase and an arbitrary subset of users $\tilde{\mathcal{R}}$ with size $l$.
For each P-subfile $\mathcal{W}_{n,\mathcal{T}}^{(\mathrm{p})}$, $n \in [N]$, stored by some subset $\mathcal{T}$ of users),
the probability that any of its P-subpackets is stored by any of the users in $\tilde{\mathcal{R}}$ is given by $\mu_{\mathrm{R}}$,
as each such user caches $\mu_{\mathrm{R}}F$ random P-subpackets from each file.
Hence, the probability that a P-subpacket is stored by exactly the $l$ users of  $\tilde{\mathcal{R}}$ is given by $\mu_{\mathrm{R}}^{l}(1-\mu_{\mathrm{R}})^{K_{\mathrm{R}}-l}$.
It follows that the expected number of P-subpackets of $\mathcal{W}_{n,\mathcal{T}}^{(\mathrm{p})}$ stored by each user in
$\tilde{\mathcal{R}}$ is given by
$\frac{\mu_{\mathrm{R}}^{l}(1-\mu_{\mathrm{R}})^{K_{\mathrm{R}}-l}F}{\binom{K_{\mathrm{T}}}{K_{\mathrm{T}}\mu_{\mathrm{T}}}} + o(F)$ when $F \rightarrow \infty$. The term $o(F)$ is omitted henceforth.
As there is a total of $\binom{K_{\mathrm{R}}}{l}$ subsets of $l$ users, there is a total of  $\frac{\binom{K_{\mathrm{R}}}{l}\mu_{\mathrm{R}}^{l}(1-\mu_{\mathrm{R}})^{K_{\mathrm{R}}-l}F}{\binom{K_{\mathrm{T}}}{K_{\mathrm{T}}\mu_{\mathrm{T}}}}$ P-subpackets of $\mathcal{W}_{n,\mathcal{T}}^{(\mathrm{p})}$ which are cached by exactly  $l$ users.
We now proceed to calculate  number of P-subpackets of $\mathcal{W}_{d_j}^{(\mathrm{p})}$ stored by exactly $l$ users and
have to be delivered to receiver $\text{Rx}_j$.
For each $\mathcal{T}$, receiver $\text{Rx}_j$ has all P-mini-subfiles $\mathcal{W}_{d_j,\mathcal{T},\tilde{\mathcal{R}}}^{(\mathrm{p})}$, with $|\tilde{\mathcal{R}}|=l$ and $j \in \tilde{\mathcal{R}}$, cached in its memory.
Hence, $\text{Rx}_j$ already has $\frac{\binom{K_{\mathrm{R}}-1}{l-1}\mu_{\mathrm{R}}^{l}(1-\mu_{\mathrm{R}})^{K_{\mathrm{R}}-l}F}{\binom{K_{\mathrm{T}}}
{K_{\mathrm{T}}\mu_{\mathrm{T}}}}$
P-subpackets of $\mathcal{W}_{d_j,\mathcal{T}}^{(\mathrm{p})}$ which are cached by exactly $l$ users.
It follows that the number of P-subpackets of $\mathcal{W}_{d_j,\mathcal{T}}^{(\mathrm{p})}$ unavailable at $\text{Rx}_j$,
given by all P-mini-subfiles $\mathcal{W}_{d_j,\mathcal{T},\tilde{\mathcal{R}}}^{(\mathrm{p})}$ with $|\tilde{\mathcal{R}}|=l$ and $j \notin \tilde{\mathcal{R}}$, is equal to $\frac{\binom{K_{\mathrm{R}}-1}{l}\mu_{\mathrm{R}}^{l}(1-\mu_{\mathrm{R}})^{K_{\mathrm{R}}-l}F}{\binom{K_{\mathrm{T}}}{K_{\mathrm{T}}\mu_{\mathrm{T}}}}$.
Considering all possible P-subfiles $\mathcal{W}_{d_j,\mathcal{T}}^{(\mathrm{p})}$ for all $\mathcal{T}$, and as there are $K_{\mathrm{R}}$ receivers in total, the total number of P-subpackets
which are stored by exactly $l$ users and have to be delivered to all receivers in the $l$-th delivery sub-phase is given by
\begin{equation}
\nonumber
K_{\mathrm{R}} \binom{K_{\mathrm{R}}-1}{l}\mu_{\mathrm{R}}^{l}(1-\mu_{\mathrm{R}})^{K_{\mathrm{R}}-l}F.
\end{equation}
We recall that in the $l$-th delivery sub-phase, a total of $\min \{m_{\mathrm{C},\mathrm{T}}+l,K_{\mathrm{R}}\}$ P-subpackets are delivered simultaneously over the P-subchannel. By summing over all $K_{\mathrm{R}}$ sub-phases, we obtain
\begin{equation}
\nonumber
H^{(\mathrm{p})}_{\mathrm{D}} = K_{\mathrm{R}} \sum_{l=0}^{K_{\mathrm{R}}-1}{\frac{ \binom{K_{\mathrm{R}}-1}{l}\mu_{\mathrm{R}}^{l}(1-\mu_{\mathrm{R}})^{K_{\mathrm{R}}-l}F}{\min \{m_{\mathrm{C},\mathrm{T}}+l,K_{\mathrm{R}}\}}}.
\end{equation}

Moving on to the N-subchannel, as the delivery of the N-mini-subfiles follows the coded-multicasting scheme of \cite{Maddah-Ali2015a}, it follows that
\begin{equation}
\nonumber
H_{\mathrm{D}}^{(\mathrm{n})} = K_{\mathrm{R}} \sum_{l=0}^{K_{\mathrm{R}}-1}{\frac{ \binom{K_{\mathrm{R}}-1}{l}\mu_{\mathrm{R}}^{l}(1-\mu_{\mathrm{R}})^{K_{\mathrm{R}}-l}F}{1+l}} = \frac{1- \mu_{\mathrm{R}}}{\mu_{\mathrm{R}}} \left( 1 - (1-\mu_{\mathrm{R}})^{K_{\mathrm{R}}} \right)F.
\end{equation}
From the above, it follows that the delivery time is given by $ H_{\mathrm{D}} = \max \Big \{ \frac{q}{\alpha} H^{(\mathrm{p})}_{\mathrm{D}}, \frac{\bar{q}}{\bar{\alpha}} H^{(\mathrm{n})}_{\mathrm{D}} \Big \}$ time-slots.
As for the centralized case, we choose $q$ such that
$\frac{q}{\alpha} H^{(\mathrm{p})}_{\mathrm{D}} = \frac{\bar{q}}{\bar{\alpha}} H^{(\mathrm{n})}_{\mathrm{D}}$, which in turn minimizes the duration of the communication and hence maximizes the achievable DoF.
Hence, we choose
\begin{equation}
\nonumber
q= \frac{\alpha \cdot \frac{1}{\sum_{l=0}^{K_{\mathrm{R}}-1}
		\frac{\binom{K_{\mathrm{R}}-1}{l} \mu_{\mathrm{R}}^l (1-\mu_{\mathrm{R}})^{K_{\mathrm{R}}-l-1}}{\min \{ K_{\mathrm{R}} ,K_{\mathrm{T}}{\mu}_{\mathrm{T}}+l \} }}}{
	\alpha \cdot \frac{1}{\sum_{l=0}^{K_{\mathrm{R}}-1} \frac{\binom{K_{\mathrm{R}}-1}{l} \mu_{\mathrm{R}}^l (1-\mu_{\mathrm{R}})^{K_{\mathrm{R}}-l-1}}{\min \{ K_{\mathrm{R}} ,K_{\mathrm{T}}{\mu}_{\mathrm{T}}+l \}}} +	\bar{\alpha} \cdot \frac{K_{\mathrm{R}}\mu_{\mathrm{R}}}{1-(1-\mu_{\mathrm{R}})^{K_{\mathrm{R}}}}
}.
\end{equation}
From the above choice of $q$ and the values of $H^{(\mathrm{p})}_{\mathrm{D}}$ and $H^{(\mathrm{p})}_{\mathrm{C}}$,
it follows that
\begin{equation}
\label{eq:decentralized_delivery_time}
H_{\mathrm{D}} = \frac{K_{\mathrm{R}}F(1-\mu_{\mathrm{R}})}{ \alpha \cdot \frac{1}{\sum_{l=0}^{K_{\mathrm{R}}-1} \frac{\binom{K_{\mathrm{R}}-1}{l} \mu_{\mathrm{R}}^l (1-\mu_{\mathrm{R}})^{K_{\mathrm{R}}-l-1}}{\min \{m_{\mathrm{C},\mathrm{T}}+l,K_{\mathrm{R}}\}}} + \bar{\alpha} \frac{K_{\mathrm{R}}\mu_{\mathrm{R}}}{   \ 1 - (1-\mu_{\mathrm{R}})^{K_{\mathrm{R}}} }.
	}
\end{equation}
As a total of $K_{\mathrm{R}}F(1-\mu_{\mathrm{R}}) $ packets are delivered during the delivery phase, the result in \eqref{eq:achievable_DoF_theo_decen} directly follows from \eqref{eq:decentralized_delivery_time}, which concludes the proof of achievability.
\subsection{Converse of Theorem \ref{theorem_decen}}
In this part, we prove \eqref{eq:converse_theo_decen} through the following steps:
\begin{itemize}
\item The first step of the proof is to show that when $\alpha = 0$, we have the constant factor
\begin{equation} \label{eq:ineq_lemma_con_dece_1}
\frac{\mathsf{DoF}_{\mathrm{L}, \mathrm{ub}}(\mu_{\mathrm{T}},\mu_{\mathrm{R}},0)} {\mathsf{DoF}_{\mathrm{L}, \mathrm{D}}(\mu_{\mathrm{T}},\mu_{\mathrm{R}},0)} \leq 3.
\end{equation}
\item The following step is to show that the one-shot linear DoF ratio in \eqref{eq:ineq_lemma_con_dece_1}, with $\alpha = 0$, is an upper bound for the ratio  with $\alpha = 1$, i.e.
\begin{equation} \label{eq:ineq_lemma_con_dece_2}
\frac{\mathsf{DoF}_{\mathrm{L}, \mathrm{ub}}(\mu_{\mathrm{T}},\mu_{\mathrm{R}},1)} {\mathsf{DoF}_{\mathrm{L}, \mathrm{D}}(\mu_{\mathrm{T}},\mu_{\mathrm{R}},1)} \leq
\frac{\mathsf{DoF}_{\mathrm{L}, \mathrm{ub}}(\mu_{\mathrm{T}},\mu_{\mathrm{R}},0)} {\mathsf{DoF}_{\mathrm{L}, \mathrm{D}}(\mu_{\mathrm{T}},\mu_{\mathrm{R}},0)}.
\end{equation}
\item Equipped with \eqref{eq:ineq_lemma_con_dece_1} and \eqref{eq:ineq_lemma_con_dece_2}, we proceed ad follows:
\begin{equation}
\nonumber
\begin{split}
\mathsf{DoF}_{\mathrm{L}, \mathrm{ub}}(\mu_{\mathrm{T}},\mu_{\mathrm{R}},\alpha) & =  \alpha \cdot  \mathsf{DoF}_{\mathrm{L}, \mathrm{ub}}(\mu_{\mathrm{T}},\mu_{\mathrm{R}},1) + \bar{\alpha} \cdot \mathsf{DoF}_{\mathrm{L}, \mathrm{ub}}(\mu_{\mathrm{T}},\mu_{\mathrm{R}},0) \\
&  \leq 3  \alpha \cdot  \mathsf{DoF}_{\mathrm{L}, \mathrm{D}}(\mu_{\mathrm{T}},\mu_{\mathrm{R}},1) + 3  \bar{\alpha} \cdot  \mathsf{DoF}_{\mathrm{L}, \mathrm{D}}(\mu_{\mathrm{T}},\mu_{\mathrm{R}},0) \\ & =  3 \cdot \mathsf{DoF}_{\mathrm{L}, \mathrm{D}}(\mu_{\mathrm{T}},\mu_{\mathrm{R}},\alpha).
\end{split}
\end{equation}
\end{itemize}
It can be seen that the last of the three above steps concludes the proof of Theorem \ref{theorem_decen}.
Therefore, the remainder of this part is dedicated to proving the inequalities in \eqref{eq:ineq_lemma_con_dece_1} and \eqref{eq:ineq_lemma_con_dece_2}.
\subsubsection{Proof of \eqref{eq:ineq_lemma_con_dece_1}}
First, we recall that $\mathsf{DoF}_{\mathrm{L}, \mathrm{D}}(\mu_{\mathrm{T}},\mu_{\mathrm{R}},0) = \frac{K_{\mathrm{R}}\mu_{\mathrm{R}}}{1-(1-\mu_{\mathrm{R}})^{K_{\mathrm{R}}}}$.
Combining this with $(1-\mu_{\mathrm{R}})^{K_{\mathrm{R}}} \geq 0$ and the Bernoulli inequality
$(1-\mu_{\mathrm{R}})^{K_{\mathrm{R}}} \geq 1- K_{\mathrm{R}} \mu_{\mathrm{R}} $, we obtain
\begin{equation}
\label{eq:DoF_LD_LB_alph_0}
\mathsf{DoF}_{\mathrm{L}, \mathrm{D}}(\mu_{\mathrm{T}},\mu_{\mathrm{R}},0) \geq \max
\Big \{ K_{\mathrm{R}}\mu_{\mathrm{R}}, 1 \Big \}.
\end{equation}	
For the trivial case of $K_{\mathrm{R}}=1$, it is easy to see that
$\mathsf{DoF}_{\mathrm{L}, \mathrm{D}}(\mu_{\mathrm{T}},\mu_{\mathrm{R}},0) = \mathsf{DoF}_{\mathrm{L}, \mathrm{ub}}(\mu_{\mathrm{T}},\mu_{\mathrm{R}},0)=1$.
For the case of $K_{\mathrm{R}}=2$, we have $\mathsf{DoF}_{\mathrm{L}, \mathrm{D}}(\mu_{\mathrm{T}},\mu_{\mathrm{R}},0) \geq 1$ from \eqref{eq:DoF_LD_LB_alph_0} and $\mathsf{DoF}_{\mathrm{L}, \mathrm{ub}}(\mu_{\mathrm{T}},\mu_{\mathrm{R}},0) \leq 2$ from \eqref{eq:outerbound}
in Lemma \ref{theorem_outerbound}.
Hence for this case, \eqref{eq:ineq_lemma_con_dece_1} holds.
Similarly, for the case $K_{\mathrm{R}}=3$, we have
$\mathsf{DoF}_{\mathrm{L}, \mathrm{D}}(\mu_{\mathrm{T}},\mu_{\mathrm{R}},0) \geq 1$ and
$\mathsf{DoF}_{\mathrm{L}, \mathrm{ub}}(\mu_{\mathrm{T}},\mu_{\mathrm{R}},0) \leq 3$
from which \eqref{eq:ineq_lemma_con_dece_1} also holds.
Therefore, without loss of generality, we assume that $K_{\mathrm{R}} \geq 4$ henceforth.
We proceed by considering the following cases:
\begin{enumerate}
\item 
$ \mu_{\mathrm{R}} \leq  {1}/{K_{\mathrm{R}}}$:
For this case we have
\begin{equation}
\nonumber
\begin{split}
\mathsf{DoF}_{\mathrm{L}, \mathrm{ub}}(\mu_{\mathrm{T}},\mu_{\mathrm{R}},0) & = \min \Big \{  \frac{K_{\mathrm{R}}\mu_{\mathrm{R}}+1}{1-\mu_{\mathrm{R}}}, K_{\mathrm{R}}  \Big \} \\
& \leq \min \Big \{  \frac{1+1}{1-1/K_{\mathrm{R}}}, K_{\mathrm{R}}  \Big \} \\
& \leq \min \Big \{  \frac{8}{3}, K_{\mathrm{R}}  \Big \} \leq 3.
\end{split}
\end{equation}
Combining the above with $\mathsf{DoF}_{\mathrm{L}, \mathrm{D}}(\mu_{\mathrm{T}},\mu_{\mathrm{R}},0) \geq 1$,
we conclude that \eqref{eq:ineq_lemma_con_dece_1} holds.
\item  $\mu_{\mathrm{R}} \in (1/K_{\mathrm{R}},2/K_{\mathrm{R}}]$:
For this case, we start by defining the function
	\begin{equation}
    \nonumber
	f(\mu_{\mathrm{R}})=3 \mu_{\mathrm{R}} + \frac{1}{K_{\mathrm{R}}\mu_{\mathrm{R}}}.
	\end{equation}
	The function $f(\mu_{\mathrm{R}})$ is convex in $[0,\infty)$, and hence
    $f(\mu_{\mathrm{R}}) \leq \max \big( f(\frac{1}{K_{\mathrm{R}}}), f (\frac{2}{K_{\mathrm{R}}} ) \big)$ over the interval of interest  $\mu_{\mathrm{R}} \in (1/K_{\mathrm{R}},2/K_{\mathrm{R}}]$.
    Moreover, it is easy to verify that $f(\frac{1}{K_{\mathrm{R}}}) =  \frac{3}{K_{\mathrm{R}}} + 1 \leq \frac{7}{4}$
	and $f(\frac{2}{K_{\mathrm{R}}}) =  \frac{6}{K_{\mathrm{R}}} + \frac{1}{2} \leq 2$.
	Therefore, $f(\mu_{\mathrm{R}}) = 3 \mu_{\mathrm{R}} + \frac{1}{K_{\mathrm{R}}\mu_{\mathrm{R}}} \leq 2 $ for all $K_{\mathrm{R}}$ and $\mu_{\mathrm{R}}$ of interest.
Combining this with \eqref{eq:DoF_LD_LB_alph_0} and \eqref{eq:outerbound}, we obtain
	\begin{equation}
\nonumber
	\begin{split}
	\frac{\mathsf{DoF}_{\mathrm{L}, \mathrm{ub}}(\mu_{\mathrm{T}},\mu_{\mathrm{R}},0)}{\mathsf{DoF}_{\mathrm{L}, \mathrm{D}}(\mu_{\mathrm{T}},\mu_{\mathrm{R}},0)} & \leq
\min \Big \{  \frac{K_{\mathrm{R}}\mu_{\mathrm{R}}+1}{1-\mu_{\mathrm{R}}}, K_{\mathrm{R}}  \Big \}
\cdot \frac{1}{\max
 \{ K_{\mathrm{R}}\mu_{\mathrm{R}}, 1 \}} \\
&  \leq \Big(1 + \frac{1}{K_{\mathrm{R}}\mu_{\mathrm{R}}} \Big )\cdot \frac{1}{1-\mu_{\mathrm{R}}} \leq 3
	\end{split}
	\end{equation}
where the last inequality is equivalent to $3 \mu_{\mathrm{R}} + \frac{1}{K_{\mathrm{R}}\mu_{\mathrm{R}}} \leq 2$.
Therefore, \eqref{eq:ineq_lemma_con_dece_1} holds in this case.
\item $\mu_{\mathrm{R}} \in (2/K_{\mathrm{R}},1/2]$: For this case we have
	\begin{equation}
\nonumber
	\begin{split}
	\mathsf{DoF}_{\mathrm{L}, \mathrm{ub}}(\mu_{\mathrm{T}},\mu_{\mathrm{R}},0) & = \min \Big \{  \frac{K_{\mathrm{R}}\mu_{\mathrm{R}}+1}{1-\mu_{\mathrm{R}}}, K_{\mathrm{R}}  \Big \} \\
	& \leq \min \Big \{  \frac{K_{\mathrm{R}}\mu_{\mathrm{R}}+1}{1-1/2}, K_{\mathrm{R}} \Big  \} \\
	& = \min \Big \{ 2  K_{\mathrm{R}}\mu_{\mathrm{R}}+ 2, K_{\mathrm{R}} \Big  \} \\
	& \leq \min \Big \{ 3  K_{\mathrm{R}}\mu_{\mathrm{R}}, K_{\mathrm{R}} \Big  \}.
	\end{split}
	\end{equation}
Combining the above with $\mathsf{DoF}_{\mathrm{L}, \mathrm{D}}(\mu_{\mathrm{T}},\mu_{\mathrm{R}},0) \geq  K_{\mathrm{R}}\mu_{\mathrm{R}}$,
it follows that \eqref{eq:ineq_lemma_con_dece_1} holds.
\item $\mu_{\mathrm{R}} > 1/2$: For this last case we have
$\mathsf{DoF}_{\mathrm{L}, \mathrm{D}}(\mu_{\mathrm{T}},\mu_{\mathrm{R}},0) \geq \max \{ K_{\mathrm{R}}\mu_{\mathrm{R}}, 1 \} >  K_{\mathrm{R}}/2$. Combining this with $\mathsf{DoF}_{\mathrm{L}, \mathrm{ub}}(\mu_{\mathrm{T}},\mu_{\mathrm{R}},0) \leq K_{\mathrm{R}} $, it follows that \eqref{eq:ineq_lemma_con_dece_1} holds, hence concluding the proof.
\end{enumerate}
\subsubsection{Proof of \eqref{eq:ineq_lemma_con_dece_2}}
From \eqref{eq:achievable_DoF_theo_decen} and \eqref{eq:outerbound}, the inequality in \eqref{eq:ineq_lemma_con_dece_2} can be expressed as
	\begin{equation} \label{eq:ineq_to_prove}
	\frac{ \min \Big \{   \frac{1+ K_{\mathrm{R}} \mu_{\mathrm{R}} }{1-\mu_{\mathrm{R}}} , K_{\mathrm{R}}      \Big \} }{ \left(
	\sum_{m=0}^{K_{\mathrm{R}}-1}	\frac{ \binom{K_{\mathrm{R}}-1}{m}
	\mu_{\mathrm{R}}^m (1- \mu_{\mathrm{R}})^{K_{\mathrm{R}}-1-m}}{1+m}
		\right)^{-1}  } \geq
	\frac{ \min \Big \{   \frac{K_{\mathrm{T}} \mu_{\mathrm{T}}+ K_{\mathrm{R}} \mu_{\mathrm{R}} }{1-\mu_{\mathrm{R}}} , K_{\mathrm{R}}      \Big \} }{ \left(
		\sum_{m=0}^{K_{\mathrm{R}}-1}	\frac{ \binom{K_{\mathrm{R}}-1}{m}
			\mu_{\mathrm{R}}^m (1- \mu_{\mathrm{R}})^{K_{\mathrm{R}}-1-m}}{
			\min \{ K_{\mathrm{T}} \mu_{\mathrm{T}}+m ,  K_{\mathrm{R}} \} } \right)^{-1} }.
	\end{equation}
	Defining the function $J(r)$ as
	\begin{equation} \label{eq:function_J}
	J(r)= \frac{ \min \Big \{   \frac{r+ K_{\mathrm{R}} \mu_{\mathrm{R}} }{1-\mu_{\mathrm{R}}} , K_{\mathrm{R}}      \Big \} }{  \left(
		\sum_{m=0}^{K_{\mathrm{R}}-1}	\frac{ \binom{K_{\mathrm{R}}-1}{m}
			\mu_{\mathrm{R}}^m (1- \mu_{\mathrm{R}})^{K_{\mathrm{R}}-1-m}}{
			\min \{ r+m, K_{\mathrm{R}} \} } \right)^{-1} }
	\end{equation}
    it can be seen that \eqref{eq:ineq_to_prove} is equivalent to $J(1) \geq J(K_{\mathrm{T}} \mu_{\mathrm{T}}) $.
    In the following, we show that that $J(1) \geq J(r)$ for all $r \geq 1$.
	As a consequence, $J(1) \geq J(r)$ will also hold for integer values of $r$,
    hence for any $K_{\mathrm{T}}\mu_{\mathrm{T}}$ which is assumed to be integer for the decentralized setting and hence in Theorem \ref{theorem_decen} and in \eqref{eq:ineq_to_prove}.
	
	It is readily seen that for $r \geq K_{\mathrm{R}}(1-2\mu_{\mathrm{R}})$,
    the numerator in \eqref{eq:function_J} becomes $K_{\mathrm{R}}$, and the function $J(r)$ decrease with $r$.
    Therefore, without loss of generality, we only consider the interval $r \in [1,K_{\mathrm{R}}(1-2\mu_{\mathrm{R}})]$ in what follows.
    Equivalently, for any $K_{\mathrm{R}}$  and $r$, we consider values of $\mu_{\mathrm{R}}$ that satisfy $\mu_{\mathrm{R}} \leq \frac{1}{2} \left(1 - \frac{r}{K_{\mathrm{R}}}\right)$.

Next, the inequality in \eqref{eq:ineq_to_prove} is equivalently rewritten as
	\begin{equation}
\nonumber
	\frac{1+K_{\mathrm{R}} \mu_{\mathrm{R}}}{1- \mu_{\mathrm{R}}}   \sum_{m=0}^{K_{\mathrm{R}}-1} {\binom{K_{\mathrm{R}}-1}{m}} \frac{\mu_{\mathrm{R}}^m (1- \mu_{\mathrm{R}})^{K_{\mathrm{R}}-1-m}}{1+m}
	\geq
	\frac{r+K_{\mathrm{R}} \mu_{\mathrm{R}}}{1- \mu_{\mathrm{R}}}    \sum_{m=0}^{K_{\mathrm{R}}-1} {\binom{K_{\mathrm{R}}-1}{m}} \frac{\mu_{\mathrm{R}}^m (1- \mu_{\mathrm{R}})^{K_{\mathrm{R}}-1-m}}{\min \{ r+m ,  K_{\mathrm{R}} \}}.
	\end{equation}
After rearranging the terms and removing redundant factors, the above is expressed as
	\begin{equation}
\nonumber
	 \sum_{m=0}^{K_{\mathrm{R}}-1}  \frac{{1+K_{\mathrm{R}} \mu_{\mathrm{R}}}}{1+m} {\binom{K_{\mathrm{R}}-1}{m}} \left(\frac{\mu_{\mathrm{R}}}{1- \mu_{\mathrm{R}}}\right)^m
     \geq
     \sum_{m=0}^{K_{\mathrm{R}}-1}  \frac{{r+K_{\mathrm{R}} \mu_{\mathrm{R}}}}{\min \{ r+m ,  K_{\mathrm{R}} \}} {\binom{K_{\mathrm{R}}-1}{m}} \left(\frac{\mu_{\mathrm{R}}}{1- \mu_{\mathrm{R}}}\right)^m,
	\end{equation}
	which is further rewritten as
	\begin{equation} \label{eq:ineq_zeta_poly}
     \sum_{m=0}^{K_{\mathrm{R}}-1} \frac{\zeta(K_{\mathrm{R}}+1)+1}{1+m} {\binom{K_{\mathrm{R}}-1}{m}} \zeta^m
	  \geq
	  \sum_{m=0}^{K_{\mathrm{R}}-1} \frac{\zeta(K_{\mathrm{R}}+r)+r}{\min \{ r+m ,  K_{\mathrm{R}} \}} {\binom{K_{\mathrm{R}}-1}{m}} \zeta^m,
	\end{equation}
where $\zeta=\frac{\mu_{\mathrm{R}}}{1-\mu_{\mathrm{R}}}$, which is constrained as $\zeta \in \left[0,\frac{K_{\mathrm{R}}-r}{K_{\mathrm{R}}+r}\right]$ for given $K_{\mathrm{R}}$ and $r$.
After further rearrangement of terms, the inequality in (\ref{eq:ineq_zeta_poly}) is rewritten as
\begin{equation} \label{eq:ineq_polynomial}
p(\zeta) \triangleq \sum_{m=0}^{K_{\mathrm{R}}} {c_m \cdot \zeta^m} \geq 0,
\end{equation}
where $p(\zeta)$ is a polynomial in the variable $\zeta$ with coefficients given by
\begin{equation}
\nonumber
 c_m=
 \begin{cases}
  0, \quad m=0\\
 \frac{1-r}{K_{\mathrm{R}}}, \quad m=K_{\mathrm{R}} \\
 \binom{K_{\mathrm{R}}-1}{m-1} \cdot \left( \frac{K_{\mathrm{R}}+1}{m} - \frac{K_{\mathrm{R}}+r}{\min \{ r+m -1 ,  K_{\mathrm{R}} \}} \right)  + \binom{K_{\mathrm{R}}-1}{m} \cdot \left( \frac{1}{m+1} - \frac{r}{\min \{ r+m ,  K_{\mathrm{R}} \}} \right) , \quad m \in [1,  K_{\mathrm{R}}-1 ]_{\mathbb{Z}}.\\
 \end{cases}
\end{equation}
Note that in the above, we use $[a,b]_{\mathbb{Z}}$ to denote the set of all integers that are in the interval $[a,b]$, i.e. $[a,b]_{\mathbb{Z}} \triangleq [a,b] \cap \mathbb{Z}$.
At this point, it is clear that the problem reduces to showing that  $p(\zeta) \geq 0$ for $\zeta \in \left[0,\frac{K_{\mathrm{R}}-r}{K_{\mathrm{R}}+r}\right]$.
To this end, we derive the following property of  $p(\zeta)$.
\begin{lemma}
\label{lemma:poly_con_decen}
The polynomial $p(\zeta)$ is quasiconcave and hence satisfies the following inequality:
\begin{equation}
 \label{eq:poly_con_decen}
    	  p(\zeta) \geq \min \left(  p(0) , p \left( \frac{K_{\mathrm{R}}-r}{K_{\mathrm{R}}+r} \right) \right),
  \ \forall  \zeta \in \left[0, \frac{K_{\mathrm{R}}-r}{K_{\mathrm{R}}+r} \right].
 \end{equation}
\end{lemma}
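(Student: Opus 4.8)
The plan is to prove the quasiconcavity claim, from which the displayed inequality \eqref{eq:poly_con_decen} is immediate: by definition a quasiconcave function satisfies $p(\zeta)=p\big(\lambda\cdot 0+(1-\lambda)\tfrac{K_{\mathrm{R}}-r}{K_{\mathrm{R}}+r}\big)\geq\min\{p(0),p(\tfrac{K_{\mathrm{R}}-r}{K_{\mathrm{R}}+r})\}$ for every $\zeta$ in the interval. Since $r=1$ forces every $c_m=0$ and hence $p\equiv 0$ trivially, I assume $r>1$ throughout (and $r<K_{\mathrm{R}}$, as required for the interval to be nonempty). I would in fact prove the stronger statement that $p$ is \emph{unimodal} on $[0,\infty)$ — non-decreasing then non-increasing — which yields quasiconcavity on any subinterval.

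The mechanism reducing unimodality to a statement about the coefficients is the following. The derivative $p'(\zeta)=\sum_{m=1}^{K_{\mathrm{R}}} m\,c_m\,\zeta^{m-1}$ has coefficient sequence $(m c_m)_m$ with exactly the same sign pattern as $(c_m)_m$. If I can show that $(c_m)_{m=0}^{K_{\mathrm{R}}}$ has a single sign change — nonnegative for small $m$ and nonpositive for large $m$ — then by Descartes' rule of signs $p'$ has at most one positive real root; combined with $p'(0)=c_1>0$ and leading coefficient $K_{\mathrm{R}}c_{K_{\mathrm{R}}}=1-r<0$, this forces $p'$ to be positive then negative, i.e. $p$ increasing then decreasing. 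Hence everything reduces to establishing the single sign change of $(c_m)$.

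The core is therefore the analysis of $c_m$ across the regimes created by the two $\min\{\cdot,K_{\mathrm{R}}\}$ terms. For $m\le K_{\mathrm{R}}-r$ (both minima linear) I would factor $c_m=(r-1)\binom{K_{\mathrm{R}}-1}{m-1}B_m$ with $B_m=\frac{K_{\mathrm{R}}+1-m}{m(r+m-1)}-\frac{K_{\mathrm{R}}-m}{(m+1)(r+m)}$, so that the sign of $c_m$ equals the sign of $B_m$; clearing denominators turns $B_m\ge 0$ into a cubic inequality in $m$ that I would verify on $1\le m\le K_{\mathrm{R}}-r$, giving $c_m\ge 0$ on the entire first regime. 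For $m\ge K_{\mathrm{R}}-r+1$ (both minima capped at $K_{\mathrm{R}}$) a telescoping simplification using $\binom{K_{\mathrm{R}}}{m}=\binom{K_{\mathrm{R}}-1}{m-1}+\binom{K_{\mathrm{R}}-1}{m}$ and $m\binom{K_{\mathrm{R}}}{m}=K_{\mathrm{R}}\binom{K_{\mathrm{R}}-1}{m-1}$ collapses the expression to
\begin{equation}
\nonumber
K_{\mathrm{R}}\,c_m=\binom{K_{\mathrm{R}}}{m}\Big(K_{\mathrm{R}}+1-r-m+\tfrac{K_{\mathrm{R}}-m}{m+1}\Big),
\end{equation}
whose bracket is strictly decreasing in $m$ and strictly positive at $m=K_{\mathrm{R}}-r+1$. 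Thus on the second regime $c_m$ is a positive binomial factor times a strictly decreasing term, so it changes sign at most once (from $+$ to $-$), consistently with $c_{K_{\mathrm{R}}}=\frac{1-r}{K_{\mathrm{R}}}<0$. Combining the two regimes — nonnegative on the first, single $+\!\to\!-$ crossing on the second — gives exactly one sign change for the whole sequence.

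I expect the main obstacle to be the first regime: proving the cubic inequality $B_m\ge 0$ uniformly on $1\le m\le K_{\mathrm{R}}-r$ for all admissible $(K_{\mathrm{R}},r)$, together with carefully matching the junction between the two regimes. In particular, for non-integer $r$ there is a single transition index $m$ with $K_{\mathrm{R}}-r<m<K_{\mathrm{R}}-r+1$ at which one $\min$ is linear while the other is already capped; this mixed index must be treated directly so that no spurious extra sign change is introduced at the boundary. The second regime, by contrast, is clean once the collapse to the displayed monotone form is in hand.
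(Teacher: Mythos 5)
Your overall strategy coincides with the paper's in substance: both proofs reduce the lemma to showing that the coefficient sequence $(c_m)_{m=0}^{K_{\mathrm{R}}}$ changes sign exactly once, from nonnegative to nonpositive. Your route from that sign pattern to quasiconcavity (Descartes' rule applied to $p'$, with $c_1>0$ and $K_{\mathrm{R}}c_{K_{\mathrm{R}}}=1-r<0$) is a clean substitute for the paper's inductive argument (Lemma \ref{lemma_appendix_D}), and two of your computations are actually simpler than the paper's: in the regime where both minima are uncapped, the factorization $c_m=(r-1)\binom{K_{\mathrm{R}}-1}{m-1}B_m$ is correct and $B_m>0$ is immediate --- no cubic inequality is needed, since the first fraction in $B_m$ has strictly larger numerator and strictly smaller denominator than the second; and your collapsed form $K_{\mathrm{R}}c_m=\binom{K_{\mathrm{R}}}{m}\bigl(K_{\mathrm{R}}+1-r-m+\frac{K_{\mathrm{R}}-m}{m+1}\bigr)$ in the fully capped regime checks out (it even reproduces $c_{K_{\mathrm{R}}}=\frac{1-r}{K_{\mathrm{R}}}$ at $m=K_{\mathrm{R}}$) and neatly replaces the paper's monotonicity computation for $e_m$. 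Note the paper proves something stronger, namely that $c_m/\binom{K_{\mathrm{R}}-1}{m-1}$ is non-increasing over $[1,K_{\mathrm{R}}-1]_{\mathbb{Z}}$; your plan needs only the sign pattern, which is a legitimate economy.

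The genuine gap is the transition index, which you flag but do not treat --- and it is precisely where the paper's Appendix \ref{sec:proof_poly_property} spends nearly all of its effort. For non-integer $r$, writing $\tilde r=\lfloor r\rfloor$ and $\epsilon=r-\tilde r$, at $m_0=K_{\mathrm{R}}-\tilde r$ one has $\min\{r+m_0-1,K_{\mathrm{R}}\}=K_{\mathrm{R}}+\epsilon-1$ but $\min\{r+m_0,K_{\mathrm{R}}\}=K_{\mathrm{R}}$, so neither of your two formulas applies (and when $r>K_{\mathrm{R}}-1$ the mixed index is $m_0=1$, so even your unproven claim $c_1>0$ rests on it). The dangerous configuration is $c_{m_0}<0$ while the capped bracket at $m_0+1$ is positive: that would give the coefficient pattern $+,\dots,+,-,+,\dots,-$, i.e. three sign changes in the coefficients of $p'$, and your Descartes argument would no longer yield unimodality. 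Nothing in your proposal excludes this: the bracket at the first capped integer index equals $-\epsilon+\frac{\tilde r-1}{K_{\mathrm{R}}-\tilde r+2}$, whose sign genuinely varies (it is $-0.4$ for $K_{\mathrm{R}}=10$, $r=2.5$), so its sign must be pinned down relative to that of $c_{m_0}$. Relatedly, your assertion that the bracket is ``strictly positive at $m=K_{\mathrm{R}}-r+1$'' refers to a non-integer point when $r\notin\mathbb{Z}$; at the first integer index of the capped regime it can fail, although this does not harm the at-most-one-change claim within that regime. The paper closes exactly this hole by proving $c'_{K_{\mathrm{R}}-\tilde r}\ge e_{K_{\mathrm{R}}-\tilde r+1}$ (together with $c'_{K_{\mathrm{R}}-\tilde r}\le d_{K_{\mathrm{R}}-\tilde r-1}$, which your weaker sign-pattern scheme does not need), and that inequality is the hard core: after clearing denominators it becomes the condition \eqref{eq:ineq_parabolas}, settled by bounding the three parabolas $l_1,l_2,l_3$ over $\epsilon\in[0,1)$ and then a convexity argument in $K_{\mathrm{R}}$. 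An alternative fix would be to show $c_{m_0}\ge0$ directly, but numerically that quantity nearly vanishes (e.g. $c'_{m_0}\approx 0.002$ for $K_{\mathrm{R}}=1000$, $r=500.5$), so it will require an estimate of comparable weight; either way a nontrivial additional argument is missing.

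One mitigating remark: if $r$ is an integer there is no mixed index --- the two regimes partition $[1,K_{\mathrm{R}}]_{\mathbb{Z}}$, and at $m=K_{\mathrm{R}}-r+1$ your bracket equals $\frac{r-1}{K_{\mathrm{R}}-r+2}>0$ --- so your proof closes completely and is genuinely lighter than the paper's in that case. Since the lemma is ultimately invoked at $r=K_{\mathrm{T}}\mu_{\mathrm{T}}$, assumed integer in the decentralized setting, your argument would suffice for Theorem \ref{theorem_decen} as applied; but the paper deliberately establishes $J(1)\ge J(r)$ for all real $r\ge1$, and as a proof of Lemma \ref{lemma:poly_con_decen} over that range your proposal omits its hardest step.
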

The proof of \eqref{eq:poly_con_decen} is rather involved and hence is deferred to  Appendix \ref{sec:proof_poly_property}.
From Lemma \ref{lemma:poly_con_decen},
it follows that to prove that the inequality in \eqref{eq:ineq_polynomial} holds,
it is sufficient to show that $p(0) \geq 0$ and $p \left( \frac{K_{\mathrm{R}}-r}{K_{\mathrm{R}}+r} \right) \geq 0$.
Note that the case with $\zeta=0$ is trivial as $p(0)=0$.
Hence, it remains to show that $p \left( \frac{K_{\mathrm{R}}-r}{K_{\mathrm{R}}+r} \right) \geq 0$ holds true.
For this, we require the following inequality.
\begin{lemma}
\label{lemma:inequality}
\cite{Pinelis2018}.
For any positive integer $K \in \mathbb{Z}_{+}$ and real number $r \in [1,K]$, we have
\begin{equation} \label{eq:ineq_lemma}
\sum_{m=1}^{K}  \frac{m}{\min \{ r+m-1 ,  K \}} {\binom{K}{m}} \left(\frac{K-r}{K+r} \right)^{m} \leq
\frac{K-r+2}{K+r}  \left[ \left(\frac{2K}{K+r} \right)^{K} -1 \right].
\end{equation}
\end{lemma}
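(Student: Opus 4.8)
The plan is to strip away the binomial weights and the parameter $r$ from the denominator, isolating the truncation induced by the $\min$ as the only delicate feature, and then to match the precise constant $\frac{K-r+2}{K+r}$ through a single sharp estimate. I would first set $x \triangleq \frac{K-r}{K+r} \in [0,1)$, so that $1+x = \frac{2K}{K+r}$ and $(1+x)^{K} = \big(\frac{2K}{K+r}\big)^{K}$; the claim \eqref{eq:ineq_lemma} then reads
\[
S \triangleq \sum_{m=1}^{K} \frac{m}{\min\{r+m-1,K\}}\binom{K}{m} x^{m} \;\le\; \frac{K-r+2}{K+r}\big[(1+x)^{K}-1\big].
\]
Applying $m\binom{K}{m} = K\binom{K-1}{m-1}$ and reindexing by $j=m-1$ gives $S = Kx\sum_{j=0}^{K-1}\frac{\binom{K-1}{j}}{\min\{r+j,K\}}x^{j}$, which already carries the $(1+x)^{K-1}$ structure once the denominator is resolved.

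The crude bound $\frac{m}{\min\{r+m-1,K\}}\le 1$ (valid since $r\ge1$ and $m\le K$) only yields $S\le (1+x)^{K}-1$, i.e.\ the target with constant $1$ in place of $\frac{K-r+2}{K+r}$; since $\frac{K-r+2}{K+r}=1-\frac{2(r-1)}{K+r}\le1$ with equality exactly at $r=1$, the whole task is to recover the deficit $\frac{2(r-1)}{K+r}\big[(1+x)^{K}-1\big]$. To that end I would use the exact splitting $\frac{1}{\min\{r+j,K\}}=\frac{1}{r+j}+\big(\frac{1}{K}-\frac{1}{r+j}\big)_{+}$, where $(\cdot)_{+}$ denotes positive part and the correction is supported on the truncated range $r+j>K$. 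The integral representation $\frac{1}{r+j}=\int_{0}^{1}t^{r+j-1}\,dt$ turns the main part into a closed expression, giving
\[
S = Kx\int_{0}^{1}t^{r-1}(1+xt)^{K-1}\,dt \;+\; C, \qquad C \triangleq Kx\!\!\sum_{j:\,r+j>K}\!\!\Big(\tfrac{1}{K}-\tfrac{1}{r+j}\Big)\binom{K-1}{j}x^{j}\ge 0 .
\]

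The payoff of the integral form is an integration by parts. Since $Kx(1+xt)^{K-1}=\frac{d}{dt}(1+xt)^{K}$, for $r>1$ the main term equals $(1+x)^{K}-(r-1)\,I$ with $I\triangleq\int_{0}^{1}t^{r-2}(1+xt)^{K}\,dt$ (the boundary term at $t=0$ vanishes for $r>1$, while $(r-1)I\to1$ as $r\to1^{+}$ reproduces the tight case). Substituting back, the claim reduces to the single inequality
\[
(r-1)\,I - C \;\ge\; \frac{2(r-1)}{K+r}(1+x)^{K} + \frac{K-r+2}{K+r},
\]
which I would close by combining a sharp lower bound on $I$ (e.g.\ via $(1+xt)^{K}\ge 1+Kxt$, or a finer convexity estimate concentrating the mass of $I$ near $t=1$) with an upper bound on the tail correction $C$, leaving a one-variable inequality in $r$ for fixed $K$ that is tight at $r=1$ and can be finished by induction on $K$ or by monotonicity of $S/[(1+x)^{K}-1]$ in $r$.

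I expect the principal obstacle to be precisely this final matching of the \emph{exact} constant $\frac{K-r+2}{K+r}$: the truncation at $K$ couples the two regimes, so neither the remainder integral $(r-1)I$ nor the correction $C$ alone produces the correct deficit, and one must bound their difference tightly rather than settle for a loose factor. A clean organizing alternative is the probabilistic reformulation: writing $\binom{K}{m}x^{m}=(1+x)^{K}\Pr(N=m)$ for $N\sim\mathrm{Bin}(K,p)$ with $p=\tfrac12-\tfrac{r}{2K}$, the inequality is equivalent to the conditional-expectation bound $\mathbb{E}\big[\tfrac{N}{\min\{r+N-1,K\}}\,\big|\,N\ge1\big]\le\frac{K-r+2}{K+r}$, a viewpoint that may guide the estimates above.
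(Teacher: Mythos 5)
Your chain of reformulations is correct as far as it goes: the substitution $x=\frac{K-r}{K+r}$, the identity $m\binom{K}{m}=K\binom{K-1}{m-1}$, the exact splitting $\frac{1}{\min\{r+j,K\}}=\frac{1}{r+j}+\big(\frac{1}{K}-\frac{1}{r+j}\big)_{+}$, the integral representation, and the integration by parts yielding $S=(1+x)^{K}-(r-1)I+C$ all check out, as does the equivalence with $\mathbb{E}\big[\frac{N}{\min\{r+N-1,K\}}\,\big|\,N\geq 1\big]\leq\frac{K-r+2}{K+r}$ for $N\sim\mathrm{Bin}\big(K,\frac{1}{2}-\frac{r}{2K}\big)$. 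But none of this proves anything yet: the lemma has been transported, intact, into the reduced inequality
\begin{equation*}
(r-1)\,I - C \;\geq\; \frac{2(r-1)}{K+r}(1+x)^{K} + \frac{K-r+2}{K+r},
\end{equation*}
and the step you offer to close it would fail. The bound $(1+xt)^{K}\geq 1+Kxt$ gives $(r-1)I\geq 1+\frac{(r-1)Kx}{r}$, which is polynomial in $K$, while the right-hand side contains $\frac{2(r-1)}{K+r}\big(\frac{2K}{K+r}\big)^{K}$, which is exponentially large in $K$ whenever $r$ is bounded away from $K$; you are off by an exponential factor. A Laplace-type estimate concentrating $I$ near $t=1$ gives $(r-1)I\approx\frac{2(r-1)}{K-r}(1+x)^{K}$, which does dominate the required $\frac{2(r-1)}{K+r}(1+x)^{K}$, but then one must verify that the correction $C$ does not consume the surplus $\frac{4r(r-1)}{K^{2}-r^{2}}(1+x)^{K}$ \emph{uniformly} in $r\in[1,K]$ and for all finite $K$ --- and your reduced inequality is tight at \emph{both} endpoints $r=1$ and $r=K$ (at $r=K$ one has $x=0$, $(r-1)I=1$, $C=0$, and both sides equal $1$). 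Two-sided tightness rules out any estimate that loses a uniform constant anywhere in $r$; the asymptotic picture also degenerates precisely when $r$ is close to $K$, where $Kx=O(1)$ and all terms are comparable. The fallbacks you name --- induction on $K$, or monotonicity of $S/[(1+x)^{K}-1]$ in $r$ --- come with no induction step and no derivative computation, so the decisive matching of the exact constant $\frac{K-r+2}{K+r}$, which you yourself identify as the principal obstacle, is simply not carried out.

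For calibration: the paper contains no proof of this lemma to compare against. Lemma \ref{lemma:inequality} is imported as an external result, cited as \cite{Pinelis2018}, and the acknowledgment explicitly credits Prof.\ Iosif Pinelis with proving it; the authors use it as a black box in the chain \eqref{eq:before_final_ineq}--\eqref{eq:final_ineq}. This is itself evidence that the remaining step is genuinely hard rather than routine. Your reformulations --- particularly the conditional-expectation form, which cleanly exposes the statement as a bound on $\mathbb{E}\big[\frac{N}{\min\{r+N-1,K\}}\mid N\geq 1\big]$ --- are sensible scaffolding and could well be the right opening moves, but as submitted the proposal is a reduction plus a conjecture, not a proof; to complete it you would need to actually establish the reduced inequality with its delicate cancellation between $(r-1)I$ and $C$, or consult Pinelis's argument in \cite{Pinelis2018}.
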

The final step of the proof is to show that the inequality $p \left( \frac{K_{\mathrm{R}}-r}{K_{\mathrm{R}}+r} \right) \geq 0$
is an instance of Lemma \ref{lemma:inequality}, and hence holds true.
Equivalently, we consider \eqref{eq:ineq_zeta_poly}. By plugging $\zeta=\frac{K_{\mathrm{R}}-r}{K_{\mathrm{R}}+r}$ into \eqref{eq:ineq_zeta_poly}  and multiplying both sides by $\frac{K_{\mathrm{R}} + r}{K_{\mathrm{R}}}$, the inequality $p \left( \frac{K_{\mathrm{R}}-r}{K_{\mathrm{R}}+r} \right) \geq 0$ is equivalently expressed as
\begin{equation}
\nonumber
	\sum_{m=0}^{K_{\mathrm{R}}-1}  \frac{K_{\mathrm{R}}-r+2}{1+m}  {\binom{K_{\mathrm{R}}-1}{m}} \left(\frac{K_{\mathrm{R}}-r}{K_{\mathrm{R}}+r} \right)^m
	\geq
	\sum_{m=0}^{K_{\mathrm{R}}-1}  \frac{K_{\mathrm{R}}+r}{\min \{ r+m ,  K_{\mathrm{R}} \}} {\binom{K_{\mathrm{R}}-1}{m}} \left(\frac{K_{\mathrm{R}}-r}{K_{\mathrm{R}}+r} \right)^m.
	\end{equation}
By rearranging the above inequality and using the fact that  $\binom{K_{\mathrm{R}}}{m+1}=\binom{K_{\mathrm{R}}-1}{m} \frac{K_{\mathrm{R}}}{m+1}$,
we obtain
\begin{equation}
\label{eq:before_final_ineq}
	\frac{K_{\mathrm{R}}-r+2}{K_{\mathrm{R}}+r}   \sum_{m=1}^{K_{\mathrm{R}}}     {\binom{K_{\mathrm{R}}}{m}} \left(\frac{K_{\mathrm{R}}-r}{K_{\mathrm{R}}+r} \right)^{m}
	\geq
	\sum_{m=0}^{K_{\mathrm{R}}-1}  \frac{K_{\mathrm{R}}}{\min \{ r+m ,  K_{\mathrm{R}} \}} {\binom{K_{\mathrm{R}}-1}{m}} \left(\frac{K_{\mathrm{R}}-r}{K_{\mathrm{R}}+r} \right)^{m+1}.
	\end{equation}
	By employing $\binom{K_{\mathrm{R}}}{m+1}=\binom{K_{\mathrm{R}}-1}{m} \frac{K_{\mathrm{R}}}{m+1}$ one more time, we finally arrive at
	\begin{equation} \label{eq:final_ineq}
	\frac{K_{\mathrm{R}}-r+2}{K_{\mathrm{R}}+r}  \left[ \left(\frac{2K_{\mathrm{R}}}{K_{\mathrm{R}}+r} \right)^{K_{\mathrm{R}}} -1 \right]
	\geq
	\sum_{m=1}^{K_{\mathrm{R}}}  \frac{m}{\min \{ r+m-1 ,  K_{\mathrm{R}} \}} {\binom{K_{\mathrm{R}}}{m}} \left(\frac{K_{\mathrm{R}}-r}{K_{\mathrm{R}}+r} \right)^{m}.
	\end{equation}
where in going from \eqref{eq:before_final_ineq} to \eqref{eq:final_ineq}, we used the binomial identity to obtain $\sum_{m=1}^{K_{\mathrm{R}}}     {\binom{K_{\mathrm{R}}}{m}} \left(\frac{K_{\mathrm{R}}-r}{K_{\mathrm{R}}+r} \right)^{m} =  \left(\frac{2K_{\mathrm{R}}}{K_{\mathrm{R}}+r} \right)^{K_{\mathrm{R}}} -1$.
At this point, it is evident that the inequality in \eqref{eq:final_ineq} holds true due to \eqref{eq:ineq_lemma} in Lemma \ref{lemma:inequality}.
Therefore, \eqref{eq:ineq_polynomial} holds and the proof of \eqref{eq:ineq_lemma_con_dece_2} is complete.
\section{Conclusions} \label{sec:conclusion}
In the paper, we considered the problem of cache-aided interference management in
a wireless network where each node is equipped with a cache memory and transmission occurs over two parallel channels,
one for which perfect CSIT is available and another for which no CSIT is available.
Focusing on strategies with uncoded placement and separable one-shot linear delivery schemes,
we characterized the optimum one-shot linear DoF to within a multiplicative factor of $2$.
We further considered a decentralized setting in which content caching at the receivers is randomized.
For this decentralized setting, we characterized the
optimum one-shot linear DoF to within a multiplicative factor of $3$.
Our results generalize and expand upon previous one-shot linear DoF results in literature, namely
\cite{Shariatpanahi2016} and \cite{Naderializadeh2017}, by including the parallel no-CSIT (or multicast)
channel and by considering decentralization at the receivers.
The order optimality proof for the decentralized setting posed a number of technical challenges,
which were circumvented by involved mathematical manipulations and employing the notion of quasiconcavity.

The results in this paper can be extended in several interesting directions.
An intriguing direction would be to explore the fundamental limits of the considered
setup while relaxing the restriction of uncoded placement and one-shot linear delivery schemes.
While we expect uncoded placement to still be order optimal, the delivery scheme will likely rely
on interference interference alignment and symbol spreading.
This direction build upon and benefit from recent results reported in \cite{Xu2017,Hachem2018,Piovano2019}.
Another interesting direction would be to extend the setup and results in this paper to Fog-RAN architectures,
where decentralized placement can also be afforded at the transmitters due to the supporting cloud  \cite{Girgis2017,Xu2018}.
Such direction will also be relevant to D2D networks underlaying a cellular
infrastructure, that  performs the role of the cloud,
which can benefit from the lower complexity one-shot linear schemes.
\appendices
\section{Proof of Lemma \ref{theorem_outerbound}} \label{sec:outerbound}
Here we present the proof of Lemma \ref{theorem_outerbound}.
We start with the observation that under average distinct demands, as opposed to worst-case demands, there is a precise characterization for the number of packets to be delivered to the receivers \cite{Naderializadeh2017}.
Since the performance under average demands is no worse than that under worst-case demands,
the one-shot linear DoF in \eqref{eq:network_one_shot_linear_DoF} is bounded above by
\begin{equation}
\label{eq:upperbound_DoF_1}
\mathsf{DoF}_{\mathrm{L}}^*(\mu_{\mathrm{T}},\mu_{\mathrm{R}},\alpha) \leq  \frac{K_{\mathrm{R}}F(1-\mu_{\mathrm{R}})}{\overline{H}},
\end{equation}
where $\overline{H}$ is a lower bound on the delivery time under average demands rather than worst-case demands.
Note that the above relaxation is commonly used to obtain outer bounds in cache-aided setups, e.g. \cite{Maddah-Ali2014,Naderializadeh2017,Zhang2017,Piovano2019}.
Next, we follow the same general footsteps of \cite[Sec. V]{Naderializadeh2017} to
characterize and then find a lower bound for $\overline{H}$.
The steps borrowed from \cite{Naderializadeh2017} are explained in less detail, while we elaborate more on the new challenges that arise
due to packet splitting over the two subchannels.
\subsection{Upper bound on the Number of Subpackets Reliably Delivered Per Block}
First, let us fix the caching realization $\big( \{\mathcal{P}_i\}_{i=1}^{K_{\mathrm{T}}}, \{\mathcal{U}_i\}_{i=1}^{K_{\mathrm{R}}} \big)$,
user demand vector $\mathbf{d}$ and splitting ratio $q$.
As described in Section \ref{subsec:delivery_phase}, in each P-block or N-block, a subset of P-subpackets or N-subpacket are delivered over the
P-subchannel or the N-subchannel, respectively.
Let $\big\{\mathbf{w}_{n_l,f_l}^{(\mathrm{p})} \big\}_{l = 1}^{L^{(\mathrm{p})}}$  be a set of
$L^{(\mathrm{p})}$ P-subpackets to be delivered to $L^{(\mathrm{p})}$ distinct receivers over
one P-block, and $\big\{\mathbf{w}_{n_l,f_l}^{(\mathrm{n})} \big\}_{l = 1}^{L^{(\mathrm{n})}}$ be a set of
$L^{(\mathrm{n})}$ N-subpackets to be delivered to  $L^{(\mathrm{n})}$ distinct receivers over
one N-block.
In order for the receivers to successfully decode the transmitted subpackets,
$L^{(\mathrm{p})}$ and $L^{(\mathrm{n})}$ must satisfy
\begin{align}
\label{eq:condition_subpackets_p}
L^{(\mathrm{p})} & \leq \min_{l \in [L^{(\mathrm{p})}]} \big\{ |\mathcal{R}_l| +  |\mathcal{T}_l| \big\}  \\
\label{eq:condition_subpackets_n}
L^{(\mathrm{n})} & \leq \min_{l \in [L^{(\mathrm{n})}]} |\mathcal{R}_l| +  1
\end{align}
where, for any $l \in [L^{(\mathrm{p})}]$ or $l \in [L^{(\mathrm{n})}]$, $\mathcal{T}_l$ and $\mathcal{R}_l$ are the sets of transmitters and receivers, respectively,
which store the packet $\mathbf{w}_{n_l,f_l} = \big(\mathbf{w}_{n_l,f_l}^{(\mathrm{p})},\mathbf{w}_{n_l,f_l}^{(\mathrm{n})})$ in their caches.

The inequality in \eqref{eq:condition_subpackets_p} follows directly from  \cite[Lem. 3]{Naderializadeh2017}.
On the other hand, the inequality in \eqref{eq:condition_subpackets_n} can be
shown to hold by following the same general steps used to prove \cite[Lem. 3]{Naderializadeh2017}, while observing that the generic channel matrices and the lack of CSIT make the zero-forcing conditions in the proof of \cite[Lem. 3]{Naderializadeh2017}
impossible to satisfy almost surely.
This in turn eliminates the transmitter cooperation gain.
\subsection{Integer Program Formulation}
For any P-block and N-block indexed by $m^{(\mathrm{p})}$ and $m^{(\mathrm{n})}$ respectively,
the sets of subpackets $\mathcal{D}^{(\mathrm{p})}_{m^{(\mathrm{p})}}$ and $\mathcal{D}^{(\mathrm{n})}_{m^{(\mathrm{n})}}$  to be delivered
are deemed feasible \emph{only if} their cardinalities satisfy \eqref{eq:condition_subpackets_p} and \eqref{eq:condition_subpackets_n}.
Hence by keeping the caching realization, demand vector and splitting ratio fixed,
the following integer programming problems yields a lower bound on the
delivery time:
\begin{equation} \label{optimization_problem_P1}
\begin{aligned}
& {\min}
& & \max \Big \{ \frac{q}{\alpha} H^{(\mathrm{p})}, \frac{\bar{q}}{\bar{\alpha}} H^{(\mathrm{n})} \Big \}  \\
&\mathrm{s.t.} &&\bigcup_{m^{(\mathrm{p})}=1}^{H^{(\mathrm{p})}}{\mathcal{D}^{(\mathrm{p})}_{m^{(\mathrm{p})}} } = \bigcup_{r=1}^{K_{\mathrm{R}}}{\left(\mathcal{W}^{(\mathrm{p})}_{d_r} \setminus \mathcal{U}_r^{(\mathrm{p})} \right)} \\
&&&\bigcup_{m^{(\mathrm{n})}=1}^{H^{(\mathrm{n})}}{\mathcal{D}^{(\mathrm{n})}_{m^{(\mathrm{n})}} } = \bigcup_{r=1}^{K_{\mathrm{R}}}{\left(\mathcal{W}^{(\mathrm{n})}_{d_r} \setminus \mathcal{U}_r^{(\mathrm{n})} \right)} \\
&&& {\mathcal{D}^{(\mathrm{p})}_{m^{(\mathrm{p})}} }, {\mathcal{D}^{(\mathrm{n})}_{m^{(\mathrm{n})}} } \: \: \text{are feasible}, \: \: \forall m^{(\mathrm{p})} \in [H^{(\mathrm{p})}], \: \forall m^{(\mathrm{n})} \in [H^{(\mathrm{n})}].
\end{aligned}
\end{equation}
The optimal value for the above problem is denoted by $H^{*\left(\{\mathcal{P}_i\}_{i=1}^{K_{\mathrm{T}}}, \{\mathcal{U}_i\}_{i=1}^{K_{\mathrm{R}}}, \mathbf{d}, q \right)}$.
\subsection{From Worst-Case to Average Demands and Optimizing Over Caching Realizations and Splitting Ratios}
Given a caching realization $\big( {\{\mathcal{P}_i\}_{i=1}^{K_{\mathrm{T}}}, \{\mathcal{U}_i\}_{i=1}^{K_{\mathrm{R}}}} \big)$, each file $\mathcal{W}_{n}$, with $n \in [N]$, is split into
$(2^{K_{\mathrm{T}}}-1)( 2^{K_{\mathrm{R}}})$ subfiles $\{\mathcal{W}_{n,\mathcal{T},\mathcal{R}}\}_{ \mathcal{T} \subseteq_{\emptyset}[K_{\mathrm{T}}], \mathcal{R} \subseteq [K_{\mathrm{R} }]}$, where $\mathcal{W}_{n,\mathcal{T},\mathcal{R}}$ denotes the subfile of file $\mathcal{W}_{n}$ cached by transmitters in $\mathcal{T}$ and receivers in $\mathcal{R}$, and
$\mathcal{T} \subseteq_{\emptyset} [K_{\mathrm{T}}] $ denotes $\mathcal{T} \subseteq [K_{\mathrm{T}}], \mathcal{T} \neq \emptyset$.
Denoting the number of packets in ${\mathcal{W}_{n, \mathcal{T}, \mathcal{R}}}$ as ${a_{n, \mathcal{T}, \mathcal{R}}}$, we may write an optimization problem to minimize $H^{*\left(\{\mathcal{P}_i\}_{i=1}^{K_{\mathrm{T}}}, \{\mathcal{U}_i\}_{i=1}^{K_{\mathrm{R}}}, \mathbf{d}, q \right)}$, for the worst-case demands, over all caching realizations and splitting ratios.

As in \cite{Naderializadeh2017}, we further lower bound the delivery time by considering average demands instead of worst-case demands.
In particular, by taking the average over the set of all possible {$\pi(N,K_{\mathrm{R}}) = \frac{N!}{(N-K_{\mathrm{R}})!}$} permutations of distinct receiver demands, denote by $\mathcal{P}_{N,K_{\mathrm{R}}}$, we write the problem:
\begin{equation}
\label{optim_average_demands}
\begin{aligned}
& {\min_{\{\mathcal{P}_i\}_{i=1}^{K_{\mathrm{T}}}, \{\mathcal{U}_i\}_{i=1}^{K_{\mathrm{R}}},q}} \:
& & \frac{1}{\pi(N,K_{\mathrm{R}})}\sum_{\mathbf{d} \in \mathcal{P}_{N,K_{\mathrm{R}}} }{H^{*\left(\{\mathcal{P}_i\}_{i=1}^{K_{\mathrm{T}}}, \{\mathcal{U}_i\}_{i=1}^{K_{\mathrm{R}}} , \mathbf{d}, q \right)}} \\
& \mathrm{s.t.}
& &  \sum_{\substack{\mathcal{T} \subseteq_{\emptyset}  [K_{\mathrm{T}}] }}\sum_{\substack{\mathcal{R} \subseteq  [K_{\mathrm{R}}]}} {a_{n, \mathcal{T}, \mathcal{R}}}  = F, \: \forall n \in [N]\\
&&& \sum_{n=1}^N \sum_{\substack{\mathcal{T} \subseteq  [K_{\mathrm{T}}]: \\ i \in \mathcal{T} }}\sum_{\substack{\mathcal{R} \subseteq  [K_{\mathrm{R}}]}}  {a_{n, \mathcal{T}, \mathcal{R}}} \leq   \mu_{\mathrm{T}} NF, \: \forall i \in [K_{\mathrm{T}}]\\
&&& \sum_{n=1}^N \sum_{\substack{\mathcal{T} \subseteq_{\emptyset}  [K_{\mathrm{T}}] }}\sum_{\substack{\mathcal{R} \subseteq  [K_{\mathrm{R}}] : \\ j \in \mathcal{R}}}  {a_{n, \mathcal{T}, \mathcal{R}}}  \leq \mu_{\mathrm{R}}NF, \: \forall j \in [K_{\mathrm{R}}] \\
&&& q \in [0,1], a_{n,\mathcal{T},\mathcal{R}} \geq 0, \forall n \in [N],     \forall \mathcal{T} \subseteq_{\emptyset} [K_{\mathrm{T}}], \forall \mathcal{R} \subseteq [K_{\mathrm{R}}].
\end{aligned}
\end{equation}
The optimum objective for the above problem is denoted by $\overline{H}$, which appears in the bound in \eqref{eq:upperbound_DoF_1}.
In what follows, we are interested in further lower bounding $\overline{H}$.
\subsection{Decoupling the P and N Subchannel and Optimizing Over Caching Realizations}
\label{sec:P_N_subchannel}
To obtain a lower bound for $\bar{H}$, we consider optimizing over caching realizations for the  P-subchannel and N-subchannel independently.
To facilitate this, we start by observing that  $H^{*\left(\{\mathcal{P}_i\}_{i=1}^{K_{\mathrm{T}}}, \{\mathcal{U}_i\}_{i=1}^{K_{\mathrm{R}}} , \mathbf{d}, q \right)}$
in \eqref{optim_average_demands}, the optimum objective of \eqref{optimization_problem_P1} is bounded below as
\begin{equation}
\label{eq:H_given_cache_lower_bound}
H^{*\left(\{\mathcal{P}_i\}_{i=1}^{K_{\mathrm{T}}}, \{\mathcal{U}_i\}_{i=1}^{K_{\mathrm{R}}} , \mathbf{d}, q \right)}
\geq \max \Big \{ \frac{q}{\alpha} {H^{(\mathrm{p})*}}^{\left(\{\mathcal{P}_i\}_{i=1}^{K_{\mathrm{T}}}, \{\mathcal{U}_i\}_{i=1}^{K_{\mathrm{R}}}, \mathbf{d}, q \right)}
, \frac{\bar{q}}{\bar{\alpha}}
{H^{(\mathrm{n})*}}^{\left(\{\mathcal{P}_i\}_{i=1}^{K_{\mathrm{T}}}, \{\mathcal{U}_i\}_{i=1}^{K_{\mathrm{R}}}, \mathbf{d}, q \right)} \Big \}
\end{equation}

where ${H^{(\mathrm{s})*}}^{\left(\{\mathcal{P}_i\}_{i=1}^{K_{\mathrm{T}}}, \{\mathcal{U}_i\}_{i=1}^{K_{\mathrm{R}}}, \mathbf{d}, q \right)}$,
$\mathrm{s} \in \{\mathrm{p},\mathrm{n}\}$, is the optimum objective of the optimization problem
\begin{equation}
\begin{aligned}
\label{eq:H_s_given_cache}
& {\min}
& & H^{(\mathrm{s})}  \\
&\mathrm{s.t.}&&\bigcup_{m^{(\mathrm{s})}=1}^{H^{(\mathrm{s})}}{\mathcal{D}^{(\mathrm{s})}_{m^{(\mathrm{s})}} } = \bigcup_{r=1}^{K_{\mathrm{r}}}{\left(\mathcal{W}^{(\mathrm{s})}_{d_r} \setminus \mathcal{U}_r^{(\mathrm{s})} \right)} \\
&&& {\mathcal{D}^{(\mathrm{s})}_{m^{(\mathrm{s})}} } \: \: \text{is feasible}, \: \: \forall m^{(\mathrm{s})} \in [H^{(\mathrm{s})}].
\end{aligned}
\end{equation}
The lower bound in \eqref{eq:H_given_cache_lower_bound} is derived directly from problem \eqref{optimization_problem_P1},
e.g. the P-subchannel term on the right-hand side of \eqref{eq:H_given_cache_lower_bound} is obtained by
relaxing all N-subchannel components in the objective and constraints of problem \eqref{optimization_problem_P1}.
Denoting the average demand operator $\frac{1}{\pi(N,K_{\mathrm{R}})}\sum_{\mathbf{d} \in \mathcal{P}_{N,K_{\mathrm{R}}} }(\cdot)$
by $\E_{\mathbf{d}}(\cdot)$ for brevity,
it follows that the objective function of problem \eqref{optim_average_demands} is lower bounded as
\begin{align}
\nonumber
\E_{\mathbf{d}}  \Big( &{H^{*\left(\{\mathcal{P}_i\}_{i=1}^{K_{\mathrm{T}}}, \{\mathcal{U}_i\}_{i=1}^{K_{\mathrm{R}}} , \mathbf{d}, q \right)}} \Big) \geq
\E_{\mathbf{d}} \Big( \max \Big \{ \frac{q}{\alpha} {H^{(\mathrm{p})*}}^{\left(\{\mathcal{P}_i\}_{i=1}^{K_{\mathrm{T}}}, \{\mathcal{U}_i\}_{i=1}^{K_{\mathrm{R}}}, \mathbf{d}, q \right)}
, \frac{\bar{q}}{\bar{\alpha}} {H^{(\mathrm{n})*}}^{\left(\{\mathcal{P}_i\}_{i=1}^{K_{\mathrm{T}}}, \{\mathcal{U}_i\}_{i=1}^{K_{\mathrm{R}}}, \mathbf{d}, q \right)} \Big \} \Big)
\\
\label{eq:H_given_cache_lower_bound_2}
& \geq
\max \Big \{\frac{q}{\alpha} \E_{\mathbf{d}} \Big( {H^{(\mathrm{p})*}}^{\left(\{\mathcal{P}_i\}_{i=1}^{K_{\mathrm{T}}}, \{\mathcal{U}_i\}_{i=1}^{K_{\mathrm{R}}}, \mathbf{d}, q \right)} \Big)
, \frac{\bar{q}}{\bar{\alpha}} \E_{\mathbf{d}} \Big(
{H^{(\mathrm{n})*}}^{\left(\{\mathcal{P}_i\}_{i=1}^{K_{\mathrm{T}}}, \{\mathcal{U}_i\}_{i=1}^{K_{\mathrm{R}}}, \mathbf{d}, q \right)} \Big) \Big \}
\end{align}
where the inequality in \eqref{eq:H_given_cache_lower_bound_2} follows from the convexity of the pointwise maximum function and Jensen's inequality. 
Next, we plug the lower bound in \eqref{eq:H_given_cache_lower_bound_2} into \eqref{optim_average_demands} from which we obtain a lower bound on $\overline{H}$.
Moreover, for any given splitting ratio $q$, we optimize over caching realizations independently for the P-subchannel and N-subchannel through
\begin{equation}
\label{eq:optim_prob_Hb}
\begin{aligned}
& {\min_{\{\mathcal{P}_i\}_{i=1}^{K_{\mathrm{T}}}, \{\mathcal{U}_i\}_{i=1}^{K_{\mathrm{R}}}}} \:
& &  \frac{1}{\pi(N,K_{\mathrm{R}})}\sum_{\mathbf{d} \in \mathcal{P}_{N,K_{\mathrm{R}}} }{{H^{(\mathrm{s})*}}^{\left(\{\mathcal{P}_i\}_{i=1}^{K_{\mathrm{T}}}, \{\mathcal{U}_i\}_{i=1}^{K_{\mathrm{R}}}, \mathbf{d}, q \right)}}\\
& \mathrm{s.t.}
& &  \sum_{\substack{\mathcal{T} \subseteq_{\emptyset}  [K_{\mathrm{T}}] }}\sum_{\substack{\mathcal{R} \subseteq  [K_{\mathrm{R}}]}} {a_{n, \mathcal{T}, \mathcal{R}}}  = F, \: \forall n \in [N]\\
&&& \sum_{n=1}^N \sum_{\substack{\mathcal{T} \subseteq  [K_{\mathrm{T}}]: \\ i \in \mathcal{T} }}\sum_{\substack{\mathcal{R} \subseteq  [K_{\mathrm{R}}]}}  {a_{n, \mathcal{T}, \mathcal{R}}}  \leq \mu_{\mathrm{T}}NF, \: \forall i \in [K_{\mathrm{T}}]\\
&&& \sum_{n=1}^N \sum_{\substack{\mathcal{T} \subseteq_{\emptyset}  [K_{\mathrm{T}}] }}\sum_{\substack{\mathcal{R} \subseteq  [K_{\mathrm{R}}] : \\ j \in \mathcal{R}}}  {a_{n, \mathcal{T}, \mathcal{R}}}  \leq \mu_{\mathrm{R}}NF, \: \forall j \in [K_{\mathrm{R}}] \\
&&& a_{n,\mathcal{T},\mathcal{R}} \geq 0, \forall n \in [N],     \forall \mathcal{T} \subseteq_{\emptyset} [K_{\mathrm{T}}], \forall \mathcal{R} \subseteq [K_{\mathrm{R}}],
\end{aligned}
\end{equation}
for which we denote the optimum objective function as $\overline{H^{(\mathrm{s})}}^{ \left(q \right)}$, $\mathrm{s} \in \{\mathrm{p},\mathrm{n}\}$.
This yields the lower bound on $\overline{H}$ given by
\begin{equation}
\label{eq:average_ndt}
\overline{H} \geq \min_{q \in [0,1]}  \max \Big \{ \frac{q}{\alpha} \overline{H^{(\mathrm{p})}} ^{\left( q \right)}, \frac{\bar{q}}{\bar{\alpha}} \overline{H^{(\mathrm{n})}}^{\left( q \right)} \Big \}.
\end{equation}
The two components $\overline{H^{(\mathrm{p})}}^{ \left(q \right)}$ and
$\overline{H^{(\mathrm{n})}}^{ \left(q \right)}$ can be separately lower bounded as
\begin{align}
\label{eq:H_p lowerbound}
\overline{H^{(\mathrm{p})}}^{(q)} & \geq  \frac{ K_{\mathrm{R}}F(1-\mu_{\mathrm{R}})^2}{ {K_{\mathrm{T}}\mu_{\mathrm{T}}+K_{\mathrm{R}}\mu_{\mathrm{R}}}} \\
\label{eq:H_n lowerbound}
\overline{H^{(\mathrm{n})}}^{(q)} & \geq  \frac{ K_{\mathrm{R}}F(1-\mu_{\mathrm{R}})^2}{ {1+K_{\mathrm{R}}\mu_{\mathrm{R}}}}.
\end{align}
The lower bound in \eqref{eq:H_p lowerbound} follows directly from \cite[Lem. 4]{Naderializadeh2017}.
On the other hand, the lower bound in \eqref{eq:H_n lowerbound} is derived at the end of this section
by employing the same techniques in the proof of \cite{Naderializadeh2017}.

Since in the problem in (\ref{eq:optim_prob_Hb}) the total
number of subpackets per block delivered over either of the two subchannels is
$K_{\mathrm{R}}F\left(1-\mu_{\mathrm{R}} \right)$, and no
more than $K_{\mathrm{R}}$ subpackets can be delivered simultaneously, we obtain
$\overline{H^{(\mathrm{s})}}^{(q)} \geq \frac{ K_{\mathrm{R}} F(1-\mu_{\mathrm{R}})}{ K_{\mathrm{R}} }$.
Combining this with the lower bounds in \eqref{eq:H_p lowerbound} and \eqref{eq:H_n lowerbound}, we obtain
\begin{align}
\label{eq:lb_H_p}
\overline{H^{(\mathrm{p})}}^{(q)} & \geq \frac{ K_{\mathrm{R}}F(1-\mu_{\mathrm{R}})}{ \min \left\{ \frac{K_{\mathrm{T}}\mu_{\mathrm{T}}+K_{\mathrm{R}}\mu_{\mathrm{R}}}{1-\mu_{\mathrm{R}}} , K_{\mathrm{R}}          \right\}} \\
\label{eq:lb_H_n}
\overline{H^{(\mathrm{n})}}^{(q)} & \geq \frac{ K_{\mathrm{R}}F(1-\mu_{\mathrm{R}})}{\min \left\{ \frac{1+K_{\mathrm{R}}\mu_{\mathrm{R}}}{1-\mu_{\mathrm{R}}} , K_{\mathrm{R}}          \right\}}.
\end{align}
It is evident that the above lower bounds do not depend on the value of $q$, and by combining (\ref{eq:lb_H_p}) and (\ref{eq:lb_H_n}) with (\ref{eq:average_ndt}), it follows that
\begin{equation}
\label{eq:H_overline_lower_bound}
\overline{H} \geq \min_{q \in [0,1]} \max \Bigg \{ \frac{q}{\alpha} \cdot \frac{ K_{\mathrm{R}}F(1-\mu_{\mathrm{R}})}{\min \Big \{ \frac{K_{\mathrm{T}}\mu_{\mathrm{T}}+K_{\mathrm{R}}\mu_{\mathrm{R}}}{1-\mu_{\mathrm{R}}} , K_{\mathrm{R}}          \Big \}}       , \frac{\bar{q}}{\bar{\alpha}} \cdot \frac{ K_{\mathrm{R}}F(1-\mu_{\mathrm{R}})}{ \min \Big \{ \frac{1+K_{\mathrm{R}}\mu_{\mathrm{R}}}{1-\mu_{\mathrm{R}}} , K_{\mathrm{R}}          \Big \}}    \Bigg \}.
\end{equation}
\subsection{Optimizing Over Splitting Rations and Combing Bounds}
The splitting ration $q$ that minimizes the right-hand side of \eqref{eq:H_overline_lower_bound},
which we denote by $q^*$, must satisfy
\begin{equation}
\nonumber
\frac{q^*}{\alpha} \cdot \frac{ K_{\mathrm{R}}F(1-\mu_{\mathrm{R}})}{ \min \Big \{ \frac{K_{\mathrm{T}}\mu_{\mathrm{T}}+K_{\mathrm{R}}\mu_{\mathrm{R}}}{1-\mu_{\mathrm{R}}} , K_{\mathrm{R}}          \Big \}}       = \frac{\bar{q^*}}{\bar{\alpha}} \cdot \frac{ K_{\mathrm{R}}F(1-\mu_{\mathrm{R}})}{ \min \Big \{ \frac{1+K_{\mathrm{R}}\mu_{\mathrm{R}}}{1-\mu_{\mathrm{R}}} , K_{\mathrm{R}}          \Big \}},
\end{equation}
as any other $q$ leads to a larger value for the right-hand side of \eqref{eq:H_overline_lower_bound}.
By considering $q^*$, we obtain\footnote{For any real numbers $x,y$ and $q$ such that $\frac{q}{x} = \frac{1-q}{y}$, it is easy to verify that $\frac{q}{x}  = \frac{1}{x+y}$.}
\begin{equation}
\label{lower_bound_NDT}
\bar{H}  \geq \frac{K_{\mathrm{R}}F(1-\mu_{\mathrm{R}})}{ \alpha \cdot \min \Big \{ \frac{K_{\mathrm{T}}\mu_{\mathrm{T}}+K_{\mathrm{R}}\mu_{\mathrm{R}}}{1-\mu_{\mathrm{R}}} , K_{\mathrm{R}}\Big \} + \bar{\alpha} \cdot \min \Big \{\frac{1+K_{\mathrm{R}}\mu_{\mathrm{R}}}{1-\mu_{\mathrm{R}}} , K_{\mathrm{R}}\Big \}}.
\end{equation}
Combining the lower bound in \eqref{lower_bound_NDT} with the upper bound in \eqref{eq:upperbound_DoF_1}, we obtain
\begin{equation}
\nonumber
\mathsf{DoF}_{\mathrm{L}}^*(\mu_{\mathrm{T}},\mu_{\mathrm{R}},\alpha) \leq  \alpha \cdot \min \Big \{ \frac{K_{\mathrm{T}}\mu_{\mathrm{T}}+K_{\mathrm{R}}\mu_{\mathrm{R}}}{1-\mu_{\mathrm{R}}}, K_{\mathrm{R} }
\Big \} + \bar{\alpha} \cdot \min \Big \{\frac{1+K_{\mathrm{R}}\mu_{\mathrm{R}}}{1-\mu_{\mathrm{R}}},K_{\mathrm{R}}\Big \}
\end{equation}
which concludes the proof of Lemma \ref{theorem_outerbound}.
\subsection{Proof of the lower bound in \eqref{eq:H_n lowerbound}}
Note that $\overline{H^{(\mathrm{n})}}^{(q)}$ corresponds to the optimum objective value for the optimization problem in (\ref{eq:optim_prob_Hb})
when $\mathrm{s} = \mathrm{n}$.
To bound this, we follow here the footsteps in the proof of  \cite[Lem. 4]{Naderializadeh2017}.
Starting from ${H^{(\mathrm{n})*}}^{\left(\{\mathcal{P}_i\}_{i=1}^{K_{\mathrm{T}}}, \{\mathcal{U}_i\}_{i=1}^{K_{\mathrm{R}}}, \mathbf{d}, q \right)}$ and
by invoking \eqref{eq:condition_subpackets_n}, we obtain
\begin{equation}
{H^{(\mathrm{n})*}}^{\left(\{\mathcal{P}_i\}_{i=1}^{K_{\mathrm{T}}}, \{\mathcal{U}_i\}_{i=1}^{K_{\mathrm{R}}} ,q, \mathbf{d} \right)}
\geq \sum_{i=1}^{K_{\mathrm{T}}} \sum_{j=0}^{K_{\mathrm{R}}} \sum_{r=1}^{K_{\mathrm{R}}} \sum_{\substack{\mathcal{T} \subseteq  [K_{\mathrm{T}}]: \\ |\mathcal{T}|=i}}\sum_{\substack{\mathcal{R} \subseteq  [K_{\mathrm{R}}]: \\ |\mathcal{R}|=j \\ r \notin \mathcal{R}}} \frac{a_{d_r, \mathcal{T}, \mathcal{R}}}{j+1}.
\end{equation}
By averaging over all possible demands, we obtain
\begin{equation}
\begin{split}
\overline{H^{(\mathrm{n})}}^{ \left(\{P\}_{i=1}^{\mathrm{K_{\mathrm{T}}}}, \{\mathcal{U}_i\}_{i=1}^{\mathrm{K_{\mathcal{T}}}}, q \right)} & \geq \frac{1}{\pi(N,K_{\mathrm{R}})}\sum_{i=1}^{K_{\mathrm{T}}} \sum_{j=0}^{K_{\mathrm{R}}} \sum_{r=1}^{K_{\mathrm{R}}} \sum_{\substack{\mathcal{T} \subseteq  [K_{\mathrm{T}}]: \\ |\mathcal{T}|=i}}\sum_{\substack{\mathcal{R} \subseteq  [K_{\mathrm{R}}]: \\ |\mathcal{R}|=j \\ r \notin \mathcal{R}}} {\pi(N-1,K_{\mathrm{R}}-1)} \sum_{n=1}^N\frac{a_{n, \mathcal{T}, \mathcal{R}}}{ j +1} \\
& = \frac{1}{N}\sum_{i=1}^{K_{\mathrm{T}}} \sum_{j=0}^{K_{\mathrm{R}}-1} \frac{w_{i,j}}{ j +1} .
\end{split}
\end{equation}
where, for any $i \in [K_{\mathrm{T}}]$ and $j \in [K_{\mathrm{R}}-1] \cup {0}$, we define
\begin{equation}
w_{i,j}=\sum_{r=1}^{K_{\mathrm{R}}} \sum_{\substack{\mathcal{T} \subseteq  [K_{\mathrm{T}}]: \\ |\mathcal{T}|=i}}\sum_{\substack{\mathcal{R} \subseteq  [K_{\mathrm{R}}]: \\ |\mathcal{R}|=j \\ r \notin \mathcal{R}}} \sum_{n=1}^N {a_{n, \mathcal{T}, \mathcal{R}}}=(K_{\mathrm{R}}-j) \sum_{\substack{\mathcal{T} \subseteq  [K_{\mathrm{T}}]: \\ |\mathcal{T}|=i}}\sum_{\substack{\mathcal{R} \subseteq  [K_{\mathrm{R}}]: \\ |\mathcal{R}|=j}} \sum_{n=1}^N {a_{n, \mathcal{T}, \mathcal{R}}}.
\end{equation}
It is readily seen that
\begin{equation} \label{eq:receiver_size_app}
K_{\mathrm{R}}\mu_{\mathrm{R}}NF  \geq \sum_{r=1}^{K_{\mathrm{R}}} \sum_{i=1}^{K_{\mathrm{T}}} \sum_{j=0}^{K_{\mathrm{R}}} \sum_{\substack{\mathcal{T} \subseteq  [K_{\mathrm{T}}]: \\ |\mathcal{T}|=i  }}\sum_{\substack{\mathcal{R} \subseteq  [K_{\mathrm{R}}]: \\ |\mathcal{R}|=j \\ r \in \mathcal{R} }} \sum_{n=1}^N {a_{n, \mathcal{T}, \mathcal{R}}}
\geq \sum_{i=1}^{K_{\mathrm{T}}} \sum_{j=0}^{K_{\mathrm{R}}-1}  \frac{j}{K_{\mathrm{R}}-j} w_{i,j}
\end{equation}
and
\begin{equation} \label{eq:library_size_app}
NF  =  \sum_{i=1}^{K_{\mathrm{T}}} \sum_{j=0}^{K_{\mathrm{R}}} \sum_{\substack{\mathcal{T} \subseteq  [K_{\mathrm{T}}]: \\ |\mathcal{T}|=i  }}\sum_{\substack{\mathcal{R} \subseteq  [K_{\mathrm{R}}]: \\ |\mathcal{R}|=j }} \sum_{n=1}^N {a_{n, \mathcal{T}, \mathcal{R}}} \\
\geq \sum_{i=1}^{K_{\mathrm{T}}} \sum_{j=0}^{K_{\mathrm{R}}-1}  \frac{1}{K_{\mathrm{R}}-j} w_{i,j}.
\end{equation}
After applying the Cauchy-Schwarz inequality, we obtain
\begin{equation}
\sum_{j=0}^{K_{\mathrm{R}}-1}{w_{i,j}} \leq \sqrt{\sum_{j=0}^{K_{\mathrm{R}}-1}\frac{j+1}{K_{\mathrm{R}}-j} w_{i,j}} \cdot  \sqrt{\sum_{j=0}^{K_{\mathrm{R}}-1}\frac{K_{\mathrm{R}}-j}{j+1} w_{i,j}}.
\end{equation}
Applying the Cauchy-Schwarz inequality again, we obtain
\begin{equation}
\sum_{i=1}^{K_{\mathrm{T}}}\sum_{j=0}^{K_{\mathrm{R}}-1}{w_{i,j}}
\leq \sqrt{ \sum_{i=1}^{K_\mathrm{T}} \sum_{j=0}^{K_{\mathrm{R}}-1}\frac{j+1}{K_{\mathrm{R}}-j} w_{i,j}} \cdot  \sqrt{\sum_{i=1}^{K_\mathrm{T}} \sum_{j=0}^{K_{\mathrm{R}}-1}\frac{K_{\mathrm{R}}-j}{j+1} w_{i,j}}.
\end{equation}
Moreover, from (\ref{eq:receiver_size_app}) and (\ref{eq:library_size_app}) we know that
\begin{equation}
\sum_{i=1}^{K_\mathrm{T}} \sum_{j=0}^{K_{\mathrm{R}}-1}\frac{j+1}{K_{\mathrm{R}}-j} w_{i,j}
\leq K_{\mathrm{R}}\mu_{\mathrm{R}}NF +  NF.
\end{equation}
It follows that
\begin{equation}
\sum_{i=1}^{K_{\mathrm{T}}}\sum_{j=0}^{K_{\mathrm{R}}-1}{w_{i,j}} \leq
\sqrt{ K_{\mathrm{R}}\mu_{\mathrm{R}}NF +  NF } \cdot  \sqrt{\sum_{i=1}^{K_\mathrm{T}} \sum_{j=0}^{K_{\mathrm{R}}-1}\frac{K_{\mathrm{R}}-j}{j+1} w_{i,j}}.
\end{equation}
Furthermore, from \cite{Naderializadeh2017} we know that
\begin{equation}
\sum_{i=1}^{K_{\mathrm{T}}}\sum_{j=0}^{K_{\mathrm{R}}-1}{w_{i,j}}
\geq K_{\mathrm{R}}N(1-\mu_{\mathrm{R}})F.
\end{equation}
Hence, it follows that
\begin{equation}
\begin{split}
\overline{H^{(\mathrm{n})}}^ {\left(\{\mathcal{P}_i\}_{i=1}^{\mathrm{K_{\mathrm{T}}}}, \{\mathcal{	U}_i\}_{i=1}^{\mathrm{K_{\mathcal{T}}}} ,   q\right)}  & \geq  \frac{1}{N}\sum_{i=1}^{K_{\mathrm{T}}} \sum_{j=0}^{K_{\mathrm{R}}-1} \frac{w_{i,j}}{ j +1} \geq \frac{1}{K_{\mathrm{R}}N}\sum_{i=1}^{K_{\mathrm{T}}} \sum_{j=0}^{K_{\mathrm{R}}-1} \frac{K_{\mathrm{R}}-j }{j +1} w_{i,j} \\
& \geq \frac{1}{K_{\mathrm{R}}N} \cdot \frac{1}{ K_{\mathrm{R}}\mu_{\mathrm{R}}NF +NF} \left( \sum_{i=1}^{K_{\mathrm{T}}}\sum_{j=0}^{K_{\mathrm{R}}-1}{w_{i,j}} \right)^2 \\
& \geq \frac{K_{\mathrm{R}}NF \left( 1 - \mu_{\mathrm{R}} \right)^2}{ K_{\mathrm{R}}\mu_{\mathrm{R}}N + N}= \frac{K_{\mathrm{R}}F \left( 1 - \mu_{\mathrm{R}} \right)^2}{1+K_{\mathrm{R}}\mu_{\mathrm{R}}}
\end{split}
\end{equation}
for any caching realization $\big(\{\mathcal{P}_i\}_{i=1}^{\mathrm{K_{\mathrm{T}}}}, \{\mathcal{	U}_i\}_{i=1}^{\mathrm{K_{\mathcal{T}}}} \big)$, which concludes the proof.

\section{Proof of Lemma \ref{lemma:poly_con_decen}}
\label{sec:proof_poly_property}
Here we present a proof of the inequality in \eqref{eq:poly_con_decen}.
We start with the following instrumental lemma.
\begin{lemma} \label{lemma_appendix_D}
Consider a polynomial $\phi(\zeta)=\sum_{m=0}^d {a_m \zeta^m}$ for which there exists and integer $N$ in $[-1,d]_{\mathbb{Z}}$
such that the coefficients of $\phi(\zeta)$ satisfy the following condition
\begin{equation}
\label{eq:cond_lemma_appendix}
a_m  \geq 0, \ m < N
\quad \text{and} \quad
a_m  > 0, \ m = N
\quad \text{and} \quad
a_m  \leq 0, \ m > N
\end{equation}
where the case $N = -1$ implies $a_0,\ldots,a_{d} \leq 0$.
The polynomial $\phi(\zeta)$ is quasiconcave on $\zeta \in [0,\infty)$.
\end{lemma}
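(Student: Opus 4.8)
The plan is to reduce quasiconcavity to a single-crossing (unimodality) property of $\phi$ on $[0,\infty)$: it suffices to exhibit a threshold $\zeta_0 \in [0,\infty]$ such that $\phi'(\zeta) \geq 0$ on $(0,\zeta_0)$ and $\phi'(\zeta) \leq 0$ on $(\zeta_0,\infty)$, since a function that is nondecreasing and then nonincreasing on an interval has convex (i.e. interval) superlevel sets, which is precisely quasiconcavity. Continuity of the polynomial $\phi$ will let me pass from the sign of $\phi'$ on open intervals to monotonicity on the corresponding closed intervals.

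First I would factor the derivative. For $\zeta>0$ I write $\phi'(\zeta)=\sum_{m=1}^d m a_m \zeta^{m-1}=\zeta^{N-1}h(\zeta)$, where $h(\zeta)\triangleq\sum_{m=1}^d m a_m \zeta^{m-N}$. Since $\zeta^{N-1}>0$ for every $\zeta>0$ regardless of the sign of the exponent $N-1$, the sign of $\phi'$ coincides with the sign of $h$ on $(0,\infty)$, so it is enough to analyze $h$.

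The key step is to show that $h$ is nonincreasing on $(0,\infty)$. Differentiating term by term, the $m$-th term $m a_m \zeta^{m-N}$ has derivative $m a_m (m-N)\zeta^{m-N-1}$, whose sign on $\zeta>0$ is that of $m a_m (m-N)$. The hypothesis \eqref{eq:cond_lemma_appendix} makes this product nonpositive in every regime: for $m<N$ we have $a_m\geq 0$ and $m-N<0$; for $m=N$ the factor $m-N$ vanishes; and for $m>N$ we have $a_m\leq 0$ and $m-N>0$. Hence $h'(\zeta)\leq 0$ on $(0,\infty)$, so $h$ is nonincreasing. This is the crux of the argument, and it is worth noting that it also covers the edge cases $N=-1$ and $N=0$ uniformly, since there no index $m=N$ falls in $[1,d]$ and every term is simply nonpositive.

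Finally I would assemble the pieces. Because $h$ is nonincreasing, the set $\{\zeta>0:\,h(\zeta)\geq 0\}$ is an initial interval $(0,\zeta_0)$ (possibly empty or all of $(0,\infty)$), so $\phi'\geq 0$ there and $\phi'\leq 0$ on $(\zeta_0,\infty)$; continuity of $\phi$ then gives that $\phi$ is nondecreasing on $[0,\zeta_0]$ and nonincreasing on $[\zeta_0,\infty)$, which is the required unimodality. I expect the only genuine obstacle to be recognizing the factorization $\phi'(\zeta)=\zeta^{N-1}h(\zeta)$: it is exactly what converts the coefficient sign pattern \eqref{eq:cond_lemma_appendix} into a clean term-by-term monotonicity statement, and hence into a single sign change of $\phi'$, thereby sidestepping the more delicate appeal to Descartes' rule of signs that one might otherwise be tempted to use.
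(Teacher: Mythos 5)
Your proof is correct, but it takes a genuinely different route from the paper. The paper proceeds by induction on the sign-change index $N$: the base cases $N\in\{-1,0,1\}$ are handled by noting $\phi''$ has all nonpositive coefficients (so $\phi$ is concave), and the inductive step observes that $\phi'$ satisfies the coefficient condition with index $N-1$, hence is quasiconcave by hypothesis, which together with $\phi'(0)\geq 0$ forces $\phi'$ to be nonnegative on an initial interval and nonpositive thereafter, yielding unimodality of $\phi$. You instead obtain the single sign change of $\phi'$ in one shot via the factorization $\phi'(\zeta)=\zeta^{N-1}h(\zeta)$ with $h(\zeta)=\sum_{m=1}^d m\,a_m\,\zeta^{m-N}$: the term-by-term derivative $m\,a_m(m-N)\zeta^{m-N-1}$ is nonpositive in every regime of the sign pattern, so $h$ is nonincreasing on $(0,\infty)$ and $\phi'$ crosses zero at most once. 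This is essentially the classical one-sign-change (Descartes-type) argument made quantitative, and it buys several things: it is non-inductive and shorter; it treats the edge cases $N=-1$ and $N=0$ uniformly rather than as separate base cases; and it reveals that the strict inequality $a_N>0$ in the hypothesis is not actually needed ($a_N\geq 0$ suffices), so your version proves a marginally stronger statement. The cost is leaving the class of ordinary polynomials, since $h$ may contain negative powers of $\zeta$ when $N>1$, whereas the paper's induction stays entirely within polynomial calculus; both are rigorous, and your passage from the sign pattern of $\phi'$ to quasiconcavity via convexity of superlevel sets, with continuity handling the endpoint $\zeta=0$, is sound.
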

\begin{proof}
First, we note that for the cases: $N=-1$ (i.e. when $a_m \leq 0$ for all $m$), $N=0$ and $N=1$,
the second derivative of $\phi(\zeta)$ is a polynomial with all coefficients not greater than zero.
Therefore, $\phi(\zeta)$ is concave, and hence quasiconcave, on $ \zeta \in [0,\infty)$.
We proceed by induction.
In particular, assume that the quasiconcavity  hypothesis holds for all polynomials the satisfy the condition
in \eqref{eq:cond_lemma_appendix} for integer $N=n$, where $n \geq 1$.
Now consider a polynomial $\phi(\zeta)$ that satisfies the condition  in \eqref{eq:cond_lemma_appendix} for $N=n+1$.
It is readily seen that the first derivative of $\phi(\zeta)$, denoted by
$\phi'(\zeta)$, is a polynomial which satisfies the condition in \eqref{eq:cond_lemma_appendix} for $N=n$.
Hence, $\phi'(\zeta)$ is quasiconcave by the induction hypothesis.
Moreover, as $n \geq 1$, it follows from  \eqref{eq:cond_lemma_appendix} that  $\phi'(0) \geq 0$.
It can be verified that $\phi'(0) \geq 0$ combined with the quasiconcavity of $\phi'(\zeta)$ guarantee that: either $\phi'(\zeta)$
is non-negative over $[0, \infty)$, or there exists $\zeta' \in [0, \infty)$ such that $\phi'(\zeta) \geq 0$ over the interval $[0,\zeta']$ and $\phi'(\zeta) \leq 0 $ over the interval $[\zeta', \infty)$.
It follows that $\phi(\zeta)$ is eithrer non-decreasing over $[0, \infty)$, or non-decreasing over $[0,\zeta']$ and non-increasing over $[\zeta', \infty)$.
In both cases, $\phi(\zeta)$ is quasiconcave. This concludes the proof of Lemma \ref{lemma_appendix_D}.
\end{proof}
Next, we show that the coefficients of the polynomial $p(\zeta)$ of interest satisfy the
conditions in Lemma \ref{lemma_appendix_D}.
As this shows that $p(\zeta)$  is quasiconcave, the inequality in  \eqref{eq:poly_con_decen} directly follows by definition.
The remainder of this appendix is dedicated to showing that $p(\zeta)$  is an instance of Lemma \ref{lemma_appendix_D}.

The key step of this proof is to show that the sequence $\Big\{ \frac{c_m}{\binom{K_{\mathrm{R}}-1}{m-1}} \Big\}_{m = 1}^{K_{\mathrm{R}}-1}$
is non-increasing.
Supposing that this holds true, then this sequence would satisfy the condition of Lemma \ref{lemma_appendix_D}, applied only to the indices $ m \in [1,  K_{\mathrm{R}}-1 ]_{\mathbb{Z}}$.
Since the sign of $\frac{c_m}{\binom{K_{\mathrm{R}}-1}{m-1}} $
is preserved by $c_m$, then $\{c_m\}_{m = 1}^{K_{\mathrm{R}}-1}$ also satisfies
the condition of Lemma \ref{lemma_appendix_D} over $ m \in [1,  K_{\mathrm{R}}-1 ]_{\mathbb{Z}}$.
Combining this with  $c_0 = 0$ and $c_{K_{\mathrm{R}}} \leq 0$, it follows that
$\{c_m\}_{m = 0}^{K_{\mathrm{R}}}$ satisfies the condition of Lemma \ref{lemma_appendix_D}, which in turn concludes the proof.
Therefore, our problem reduces to showing that $\frac{c_m}{\binom{K_{\mathrm{R}}-1}{m-1}} $ in a non-increasing over
$ m \in [1,  K_{\mathrm{R}}-1 ]_{\mathbb{Z}}$.

First, it is readily seen that $c_m$ can be written as
 \begin{equation}
 \nonumber
 c_m = \binom{K_{\mathrm{R}}-1}{m-1}  \left[ \left( \frac{K_{\mathrm{R}}+1}{m} - \frac{K_{\mathrm{R}}+r}{\min \{ r+m -1 ,  K_{\mathrm{R}} \}} \right)  + \frac{K_{\mathrm{R}}-m}{m} \left( \frac{1}{m+1} - \frac{r}{\min \{ r+m ,  K_{\mathrm{R}} \}} \right) \right].
 \end{equation}
For briefness, we denote the coefficient $\frac{c_m}{\binom{K_{\mathrm{R}}-1}{m-1}}$ as $c'_m$.
Hence, $c'_m$ is given by
\begin{equation}
\nonumber
c'_m= \left( \frac{K_{\mathrm{R}}+1}{m} - \frac{K_{\mathrm{R}}+r}{\min \{ r+m -1 ,  K_{\mathrm{R}} \}} \right)  + \frac{K_{\mathrm{R}}-m}{m} \left( \frac{1}{m+1} - \frac{r}{\min \{ r+m ,  K_{\mathrm{R}} \}} \right).
\end{equation}
Next, let us define the integer $\tilde{r} \in [1,K_{\mathrm{R}}-1]_{\mathbb{Z}}$ as
$\tilde{r}\triangleq \lfloor r \rfloor = r - \epsilon$, where $\epsilon \in [0,1)$.
Using this definition, it can be shown that $c'_m$, $ m \in [1,  K_{\mathrm{R}}-1 ]_{\mathbb{Z}}$,  may be expressed as:
\begin{equation}
\nonumber
c'_m =
\begin{cases}
d_m \triangleq
\left( \frac{K_{\mathrm{R}}+1}{m} - \frac{K_{\mathrm{R}}+r}{  r+m -1 } \right)  + \frac{K_{\mathrm{R}}-m}{m} \left( \frac{1}{m+1} - \frac{r}{ r+m} \right),
\  m \in [1,K_{\mathrm{R}}-\tilde{r} -1 ]_{\mathbb{Z}} \\
\left( \frac{K_{\mathrm{R}}+1}{K_{\mathrm{R}}-\tilde{r}} - \frac{K_{\mathrm{R}}+\tilde{r} + \epsilon}{ K_{\mathrm{R}}+\epsilon -1 } \right)  + \frac{\tilde{r}}{K_{\mathrm{R}}-\tilde{r}}  \left( \frac{1}{K_{\mathrm{R}}-\tilde{r}+1} - \frac{\tilde{r} + \epsilon}{K_{\mathrm{R}}} \right) , \ m = K_{\mathrm{R}}-\tilde{r}\\
e_m \triangleq \left( \frac{K_{\mathrm{R}}+1}{m} - \frac{K_{\mathrm{R}}+r}{ K_{\mathrm{R}} } \right)  + \frac{K_{\mathrm{R}}-m}{m}\left( \frac{1}{m+1} - \frac{r}{K_{\mathrm{R}}} \right), \  m \in [K_{\mathrm{R}}-\tilde{r} +1 ,K_{\mathrm{R}}-1]_{\mathbb{Z}}.
\end{cases}
\end{equation}
Showing that $c'_m$ is non-increasing in $m$ is carried out through the  two following steps:
\begin{enumerate}
	\item We show that $d_m$ and $e_m$ are both non-increasing sequences in $m$ .
	This guarantees that  $c'_m$ is non-increasing over both the intervals $[1,K_{\mathrm{R}}-\tilde{r}-1 ]_{\mathbb{Z}}$ and  $[K_{\mathrm{R}}-\tilde{r}+1,K_{\mathrm{R}}-1]_{\mathbb{Z}}$.
	\item We show that $c'_{K_{\mathrm{R}}-\tilde{r}} \leq d_{K_{\mathrm{R}}-\tilde{r}-1}  $ and
	$c'_{K_{\mathrm{R}}-\tilde{r}} \geq e_{K_{\mathrm{R}}-\tilde{r}+1}  $. This guarantees that
	$c'_m$ is non-increasing over the entire interval  $[1,K_{\mathrm{R}}-1]_{\mathbb{Z}}$.
\end{enumerate}

\emph{Proof of Point 1):} First, let us consider $d_m$.
This can be rewritten as:
\begin{equation}
\label{eq:dm}
d_m = \frac{(K_{\mathrm{R}}-m+1)(r-1)}{m(m+r-1)} + \frac{(K_{\mathrm{R}}-m)(1-r)}{(m+1)(m+r)}.
\end{equation}
For $r=1$, we have $d_m = 0$ for all $m \in [1, K_{\mathrm{R}}-1]_{\mathbb{Z}}$.
Hence, we consider $r \geq 1$.
From \eqref{eq:dm}, and after some rearrangements, the inequality $d_m \geq d_{m+1}$ which we wish to prove
is equivalently written as
\begin{equation}
\label{eq:dm_2}
\frac{K_{\mathrm{R}}-m+1}{m(m+r-1)} - \frac{K_{\mathrm{R}}-m}{(m+1)(m+r)} \geq
\frac{K_{\mathrm{R}}-m}{(m+1)(m+r)} - \frac{K_{\mathrm{R}}-m-1}{(m+2)(m+r+1)}.
\end{equation}
Using the following notation $A=K_{\mathrm{R}}-m$, $B=m+1$ and $C=m+r$, \eqref{eq:dm_2} is rewritten as
\begin{equation}
\label{eq:dm_3}
\frac{A+1}{(B-1)(C-1)} - \frac{A}{BC} \geq
\frac{A}{BC} - \frac{A-1}{(B+1)(C+1)}.
\end{equation}
After further rearranging and simplifying, \eqref{eq:dm_3} becomes
\begin{equation} \label{eq:dm_4}
ABC+B^2C+BC^2 \geq A-AB^2 - AC^2.
\end{equation}
Since $A \geq 1, B \geq 2$ and $C \geq 2$, \eqref{eq:dm_4} always holds
and hence $d_m$ is non-increasing in $m$.
	
Next, we consider $e_m$. This can be rewritten as:
\begin{equation}
\label{eq:em}
e_m = \frac{K_{\mathrm{R}}+1}{m} + \frac{K_{\mathrm{R}}}{m(m+1)} - \frac{1}{m+1} -\frac{r}{m} -1
\end{equation}
From \eqref{eq:em}, it follows that $e_m \geq e_{m+1}$ is implied by
\begin{equation}
\label{eq:em2}
\frac{K_{\mathrm{R}}+1}{m} + \frac{K_{\mathrm{R}}}{m(m+1)} - \frac{1}{m+1} -\frac{r}{m} \geq  \frac{K_{\mathrm{R}}+1}{m+1} + \frac{K_{\mathrm{R}}}{(m+1)(m+2)} - \frac{1}{m+2} -\frac{r}{m+1}.
\end{equation}
After some rearrangements, the inequality in \eqref{eq:em2} becomes
$( K_{\mathrm{R}} + 1 -r)(m+2) + 2K_{\mathrm{R}} - m \geq 0$,
which holds as $m \geq 1$ and $K_{\mathrm{R}} \geq r$.
Hence, $e_m$ is a non-increasing in $m$ and this part is complete.

\emph{Proof of Point 2):}
In order to show that $c'_{K_{\mathrm{R}}-\tilde{r}} \leq d_{K_{\mathrm{R}}-\tilde{r}-1} $,
we only need to observe the following:
\begin{equation}
\nonumber
\begin{split}
	c'_{K_{\mathrm{R}}-\tilde{r}} & =\left( \frac{K_{\mathrm{R}}+1}{K_{\mathrm{R}}-\tilde{r}} - \frac{K_{\mathrm{R}}+\tilde{r} + \epsilon}{ K_{\mathrm{R}}+\epsilon -1 } \right)  + \frac{\tilde{r}}{K_{\mathrm{R}}-\tilde{r}}  \left( \frac{1}{K_{\mathrm{R}}-\tilde{r}+1} - \frac{\tilde{r} + \epsilon}{K_{\mathrm{R}}} \right)  \\
	& \leq  \left( \frac{K_{\mathrm{R}}+1}{K_{\mathrm{R}}-\tilde{r}} - \frac{K_{\mathrm{R}}+\tilde{r} + \epsilon}{ K_{\mathrm{R}}+\epsilon -1 } \right)  + \frac{\tilde{r}}{K_{\mathrm{R}}-\tilde{r}}  \left( \frac{1}{K_{\mathrm{R}}-\tilde{r}+1} - \frac{\tilde{r} + \epsilon}{K_{\mathrm{R}}+\epsilon} \right) \\
&  = d_{K_{\mathrm{R}}-\tilde{r}} \leq  d_{K_{\mathrm{R}}-\tilde{r}-1}.
	\end{split}
	\end{equation}

Next, we focus on showing that $c'_{K_{\mathrm{R}}-\tilde{r}} \geq e_{K_{\mathrm{R}}-\tilde{r}+1} $.
We observe that $c'_{K_{\mathrm{R}}-\tilde{r}}$ can be expressed as:
\begin{equation} \label{eq:c_kr_ri}
	\begin{split}
	c'_{K_{\mathrm{R}}-\tilde{r}} & =\left( \frac{K_{\mathrm{R}}+1}{K_{\mathrm{R}}-\tilde{r}} - \frac{K_{\mathrm{R}}+\tilde{r} + \epsilon}{ K_{\mathrm{R}}+\epsilon -1 } \right)  + \frac{\tilde{r}}{K_{\mathrm{R}}-\tilde{r}} \left( \frac{1}{K_{\mathrm{R}}-\tilde{r}+1} - \frac{\tilde{r} + \epsilon}{K_{\mathrm{R}}} \right)  \\
	& =  \left( \frac{\tilde{r}+1}{K_{\mathrm{R}}-\tilde{r}} - \frac{\tilde{r} +1}{ K_{\mathrm{R}}+\epsilon -1 } \right)  + \frac{\tilde{r}}{K_{\mathrm{R}}-\tilde{r}} \left( \frac{1}{K_{\mathrm{R}}-\tilde{r}+1} - \frac{\tilde{r} + \epsilon}{K_{\mathrm{R}}} \right).
	\end{split}
\end{equation}
On the other hand, $e_{K_{\mathrm{R}}-\tilde{r}+1}$ is given by:
	\begin{equation} \label{eq:e_kr_ri}
	\begin{split}
	e_{K_{\mathrm{R}}-\tilde{r}+1} & =\left( \frac{K_{\mathrm{R}}+1}{K_{\mathrm{R}}-\tilde{r} + 1} - \frac{K_{\mathrm{R}}+\tilde{r} + \epsilon}{ K_{\mathrm{R}}} \right)  + \frac{\tilde{r}-1}{K_{\mathrm{R}}-\tilde{r}+1}  \left( \frac{1}{K_{\mathrm{R}}-\tilde{r}+2} - \frac{\tilde{r} + \epsilon}{K_{\mathrm{R}}} \right)  \\
	& =  \left( \frac{\tilde{r}}{K_{\mathrm{R}}-\tilde{r}+1} - \frac{\tilde{r} +\epsilon}{ K_{\mathrm{R}}} \right)  + \frac{\tilde{r}-1}{K_{\mathrm{R}}-\tilde{r}+1}  \left( \frac{1}{K_{\mathrm{R}}-\tilde{r}+2} - \frac{\tilde{r} + \epsilon}{K_{\mathrm{R}}} \right).
	\end{split}
	\end{equation}
By taking the difference of \eqref{eq:c_kr_ri} and \eqref{eq:e_kr_ri}, we obtain
\begin{multline}
\label{eq:ineq_c_prime_minus_e}
c'_{K_{\mathrm{R}}-\tilde{r}} - e_{K_{\mathrm{R}}-\tilde{r}+1}=\frac{K_{\mathrm{R}}+1-\tilde{r}-\epsilon}{(K_{\mathrm{R}}-\tilde{r})(K_{\mathrm{R}}-\tilde{r}+1)} + \frac{(\epsilon-1)(K_{\mathrm{R}}+\tilde{r}+\epsilon)}{K_{\mathrm{R}}(K_{\mathrm{R}}+\epsilon-1)}  \\ + \frac{K_{\mathrm{R}}+\tilde{r}}{(K_{\mathrm{R}}-\tilde{r})(K_{\mathrm{R}}-\tilde{r}+1)(K_{\mathrm{R}}-\tilde{r}+2)}.
\end{multline}
After rearranging the terms in \eqref{eq:ineq_c_prime_minus_e}, it follows that $c'_{K_{\mathrm{R}}-\tilde{r}} - e_{K_{\mathrm{R}}-\tilde{r}+1} \geq 0$ is implied by the inequality
\begin{multline}
\label{eq:ineq_parabolas}
\underbrace{K_{\mathrm{R}} (K_{\mathrm{R}} + \epsilon -1) (K_{\mathrm{R}} +1-\tilde{r}-\epsilon) (K_{\mathrm{R}} -\tilde{r}+2)}_{l_1(\epsilon)}
+ \underbrace{K_{\mathrm{R}} (K_{\mathrm{R}} + \epsilon -1) (K_{\mathrm{R}} + \tilde{r} )}_{l_2(\epsilon)} \\
+  \underbrace{(\epsilon - 1 )(K_{\mathrm{R}} +  \epsilon +\tilde{r}) (K_{\mathrm{R}} - \tilde{r} ) (K_{\mathrm{R}} - \tilde{r} + 1) (K_{\mathrm{R}} - \tilde{r} + 2)}_{l_3(\epsilon)} \geq 0.
\end{multline}
We denote the left-hand side of (\ref{eq:ineq_parabolas}) by $l(\epsilon) = l_1(\epsilon) +l_2(\epsilon) + l_3(\epsilon) $.
It is readily seen that $l_1(\epsilon)$ and $l_3(\epsilon)$ are second degree polynomials in the variable $\epsilon$
(i.e.  parabolas).
We consider the the three functions separately in order to derive a lower bound on $l(\epsilon)$.
\begin{itemize}
\item $l_1(\epsilon)$: It can be easily verified that $l_1(\epsilon)$ is concave with a maximum value at $\epsilon^*=\frac{2-\tilde{r}}{2}$.
Hence, $\epsilon^* \leq 0$ for $\tilde{r} \geq 2$ and $\epsilon^* = 1/2$ for $\tilde{r} = 1$.
As a concave parabola is decreasing for $\epsilon \geq \epsilon^*$ and symmetric with respect to the maximum, it follows that for $\epsilon \in [0,1)$, we have
\begin{equation} \label{eq:ineq_l1}
l_1(\epsilon) \geq l_1(1) = K_{\mathrm{R}}^2 (K_{\mathrm{R}} - \tilde{r}) (K_{\mathrm{R}} - \tilde{r} + 2).
\end{equation}
\item $l_2(\epsilon)$:
It is readily seen that for $\epsilon \in [0,1)$, the following holds
\begin{equation} \label{eq:ineq_l2}
l_2(\epsilon) \geq l_2(0) =  K_{\mathrm{R}} (K_{\mathrm{R}} - 1) (K_{\mathrm{R}} + \tilde{r}).
\end{equation}
\item $l_3(\epsilon)$: This is a convex with a minimum value at $\epsilon^*=\frac{-K_{\mathrm{R}}-\tilde{r}+1}{2} < 0$.
Hence, for $\epsilon \in [0,1)$, we have
\begin{equation} \label{eq:ineq_l3}
l_3(\epsilon) \geq l_3(0) = - (K_{\mathrm{R}} + \tilde{r}) (K_{\mathrm{R}} - \tilde{r}) (K_{\mathrm{R}} - \tilde{r} + 1) (K_{\mathrm{R}} - \tilde{r} + 2).
\end{equation}
\end{itemize}
By summing over the  lower bounds in \eqref{eq:ineq_l1}, \eqref{eq:ineq_l2} and \eqref{eq:ineq_l3},
it follows that for $\epsilon \in [0,1)$, we have:
\begin{equation} \label{eq:inequality_lb}
l(\epsilon) \geq  K_{\mathrm{R}} (K_{\mathrm{R}} - 1) (K_{\mathrm{R}} + \tilde{r})  + (K_{\mathrm{R}} - \tilde{r}) (K_{\mathrm{R}} - \tilde{r} + 2) (\tilde{r}^2 - \tilde{r} - K_{\mathrm{R}}).
\end{equation}
Next, we express the right-hand side of the \eqref{eq:inequality_lb} as a function of  $K_{\mathrm{R}}$:
\begin{equation}
\nonumber
g(K_{\mathrm{R}}) = K_{\mathrm{R}} (K_{\mathrm{R}} - 1) (K_{\mathrm{R}} + \tilde{r})  + (K_{\mathrm{R}} - \tilde{r}) (K_{\mathrm{R}} - \tilde{r} + 2) (\tilde{r}^2 - \tilde{r} - K_{\mathrm{R}}) = a K_{\mathrm{R}}^2+bK_{\mathrm{R}}+c
\end{equation}
where $a=\tilde{r}^2 +2 \tilde{r} -3$ and $b = -\tilde{r} (2\tilde{r}^2 - 3 \tilde{r} +1)$.
Finally, to show that $l(\epsilon)  \geq 0$, it is sufficient to show $g(K_{\mathrm{R}}) \geq 0$ for all $K_{\mathrm{R}} \geq \tilde{r}$.
To this end, we observe that $g(K_{\mathrm{R}}) = 0$ for $\tilde{r}=1$, while $g(K_{\mathrm{R}})$ is a convex parabola with a minimum value at
$\frac{\tilde{r} (\tilde{r} - 1/2)}{\tilde{r} + 3} \leq \tilde{r}$ for $\tilde{r} > 1$.
In latter case, $g(K_{\mathrm{R}})$ is increasing for $K_{\mathrm{R}} \geq \tilde{r}$.
As $g(\tilde{r}) \geq 0$, it follows that $g(K_{\mathrm{R}}) \geq 0$ for all $K_{\mathrm{R}} \geq \tilde{r}$. This concludes the proof.
\section*{Acknowledgment}
The authors are grateful to Prof. Iosif Pinelis for proving Lemma \ref{lemma:inequality}.
\bibliographystyle{IEEEtran}
\bibliography{References}

\end{document}